\documentclass[12pt,english]{article}
\usepackage[T1]{fontenc}
\usepackage[utf8]{inputenc}
\usepackage{geometry}
\geometry{verbose,tmargin=.8in,bmargin=.8in,lmargin=.9in,rmargin=.9in}
\synctex=-1
\usepackage{color}
\usepackage{babel}
\usepackage{verbatim}
\usepackage{url}
\usepackage{amsmath}
\usepackage{amsthm}
\usepackage{amssymb}
\usepackage{graphicx}
\usepackage{setspace}
\usepackage[authoryear]{natbib}
\usepackage{subcaption}
\usepackage[export]{adjustbox}
\onehalfspacing
\usepackage[unicode=true,pdfusetitle,
 bookmarks=true,bookmarksnumbered=false,bookmarksopen=false,
 breaklinks=false,pdfborder={0 0 1},backref=false,colorlinks=false]
 {hyperref}
\hypersetup{
 pdfborderstyle=,colorlinks,citecolor=blue,linkcolor=black,urlcolor=black}

\makeatletter


\theoremstyle{remark}
\newtheorem{rem}{\protect\remarkname}
\theoremstyle{plain}
\newtheorem{prop}{\protect\propositionname}
\theoremstyle{plain}
\newtheorem{assumption}{\protect\assumptionname}
\theoremstyle{plain}
\newtheorem{thm}{\protect\theoremname}
\theoremstyle{plain}
\newtheorem{lem}{\protect\lemmaname}

\usepackage{babel}

\usepackage{appendix}
\usepackage{multirow}
\usepackage{longtable}
\usepackage{rotating}
\usepackage{makecell}
\usepackage{threeparttable}
\usepackage[table]{xcolor}

\renewcommand{\hat}{\widehat}
\renewcommand{\tilde}{\widetilde}


\providecommand{\assumptionname}{Assumption}
\providecommand{\lemmaname}{Lemma}
\providecommand{\remarkname}{Remark}
\providecommand{\theoremname}{Theorem}

\makeatother

\providecommand{\assumptionname}{Assumption}
\providecommand{\lemmaname}{Lemma}
\providecommand{\propositionname}{Proposition}
\providecommand{\remarkname}{Remark}
\providecommand{\theoremname}{Theorem}

\usepackage{xr}
\makeatletter

\newcommand*{\addFileDependency}[1]{
\typeout{(#1)}
%
%
\@addtofilelist{#1}
%
\IfFileExists{#1}{}{\typeout{No file #1.}}
}\makeatother

\begin{document}
\title{Nickell Bias in Panel Local Projection:\\
 Financial Crises Are Worse Than You Think$^{\dag}$}
\author{Ziwei Mei$^a$, Liugang Sheng$^b$, Zhentao Shi$^b$ \vspace{0.6em}
\\   
 $^a$University of Macau \\  
$^b$The Chinese University of Hong Kong}
\date{\vspace{-5ex}}
\maketitle

\singlespacing
\begin{abstract}

Panel local projection (LP) with fixed-effects (FE) is widely adopted for evaluating the economic consequences of financial crises across countries. This paper highlights a fundamental methodological issue: the presence of the Nickell bias in the panel FE estimator due to inherent dynamic structures of predictive specifications, even if the regressors have no lagged dependent variables. The Nickell bias invalidates the standard inferential procedure based on the $t$-statistic. We propose a split-panel jackknife (SPJ) estimator as a simple, easy-to-implement, and yet effective solution to eliminate the bias and restore valid statistical inference. We revisit four influential empirical studies on the impact of financial crises, and find that the FE method underestimates the economic losses of financial crises relative to the SPJ estimates. Replication files are available at \url{https://metricshilab.github.io/panel-lp-replication/}, with links to R and Stata packages.

\end{abstract}
\vspace{0.8cm}

\noindent \textbf{Key words}: Local projection, Nickell bias, impulse
response, split-panel jackknife, macro-finance

\noindent \textbf{JEL code}: C33, C53, E44, F37, F47

\vspace{1cm}

{\small{}$^{\dag}$ Ziwei Mei, }\texttt{\small{}ziweimei@um.edu.mo}{\small{}.
Liugang Sheng, }\texttt{\small{}lsheng@cuhk.edu.hk}{\small{}. Zhentao Shi (Corresponding
author),} \texttt{\small{}zhentao.shi@cuhk.edu.hk}{\small{},
9/F Esther Lee Building, Department of Economics, The Chinese University of Hong Kong, Shatin, New Territories, Hong Kong SAR, China. 
Sheng thanks the financial support from the Research Grants Council of the Hong Kong SAR (Project No.~14501821). Shi acknowledges the partial financial support from the National Natural Science Foundation of China (Project No.~72425007). We thank Qiyu Dai, Shu Shen, and Ji Pan for their excellent research assistance. We are also grateful to Geert Dhaene, Raffaella Giacomini, Frank Kleibergen, Byoungchan Lee, {\`O}scar Jord{\`a}, Xun Lu, Ryo Okui, and Liangjun Su for their helpful comments.} 

\newpage

\onehalfspacing 
\section{Introduction\label{sec:Introduction}}
Financial crises inflict lasting damage on economies, causing output contractions, persistent unemployment, and economic scarring. Ben
S.~Bernanke, the 2022 Nobel laureate, demonstrated how bank runs played a decisive role in the Great Depression of the 1930s, the worst economic crisis in modern history \citep{bernanke1983nonmonetary}. The Great Recession following the 2008 financial crisis reignited interest in understanding the relationship between financial shocks and economic recessions.  Numerous studies have since demonstrated that financial crises generally produce deeper and more persistent economic losses compared to typical recessions
\citep{reinhart2009aftermath,laeven2013systemic,schularick2012credit,jorda2013credit}. 

A central question in this burgeoning literature is to quantify the magnitude and persistence of economic losses caused by financial crises. Accurate assessment of crisis-induced output losses is crucial not only for resolving theoretical debates about crisis persistence, transmission channels, and the speed of economic recovery, but also for guiding policymakers in crafting effective macroeconomic responses, such as monetary easing, fiscal stimulus, and macroprudential regulation, to mitigate economic damage and promote recovery \citep{sufi2021financial}. As economists employ various measures of financial crises and collect panels of countries with financial shocks,  the panel version of \cite{jorda2005estimation}'s local projection (LP) has become widely used by recent empirical analyses of crisis-related economic losses for its simplicity, robustness, and flexible specifications. Moreover, virtually all studies applying panel LP techniques employ fixed-effect (FE) estimators to control for unobserved heterogeneity, cementing FE as the \emph{de facto}  approach in these empirical studies. For example, by applying the FE to various cross-country panel data, \citet{romer2017new}, \citet{baron2021banking}, and
\citet{mian2017household} show that financial distress, banking crises,
and household debts lead to severe economic contractions, respectively.\footnote{Many other studies, including \citet{jorda2013credit}, \citet{jorda2015leveraged},
\citet{jorda2016great}, \citet{zeev2017capital}, \citet{bhattarai2021local}, as well as the recent survey on financial crises conducted by \citet{sufi2021financial},
also use FE to examine the impacts of financial
crises on economic contraction. }

Despite its widespread adoption, the panel LP model is subject to a fundamental limitation: the FE estimator in panel LP inherently incurs the Nickell bias \citep{nickell1981biases}, even in the absence of lagged dependent variables. To clarify this issue, consider the following illustrative $h$-period-ahead panel LP model:
\begin{align}
y_{i,t+h} & =\mu_{i}^{(h)y}+\beta^{(h)}x_{i,t}+e_{i,t+h}^{(h)},\ \ \text{for }t=1,2,\ldots,T-h,\text{ and }h=0,1,\ldots,H\label{eq:modelh},
\end{align}
where $x_{i,t}$ is a measure of financial shock, $y_{i,t+h}$ is the outcome variable, e.g., the logarithm of real GDP, $e_{i,t+h}^{(h)}$ is the
error term uncorrelated with $x_{i,t}$, and $\mu_{i}^{(h)y}$ is the
individual-specific heterogeneity, or the fixed effect. The IRF $(\beta^{(h)})_{h=0}^{H}$
is of central interest for understanding the dynamic impact of financial crises on economic activities.  Despite the absence of the lagged dependent variable in (\ref{eq:modelh}),
FE for the panel LP model is asymptotically biased
when the number of cross-sectional units $N$ and the time periods
$T$ are both large; to be precise, the leading case is $(N,T)\to\infty$
and $N/T\to c$ for some constant $c\in(0,\infty)$.
It is a type of Nickell bias in the FE estimator for multiple-equation panel vector autoregression (VAR)---the full dynamic model behind
the single-equation LP. The Nickell bias has important implications for inference. Standard inference procedures, such as $t$-statistics with critical values based on the standard normal distribution, become unreliable, leading to distorted test sizes, and thus potentially misleading empirical findings. To the best of our knowledge, no systematic study has examined how Nickell bias in panel LP affects estimates of the economic losses associated with financial crises, both qualitatively and quantitatively.

To facilitate the theoretical analysis of the bias, we first present a
prototype model where the dependent variable $y_{i,t+1}$ is generated
from the following data generating process (DGP) 
\begin{equation}
y_{i,t+1}=\mu_{i}^{(0)y}+\beta^{(0)}x_{i,t+1}+u_{i,t+1}^{y},\label{eq:DGP}
\end{equation}
 and the variable of interest $x_{i,t}$ follows a panel autoregression (AR) of order 1:
\begin{align}
x_{i,t+1} & =\mu_{i}^{x}+\rho x_{i,t}+u_{i,t+1}^{x},\label{eq:AR1}
\end{align}
where $|\rho|<1$ ensures stationarity. This simple model allows
us to derive the \textit{analytical} expression of the bias of the FE estimator of the IRF from this underlying
dynamic DGP, which delivers four important implications. Firstly, 
FE suffers from an \emph{intrinsic} Nickell bias 
due to Equation (\ref{eq:modelh})'s violation of \emph{strict exogeneity}.\footnote{\label{footnote: strict exogeneity} Technically, strict exogeneity is violated if $
\mathbb{E}\left[e_{i,t+h}^{(h)}|({x}_{i,s})_{1\leq s\leq T}\}\right]  \neq0  $; see (\ref{eq: violate strict exo}) for justifications.
}
Secondly, FE exhibits an \emph{attenuation bias} and underestimates the true dynamic effects when $\rho > 0$ in the leading case of economic autoregressive relationships. 
This result suggests that previous studies on financial crises using FE may have underestimated the impact of financial crises on economic contractions. Third, the Nickell bias tends to rise with the horizon ($h$) and the persistence of the regressor ($\rho$). This implies that FE is more likely to underestimate the long-term impact of financial shocks, and the bias would be larger for more persistent financial shocks. Lastly, the distortion in inference caused by the Nickell bias tends to be more severe for a panel with larger $N$ and smaller $T$.

To eliminate the first-order Nickell bias and restore valid statistical inference in panel LP regressions, we propose the split-panel jackknife (SPJ) estimator \citep{Dhaene2015,chudik2018half} as a simple, effective, and general solution. It is easy to implement, using the formula:
\[
\widehat{\beta}^{(h)\mathrm{spj}}=2\widehat{\beta}^{(h)\mathrm{fe}}-(\widehat{\beta}_{a}^{(h)\mathrm{fe}}+\widehat{\beta}_{b}^{(h)\mathrm{fe}})/2,
\]
where $\widehat{\beta}^{(h)\mathrm{fe}}$, $\widehat{\beta}_{a}^{(h)\mathrm{fe}}$,
and $\widehat{\beta}_{b}^{(h)\mathrm{fe}}$ are the plain FE estimators
from all the time periods, the first half ($t\leq T/2$), and the
second half ($t>T/2$), respectively. Given that most empirical
applications of LP involve at least a moderate $T$, we show that $\widehat{\beta}^{(h)\mathrm{spj}}$ is asymptotically unbiased and follows a zero-mean normal distribution when $N/T^{3}\to0$. Our formal theory of SPJ is developed in a linear model that accommodates extra control variables. The theory can be further extended to entertain
other practical specifications in panel data analysis (See Appendix B).   

SPJ \label{para: spj advantage} preserves the key advantages of the LP method. It allows researchers to focus on specifying the main equation of interest linking $y_{i,t+h}$ and $x_{i,t}$ without requiring a full VAR system, and provides a unified regression framework that accommodates a broad range of model specifications, including different lag orders, control variables, and dependent variable forms. It is a practical and theoretically grounded solution for panel data inference.

To illustrate the harm of the Nickell bias and evaluate the effectiveness of the SPJ correction, we revisit four seminal studies examining the macroeconomic consequences of financial crises. These applications are chosen for their prominence in the literature, their diverse yet interconnected economic contexts, and their ability to demonstrate the broad relevance of bias correction methods in empirical macro-finance analysis. Specifically, they cover a comprehensive spectrum of financial disturbances, including general financial distress, banking crises, household debt, and currency collapses.

The first example is \citet{romer2017new}, which creates a narrative semiannual measure of financial distress index on a 16-bin scale for 24 advanced countries from 1967 to 2012, and shows that the output experiences significant and persistent declines following typical financial distress shocks, highlighting the general macroeconomic vulnerability to financial disruptions. Next, focusing on banking crises, \citet{baron2021banking} construct a new data set of banking crises based on bank equity crash of more
than 30\% for 46 advanced and emerging economies over 1870-2016, and find that banking crises cause sharp output contractions and severe credit constraints. Furthermore, \citet{mian2017household} examine the impact of household debts. They construct a panel of 30 countries from 1960 to 2012, and show how elevated household indebtedness contributes significantly to recessions, particularly evident in the medium-run aftermath of housing booms. The last one is currency crisis, a central concern in international macroeconomics, given its acute and enduring impacts on national economies. \citet{cerra2008growth} construct a panel of currency crisis episodes across 175 countries from 1965 to 2000, and find that currency crises have persistent negative effects on output.

Our reexamination broadly confirms the original findings from these seminal studies: financial shocks consistently and significantly reduce future economic output. However, compared to the asymptotically unbiased SPJ, FE tends to underestimate the medium- and long-term economic contractions; such underestimation can be substantial in some applications, epitomized by \citet{cerra2008growth}. The underestimation by FE is consistent with our theoretical finding that the Nickell bias shrinks the FE estimator toward zero; see Remark \ref{rem:attentuation} for details.  Across the four applications we revisited, the relative differences between SPJ and FE at the peak year---when output loss is greatest---range from 16\% to 75\%. Specifically, SPJ in the \citet{romer2017new} indicates a peak decline in output of 6.3\% approximately three and a half years after a typical financial shock, whereas FE yields a smaller decline of 5.4\%. Similarly, for banking crises analyzed by \citet{baron2021banking}, the FE estimates suggest a cumulative GDP decrease of 3.4\% four years after crises, while the SPJ reveals a more pronounced decline of about 4.2\%. In \citet{mian2017household}, a 10 percentage-point increase in the household debt-to-GDP ratio corresponds to a peak GDP contraction of 3.9\% using FE estimates, compared to a larger drop of 5.6\% estimated with SPJ. Finally, the case of currency crises exhibits the largest discrepancy: SPJ indicates a peak cumulative output loss of approximately 7.9\% a decade after the crisis, which is roughly 75\% greater than the FE estimates. This substantial discrepancy likely arises due to the large panel of countries ($N=175$) accompanied by a relatively short time length ($T=35$).
In summary, our analysis reveals a consistent pattern of underestimation by the FE estimator, particularly for long horizons, when evaluating output losses associated with financial crises. This systematic underestimation mirrors the Nickell bias featured in our prototype model, which biases FE toward zero. 

The above empirical findings underscore that the Nickell bias of FE in panel LP is not merely an accidental occurrence but a systematic pathology that must be taken care of
by economists who are serious about methodology, and policymakers who are serious about economic consequences. If policymakers rely on the biased empirical estimation method, they are likely to underestimate both the severity and duration of economic downturns, resulting in insufficient or prematurely withdrawn monetary and fiscal interventions. Consequently, stimulus packages, stabilization policies, and financial assistance programs calibrated based on biased estimates will fail to fully address the true magnitude of the crisis. To illustrate this point, we discuss the fiscal policy implications of bias correction in Section \ref{subsec: summary empirical}. We demonstrate that correcting this bias through SPJ is both essential and economically significant, as it provides policymakers with accurate and unbiased assessments, whereas FE tends to underestimate the required fiscal stimulus to counteract the potential maximum output loss after a typical financial crisis for a given government spending multiplier.

\textbf{Literature Review and Contributions}. 
Our paper makes econometric contributions to the literature on financial crises by highlighting the Nickell bias of FE in panel LP.
The methodology extends beyond the literature on financial crises. The panel LP with fixed effects is also widely used in empirical macroeconomics, for example, to assess the dynamic impacts of political shocks, monetary and fiscal policies, and pandemics \citep{acemoglu2019democracy,jorda2022longer,gilchrist2022sovereign}. Moreover, its recent adoption in applied microeconomic studies at sectoral, firm, product, or household levels further underscores the broad relevance of addressing Nickell bias, particularly given the large cross-sectional dimensions and short time horizons typical of these datasets \citep{bahaj2020home, ottonello2020financial,  caldara2022measuring, boehm2023long}.\footnote{See \citet{jorda2025local} for a comprehensive review of the local projection method and its broad applications.} Our SPJ can also be easily applied in those studies to deal with the Nickell bias.  

This study is built upon an econometric basis. The finite sample bias of time series AR models is raised by
\citet{kendall1954}. The detrimental effect of such bias is amplified
in panel data. \citet{nickell1981biases} showcases
the treacherous nature of panel data: a seemingly innocuous procedure
may face unexpected difficulty when a transformation takes care of
the individual-specific heterogeneity embodied by the fixed effects; see a recent
survey by \citet{okui20211}. Regarding the estimation of IRFs, the Nickell bias is well-known in dynamic panel VAR literature \citep{holtz1988estimating,hahn2002asymptotically,Arellano2003,greenaway2013multistep}. With the increasing popularity of single-equation LP for IRF estimation due to its simplicity, the Nickell bias and its implications in the panel LP framework have so far remained unrecognized in the literature. 
To the best of our knowledge, this paper is the first to point out the omnipresence of Nickell bias even if the lagged dependent variables are absent from the regressors \citep{jorda2025local}.

For general dynamic panel models, the Nickell bias is usually addressed with the instrumental variable (IV) method \citep{Anderson1982,arellano1991some,arellano1995another,blundell1998initial,Hsiao2002} or analytical formulas \citep{Kiviet1995,bun2005bias,alvarez2022robust}. The IV method is subject to the lack of efficiency and performs poorly with finite samples when the regressor is persistent \citep{fernandez2018fixed}. The analytical methods require the availability of closed-form formulas. Therefore, case-by-case adjustments and mathematical derivations are necessary for the IV and analytical methods.
\citet{Dhaene2015} tackle the Nickell bias by SPJ, which is an ``automated'' estimator that spares applied researchers from the potential weak instrument issue and complex analytical derivations. The application of SPJ goes beyond single-equation linear dynamic panel to multiple-equation models \citep{dhaene2016bias} and nonlinear models 
\citep{weidner2021bias}, due to its simplicity and robustness. We therefore recommend using SPJ in panel LP to preserve the attractive features of LP. Though \label{para: literature} SPJ is adopted from \citet{Dhaene2015}, this paper is the first to highlight the necessity of using SPJ specifically in the context of panel LPs to remove the Nickell bias for valid inference.

This paper is closely related to, while different from \citet{chudik2018half} and \citet{herbst2021bias}. \citet{chudik2018half} explore the Nickell bias in a generic linear panel model with weakly exogenous regressors and then extend SPJ to correct
it in two-way FE models. Our paper, on the other hand, is specifically motivated
by panel LP, and the solutions and empirical examples are tailored
for the panel LP. As panel LP involves a series of regressions, 
the regressor $x_{i,t}$ in (\ref{eq:modelh}) violates strict exogeneity when $h\geq 1$. Therefore, weak exogeneity
is deduced, rather than assumed, as an intrinsic feature of panel
LP in learning the economic IRF. Moreover, we provide an in-depth
study of panel VAR($\infty$) about when strict exogeneity is violated
in panel LP (see Appendix A.2). 
\citet{herbst2021bias} focus on point estimates and 
investigate the finite sample bias in the time series LP model 
with and without (in their Appendix A.2) lagged dependent variables and 
then extend their discussion into panel data;
they employ \citet{bao2007second} for analytical bias correction. 
Our paper points out that all regressors in panel LP are subject to Nickell bias regardless of the linear specification, and our solution is free of analytical bias calculation. Beyond the impacts on point estimates, we highlight that in panel LP the Nickell bias is not a finite sample issue in statistical inference; it sustains in large samples by shifting the mean of the asymptotic normal distribution of FE, thereby spoiling the theoretical foundation for standard inference using the $t$-statistics. The impact of the bias on statistical inference arises from the asymptotic normal distribution rather than the finite sample standard error. 

\textbf{Data and Software}. 
SPJ is easy to implement and can be extended to many variants of specifications. All empirical applications can be replicated with data and code available at \url{https://metricshilab.github.io/panel-lp-replication/}.
This repository also provides links to R and Stata packages to facilitate empirical practices.
Given input data, a single-line command of the main function will automatically compute the IRF, standard error, and the corresponding confidence intervals.

\textbf{Organization}. The rest of the paper is organized as follows.
In Section \ref{sec:bias}, we first use 
a simple two-equation model to demonstrate the presence of 
Nickell bias without lagged dependent variables
and provide the analytical expression of this bias. 
We then show that the Nickell bias is a generic phenomenon 
in panel LP and propose using the SPJ to restore asymptotic normality centered at
zero. Moreover, Monte Carlo simulations are carried out to verify
the theoretical predictions. 
Section \ref{sec:Empirical-Application}
estimates the IRFs by FE and SPJ in four empirical examples of
macro-finance to evaluate the impact of financial crises.
Proofs, theoretical extensions, and additional empirical
works are relegated to the Appendix.

\section{Models and Theory \label{sec:bias}}

The FE estimator is the most popular method that estimates the slope
coefficient in the linear panel regression and in the meantime controls
the unobservable individual-specific heterogeneity. For simplicity
let us consider the regression (\ref{eq:modelh}). Let $T_{h}=T-h$
be the effective sample size and $\mathcal{T}^{h}=[T_{h}]=\{1,2,\ldots,T_{h}\}$
be the corresponding index set, where throughout the paper we use
$[q]=\{1,2,\cdots,q\}$ to denote
the set of positive integers up to a natural number $q$. The FE estimator is 
\begin{equation}
\hat{\beta}^{(h){\rm \mathrm{fe}}}=\sum_{i\in[N]}\sum_{t\in\mathcal{T}^{h}}\tilde{x}_{i,t}\tilde{y}_{i,t+h}\bigg/\sum_{i\in[N]}\sum_{t\in\mathcal{T}^{h}}\tilde{x}_{i,t}^{2},\label{eq:FEest}
\end{equation}
where $\tilde{y}_{i,t+h}=y_{i,t+h}-T_{h}^{-1}\sum_{t\in\mathcal{T}^{h}}y_{i,t+h}$
is the within-group demeaned dependent variable, and similarly within-group
demeaned is the regressor $\tilde{x}_{i,t}$. To conduct statistical
inference about $\beta^{(h)}$, the standard deviation of $\hat{\beta}^{(h){\rm \mathrm{fe}}}$
is calculated as 
\[
\hat{s}^{(h){\rm \mathrm{fe}}}=(\sum_{i\in[N]}\sum_{t,s\in\mathcal{T}^{h}}\tilde{x}_{i,t}\tilde{x}_{i,s}\hat{e}_{i,t+h}^{(h)\mathrm{fe}}\hat{e}_{i,s+h}^{(h)\mathrm{fe}})^{1/2}/{\sum_{i\in[N]}\sum_{t\in\mathcal{T}^{h}}\tilde{x}_{i,t}^{2}}
\]
where $\hat{e}_{i,t+h}^{(h)\mathrm{fe}}=\tilde{y}_{i,t+h}-\tilde{x}_{i,t}\hat{\beta}^{(h){\rm \mathrm{fe}}}$
is the estimation residual. We then construct the $t$-statistic 
\[
(\hat{\beta}^{(h){\rm \mathrm{fe}}}-\beta^{(h)\mathrm{null}})/\hat{s}^{(h){\rm \mathrm{fe}}},
\]
where $\beta^{(h)\mathrm{null}}$ is a hypothesized value under a
null of economic interest, and compare the value of the $t$-statistic
with a critical value drawn from the standard normal distribution
$\mathcal{N}(0,1)$. For example, for a two-sided test with size 5\%,
we reject the null if the absolute value of the $t$-statistic is
larger than 1.96. Does the $t$-statistic based on $\hat{\beta}^{(h){\rm \mathrm{fe}}}$
provide valid statistical inference for the true IRF $\beta^{(h)}$?
This question is complicated by the within-group transformation
in the dynamic setting.

\subsection{Intrinsic Nickell Bias in Panel LP{\normalsize{} \label{subsec: implicit bias}}}

The procedure described above is the common practice based on FE.
However, there is an intrinsic Nickell bias built into panel LP. \citet{jorda2005estimation}
constructs the time series LP by a VAR system. Here we use (\ref{eq:DGP})
and (\ref{eq:AR1})---a stylized panel VAR(1)---as the true DGP
for demonstration. 
The dependent variable $y_{i,t+1}$ in (\ref{eq:DGP}) is linked to the
regressor $x_{i,t+1}$ by the slope parameter $\beta^{(0)}$. 
In (\ref{eq:AR1}) $x_{i,t+1}$ follows one of the simplest time series models---a
stationary AR(1) model. 
Let $\mathbf{x}_{i}^{t}=\left(x_{i,0},x_{i,1},\ldots,x_{i,t}\right)$
be the time series of the regressor from time 0 up to $t$. When assuming
$u_{i,t+1}^{x}$ and $u_{i,t+1}^{y}$ independently and identically
(i.i.d.)~distributed across $i$ and $t$, we have $\mathbb{E}\left[u_{i,t+1}^{y}|\mathbf{x}_{i}^{T}\right]=0$
and thus in (\ref{eq:DGP}) strict exogeneity holds. The FE estimator is asymptotically unbiased when $h=0$ in the regression (\ref{eq:DGP}).

LP is a series of linear regressions across different horizons. For
$h=1$ we substitute (\ref{eq:AR1}) into (\ref{eq:DGP}) and for
$h\geq2$ we repeat the substitution to produce (\ref{eq:modelh})
 with the following closed-form expressions 
 \begin{equation}\label{eq: close form prototype}
\begin{aligned}
\beta^{(h)} & =\rho^{h}\beta^{(0)},\\
\mu_{i}^{(h)y} & =\mu_{i}^{(0)y}+\beta^{(0)}\mu_{i}^{x}(1+\sum_{s=1}^{h-1}\rho^{s}),\\
e_{i,t+h}^{(h)} & =u_{i,t+h}^{y}+\beta^{(0)}(u_{i,t+h}^{x} + \sum_{s=1}^{h-1}\rho^{s}u_{i,t+h-s}^{x}).
\end{aligned}
 \end{equation}
Denote $\mathbf{u}_{i}^{x,t}=(u_{i,s}^{x})_{s=1}^{t}$. The composite error term $e_{i,t+h}^{(h)}$ is weakly exogenous\footnote{\citet{chudik2018half} define weak exogeneity in terms of unconditional
moments. We define weak exogeneity in terms of mean independence conditional
on the past information, following \citet{Mikusheva2023}.} in that 
\[
\mathbb{E}\left[e_{i,t+h}^{(h)}|\mathbf{x}_{i}^{t}\right]=\mathbb{E}\left[\beta^{(0)}(u_{i,t+h}^{x} + \sum_{s=1}^{h-1}\rho^{s}u_{i,t+h-s}^{x})+u_{i,t+h}^{y}\Bigg|x_{i,0},\mathbf{u}_{i}^{x,t}\right]=0.
\]
However, it is not strictly exogenous as 
\begin{equation}\label{eq: violate strict exo}
\begin{aligned}
\mathbb{E}\left[e_{i,t+h}^{(h)}|\mathbf{x}_{i}^{T}\right] & =\mathbb{E}\left[\beta^{(0)} (u_{i,t+h}^{x} + \sum_{s=1}^{h-1}\rho^{s}u_{i,t+h-s}^{x})+u_{i,t+h}^{y}\Bigg|x_{i,0},\mathbf{u}_{i}^{x,T}\right]\\
& =\beta^{(0)} (u_{i,t+h}^{x} + \sum_{s=1}^{h-1}\rho^{s}u_{i,t+h-s}^{x})\neq0
\end{aligned}
\end{equation}
if $\beta^{(0)}\neq0$. 
\begin{rem}
When $\rho=0$,
the regressor $x_{i,t}$ is independent across time, but  
strict exogeneity is still violated for all $h\geq1$ due to
$\mathbb{E}\left[e_{i,t+h}^{(h)}|\mathbf{x}_{i}^{T}\right] = \beta^{(0)}u_{i,t+h}^{x}\neq0$.
Strict exogeneity occurs in (\ref{eq:modelh}) if and only if $\beta^{(0)}=0$, under
which $(x_{i,t})$ and $(y_{i,t})$ are two autonomous time series
with no connection. 
\end{rem}
At first glance, (\ref{eq:modelh}) is a seemingly innocuous regression
of $y_{i,t+h}$ on another variable $x_{i,t}$. It turns out that
(\ref{eq:DGP}) and (\ref{eq:AR1}) consist of a two-equation panel
\emph{vector} AR model. When the within-group transformation is used
to eliminate the fixed effects, the Nickell bias is present even though the lagged $y$ does not explicitly appear on the right-hand
side of (\ref{eq:modelh}). 

The following proposition characterizes the bias in the
asymptotic distribution.

\begin{prop} \label{thm:biasFEAR1} 
Suppose the zero-mean innovations
$u_{i,t}^{y}$ and $u_{i,t}^{x}$ are i.i.d.~across $i$ and $t$. If

\begin{equation}
s_{x}^{2}:=\dfrac{1}{NT_{h}}\sum_{i\in[N]}\sum_{t\in\mathcal{T}^{h}}\tilde{x}_{i,t}^{2}\stackrel{p}{\to}\sigma_{x}^{2}>0\label{eq:plim_varx}
\end{equation}
\begin{equation}
\dfrac{1}{\sqrt{NT_{h}}}\sum_{i\in[N]}\sum_{t\in\mathcal{T}^{h}}\left(\tilde{x}_{i,t}e_{i,t+h}^{(h)}-\mathbb{E}\left[\tilde{x}_{i,t}e_{i,t+h}^{(h)}\right]\right)\stackrel{d}{\to}\mathcal{N}\left(0,\sigma_{xe,h}^{2}\right)\label{eq:clt_xe_bias}
\end{equation}
either as $N\to\infty$ with a fixed $T$, or $(N,T)\to\infty$ jointly, then
\begin{equation}
\sqrt{NT_{h}}\left(\hat{\beta}^{(h){\rm \mathrm{fe}}}-\beta^{(h)}\right)+\beta^{(0)}\cdot\dfrac{\sigma_{u_{x}}^{2}}{s_{x}^{2}}\sqrt{\dfrac{N}{T_{h}}}f_{T,h}(\rho)\stackrel{d}{\to}   \mathcal{N}\left(0,\ \frac{\sigma_{xe,h}^{2}}{\sigma_{x}^{4}}\right),
\label{eq:biasFEAR1}
\end{equation}
where 
\[
f_{T,h}(\rho)=\frac{(1-\rho^{h})}{\left(1-\rho\right)^{2}}-\frac{h}{\left(T-h\right)\left(1-\rho\right)^{2}}+\frac{\rho^{T-2h+1}\left(1-\rho^{2h}\right)}{\left(T-h\right)\left(1-\rho\right)^{2}(1-\rho^{2})},
\]
$\sigma_{u_{x}}^{2}=\mathrm{var}\left[u_{i,t}^{x}\right]$ and 
$\sigma_{x}^{2}=\mathrm{var}\left[x_{i,t}\right]$. 
\end{prop}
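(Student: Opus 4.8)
The plan is to start from the algebraic identity for the FE estimator. Since within-group demeaning annihilates the fixed effect $\mu_{i}^{(h)y}$, subtracting the individual time-average from (\ref{eq:modelh}) gives $\tilde{y}_{i,t+h}=\beta^{(h)}\tilde{x}_{i,t}+\tilde{e}_{i,t+h}^{(h)}$, so that
\[
\tilde{\beta}^{(h)\mathrm{fe}}-\beta^{(h)}=\frac{\sum_{i\in[N]}\sum_{t\in\mathcal{T}^{h}}\tilde{x}_{i,t}e_{i,t+h}^{(h)}}{\sum_{i\in[N]}\sum_{t\in\mathcal{T}^{h}}\tilde{x}_{i,t}^{2}},
\]
where I use that demeaning one factor suffices, i.e. $\sum_{t}\tilde{x}_{i,t}\tilde{e}_{i,t+h}^{(h)}=\sum_{t}\tilde{x}_{i,t}e_{i,t+h}^{(h)}$. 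Multiplying by $\sqrt{NT_{h}}$ makes the denominator equal to $s_{x}^{2}$ of (\ref{eq:plim_varx}), which converges to $\sigma_{x}^{2}>0$, so by Slutsky's theorem it suffices to analyze the numerator $(NT_{h})^{-1/2}\sum_{i}\sum_{t}\tilde{x}_{i,t}e_{i,t+h}^{(h)}$ and carry $s_{x}^{2}$ along as a factor.

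Next I would split the numerator into a mean-zero part and a deterministic part,
\[
\frac{1}{\sqrt{NT_{h}}}\sum_{i}\sum_{t}\tilde{x}_{i,t}e_{i,t+h}^{(h)}=\frac{1}{\sqrt{NT_{h}}}\sum_{i}\sum_{t}\Big(\tilde{x}_{i,t}e_{i,t+h}^{(h)}-\mathbb{E}\big[\tilde{x}_{i,t}e_{i,t+h}^{(h)}\big]\Big)+\sqrt{\frac{N}{T_{h}}}\sum_{t\in\mathcal{T}^{h}}\mathbb{E}\big[\tilde{x}_{i,t}e_{i,t+h}^{(h)}\big],
\]
where the last equality uses that the innovations are i.i.d. across $i$, so the expectation is common to all units, together with $N/\sqrt{NT_{h}}=\sqrt{N/T_{h}}$. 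The first term converges in distribution to $\mathcal{N}(0,\sigma_{xe,h}^{2})$ by assumption (\ref{eq:clt_xe_bias}); this is precisely why the centered CLT is imposed as a high-level condition, isolating the deterministic bias for separate treatment.

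The heart of the proof is the exact evaluation of the deterministic term, which is the Nickell bias. Writing $\tilde{x}_{i,t}=x_{i,t}-\bar{x}_{i}$ with $\bar{x}_{i}=T_{h}^{-1}\sum_{s\in\mathcal{T}^{h}}x_{i,s}$, I would first note that $e_{i,t+h}^{(h)}$ loads only on the innovations $u_{i,t+1}^{x},\dots,u_{i,t+h}^{x}$ and $u_{i,t+h}^{y}$, all dated strictly after $t$, whereas $x_{i,t}$ loads only on innovations dated at or before $t$; hence $\mathbb{E}[x_{i,t}e_{i,t+h}^{(h)}]=0$ and the whole bias is carried by the demeaning term, $\mathbb{E}[\tilde{x}_{i,t}e_{i,t+h}^{(h)}]=-T_{h}^{-1}\sum_{s\in\mathcal{T}^{h}}\mathbb{E}[x_{i,s}e_{i,t+h}^{(h)}]$. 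Substituting the stationary AR(1) autocovariance $\mathbb{E}[x_{i,s}u_{i,\tau}^{x}]=\rho^{s-\tau}\sigma_{u_{x}}^{2}\mathbf{1}\{s\ge\tau\}$ into the expansion of $e_{i,t+h}^{(h)}$ collapses the surviving cross-covariances to $\mathbb{E}[x_{i,s}e_{i,t+h}^{(h)}]=\beta^{(0)}\sigma_{u_{x}}^{2}\sum_{\tau=t+1}^{t+h}\rho^{t+h+s-2\tau}\mathbf{1}\{s\ge\tau\}$, after which I would sum over $s,t\in\mathcal{T}^{h}$ and show that $\sum_{t}\mathbb{E}[\tilde{x}_{i,t}e_{i,t+h}^{(h)}]=-\beta^{(0)}\sigma_{u_{x}}^{2}f_{T,h}(\rho)$.

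The main obstacle is purely computational: evaluating these nested geometric series over the finite index set $\mathcal{T}^{h}$ and simplifying to $f_{T,h}(\rho)$. The leading term $(1-\rho^{h})/(1-\rho)^{2}$ comes from the bulk of the double sum, while both $O(1/(T-h))$ corrections emerge together from the finite geometric sum $\sum_{t}\rho^{T_{h}-\tau-t}$ generated by averaging the regressor over the finite window in $\bar{x}_{i}$; the factor $\rho^{T-2h+1}$ is the signature of the end-of-sample effect, reflecting that for $t$ near $T_{h}$ the future innovations entering $e_{i,t+h}^{(h)}$ lie beyond the reach of any in-sample $x_{i,s}$, so the demeaning cannot fully absorb them. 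Keeping exact track of these truncations, rather than passing to their stationary limits, is what requires care; a useful sanity check is the case $h=1$, where the double sum reduces to $\tfrac{T-1}{1-\rho}-\tfrac{1-\rho^{T-1}}{(1-\rho)^{2}}$ and reproduces $(T-1)f_{T,1}(\rho)$ exactly. Finally I would assemble the pieces by Slutsky's theorem, dividing by $s_{x}^{2}$ and moving the deterministic bias to the left-hand side, which yields (\ref{eq:biasFEAR1}) with limiting variance $\sigma_{xe,h}^{2}/\sigma_{x}^{4}$.
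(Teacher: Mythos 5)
Your proposal is correct and follows essentially the same route as the paper's proof: the same ratio identity, the same centering that lets the assumed CLT handle the stochastic part, the same observation that $\mathbb{E}[x_{i,t}e_{i,t+h}^{(h)}]=0$ so the bias enters only through the demeaning term $\bar{x}_{i}$ correlating with the future innovations $u_{i,t+1}^{x},\dots,u_{i,t+h}^{x}$, and the same explicit geometric-sum evaluation yielding $f_{T,h}(\rho)$ before assembling via Slutsky. The only cosmetic difference is that you work directly with the stationary autocovariance $\mathbb{E}[x_{i,s}u_{i,\tau}^{x}]=\rho^{s-\tau}\sigma_{u_{x}}^{2}\boldsymbol{1}\{s\geq\tau\}$ whereas the paper expands $x_{i,t}$ in its truncated MA form; these are equivalent under the stationarity assumption.
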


\begin{rem}
Simplifying assumptions are commonly employed in the large-$N$-large-$T$ panel data literature to help make the key point clear.
Here, the i.i.d.~innovations keep concise the expressions of the bias 
and variance, and the high-level conditions in
(\ref{eq:plim_varx}) and (\ref{eq:clt_xe_bias}) 
avoid technical distractions highlighted in \citet{phillips1999linear}'s panel data joint $(N,T)$ asymptotics. 
\end{rem}

Let $\mathcal{Z}\sim\mathcal{N}\left(0,\sigma_{xe,h}^{2}/\sigma_{x}^{4}\right)$
denote the zero-mean normal random variable following the limit distribution in (\ref{eq:biasFEAR1}).  Proposition \ref{thm:biasFEAR1} implies 
\begin{equation}
\hat{\beta}^{(h){\rm \mathrm{fe}}}\stackrel{a}{\sim}\beta^{(h)}-\dfrac{\beta^{(0)}\sigma_{u_{x}}^{2}}{T_{h}s_{x}^{2}}f_{T,h}(\rho)+\frac{\mathcal{Z}}{\sqrt{NT_{h}}},\label{eq:FE error order}
\end{equation}
where ``$\stackrel{a}{\sim}$'' signifies asymptotic similarity.
The bias is of order $1/T$. In terms of point estimates, if $T$ is fixed, $\hat{\beta}^{(h){\rm \mathrm{fe}}}$
is inconsistent as $N\to\infty$.
When $(N,T)\to\infty$ jointly with $N/T\to c \in (0,\infty)$, the FE estimator
$\hat{\beta}^{(h){\rm \mathrm{fe}}}$ is consistent for $\beta^{(h)}$, and hence $\rho$, $\sigma_{u_{x}}^{2}$, $\sigma_{x}^{2}$
and $\sigma_{xe,h}^{2}$ are also consistently estimable. However, when statistical inference is the purpose, we need to examine the bias in the asymptotic distribution.  When $N/T\to c \in (0,\infty)$, the bias  in (\ref{eq:biasFEAR1}) satisfying 
\begin{equation}
-\beta^{(0)}\cdot\dfrac{\sigma_{u_{x}}^{2}}{s_{x}^{2}}\sqrt{\dfrac{N}{T_{h}}}f_{T,h}(\rho)\to-\text{\ensuremath{\beta^{(0)}\cdot\dfrac{\sigma_{u_{x}}^{2}}{\sigma_{x}^{2}}}\ensuremath{\ensuremath{\sqrt{c}}}}\cdot\frac{(1-\rho^{h})}{\left(1-\rho\right)^{2}}\label{eq:bias_limit}
\end{equation}
does not vanish asymptotically and it will distort the test size of the usual inference based on the $t$-statistic.\footnote{
When the lagged dependent variables are present, 
the well-known explicit Nickell bias in panel LP is pointed out by \citet{teulings2014economic}. 
They augment the regression specification as a way to fix the bias, which is valid when the regressors have no dynamics.
By contrast, we show that intrinsic Nickell bias occurs even when the lagged dependent variable is not included.} This size distortion is not caused by finite sample standard errors, and it therefore remains even if the asymptotic variances are known. Following \citet{hahn2002asymptotically}
and \citet{okui2010asymptotically}, one can correct the bias based
on (\ref{eq:biasFEAR1}) for valid asymptotic inference.

\medskip{}

\begin{rem}
\label{rem:attentuation} LP is a series of regressions where the direction
and the magnitude of bias depend on many population parameters in
the model. Consider $\rho>0$ as in most economic autoregressive relationships,
and we summarize the following features that are peculiar to the Nickell
bias in panel LP. 
\begin{enumerate}
\item The limit expression (\ref{eq:bias_limit}) implies that given all
the parameters in the DGP, the bias worsens with large $h$, as
$1-\rho^{h}$ increases with $h$. Also, the bias becomes more substantial as $\rho$ increases. 

\item The bias shrinks the FE estimator toward zero as (\ref{eq:FE error order})
becomes 
\[
\hat{\beta}^{(h){\rm \mathrm{fe}}}\stackrel{a}{\sim}\beta^{(h)}\left(1-\frac{1}{T_{h}}\cdot\dfrac{\sigma_{u_{x}}^{2}f_{T,h}(\rho)}{\sigma_{x}^{2}\rho^{h}}\right)+\frac{\mathcal{Z}}{\sqrt{NT_{h}}} \]
by noticing the true IRF $\beta^{(h)}=\beta^{(0)}\rho^{h}$ and $0<\sigma_{u_{x}}^{2}f_{T,h}(\rho)/\left(\sigma_{x}^{2}\rho^{h}\right)=O(1)$.
No matter whether the true $\beta^{(h)}$ is positive or negative,
the FE estimator exhibits an \emph{attenuation bias} and underestimates
$|\beta^{(h)}|$.
\item For statistical inference, the bias term in (\ref{eq:biasFEAR1}) suggests that a larger relative magnitude of sample sizes $N/T$ causes more severe distortion of the coverage probabilities of FE's confidence intervals. Therefore, the impact of Nickell bias on inference is more substantial with a larger $N$ and smaller $T$. 
\end{enumerate}
\end{rem}
In simulation studies and empirical applications, we find that these
intriguing phenomena of the FE estimator are common in the simple
AR(1) specification as well as more general cases. Although the true
DGPs in real data studies are unknown, in the empirical applications
in Section \ref{sec:Empirical-Application} we observe that the bias
correction enlarges the magnitude of the IRFs and the biggest discrepancy
often occurs on a relatively long horizon. 

\subsection{Main Equation Based on Panel VAR{\normalsize{} \label{subsec:General-Setting}}}

While we have demonstrated the bias in the FE estimation of (\ref{eq:modelh})
from the prototype DGP (\ref{eq:DGP}) and (\ref{eq:AR1}), the Nickell
bias looms in general dynamic systems. In practical use of the panel
AR model, researchers may want to include lagged dependent variables
as well as other control variables. For example, \citet[p.1424]{nickell1981biases}
considers a panel ARX (in our notation) $y_{i,t+1}=\mu_{i}^{y}+\gamma y_{i,t}+\beta x_{i,t+1}+u_{i,t+1}^{y}$.

Though LP focuses on single-equation regressions instead of simultaneous-equation
systems, it is helpful to present the underlying VAR system which implies
the series of regressions. In this section, we consider data
generated from a panel VAR. Let $\mathbf{x}_{i,t}$ be a $K$-dimensional
random vector, and the observed data at time $t$ is combined into
a $(K+1)$-vector $\mathbf{w}_{i,t}=(y_{i,t},\mathbf{x}_{i,t}^{\prime})^{\prime}$.
We write down a panel structural VAR($p$) model 
\begin{equation}
\mathbf{A}_{0}\mathbf{w}_{i,t+1}=\boldsymbol{\mu}_{i}^{(0)}+\sum_{s=1}^{p}\mathbf{A}_{s}\mathbf{w}_{i,t+1-s}+\mathbf{u}_{i,t+1}\label{eq:compact_VAR}
\end{equation}
as the DGP, where $\mathbf{A}_{s}$, $s=0,1,\ldots,p$, are $(1+K)\times(1+K)$
coefficient matrices,\footnote{Without loss of generality, here we write the numbers of the lags
of all components in $\mathbf{w}_{i,t}$ being the same for all components.
The researcher has the discretion to choose the number of lags for
specific variables either by prior knowledge or by some information
criterion, and in this case $p$ here is viewed as the largest number
of lags with the shorter lags being imputed by known zero coefficients.
} and $\boldsymbol{\mu}_{i}^{(0)}$ is the vector of individual-specific
fixed effects. The Wold-causal order requests the left-bottom block of
$\mathbf{A}_{0}$ to be a $K$-vector of zeros \citep{jorda2005estimation},
and we standardize its diagonal line as 1. We rewrite (\ref{eq:compact_VAR})
as 
\begin{equation}
\begin{pmatrix}1 & -\mathbf{a}_{0,yx}^{\prime}\\
\boldsymbol{0} & \mathbf{A}_{0,x}
\end{pmatrix}\begin{pmatrix}y_{i,t+1}\\
\mathbf{x}_{i,t+1}
\end{pmatrix}=\begin{pmatrix}\mu_{i}^{(0)y}\\
\boldsymbol{\mu}_{i}^{(0)x}
\end{pmatrix}+\sum_{s=1}^{p}\begin{pmatrix}a_{s,y} & \mathbf{a}_{s,yx}^{\prime}\\
\mathbf{a}_{s,xy} & \mathbf{A}_{s,x}
\end{pmatrix}\begin{pmatrix}y_{i,t+1-s}\\
\mathbf{x}_{i,t+1-s}
\end{pmatrix}+\begin{pmatrix}u_{i,t+1}^{y}\\
\mathbf{u}_{i,t+1}^{x}
\end{pmatrix},\label{eq:block_VAR}
\end{equation}
where the matrices/vectors are partitioned in a compatible manner.
The structural form of the first equation is
\begin{equation}
y_{i,t+1}=\mu_{i}^{(0)y}+\mathbf{a}_{0,yx}^{\prime}\mathbf{x}_{i,t+1}+(1,\boldsymbol{0}^{\prime})\times\sum_{s=1}^{p}\mathbf{A}_{s}\mathbf{w}_{i,t+1-s}+u_{i,t+1}^{y}.\label{eq:stru_y}
\end{equation}
As derived in Appendix A.1, the predictive
equation for $y_{i,t+h}$ is
\begin{equation}
y_{i,t+h}=\mu_{i}^{(h)y}+\mathbf{W}_{i,t}^{\prime}\boldsymbol{\theta}^{(h)}+e_{i,t+h}^{(h)},\:\ h=1,2,\ldots,H,\label{eq:pred_h}
\end{equation}
where $\mathbf{W}_{i,t}=(\mathbf{w}_{i,t}^{\prime},\mathbf{w}_{i,t-1}^{\prime},\ldots,\mathbf{w}_{i,t-p+1}^{\prime})^{\prime}$
is the $p(K+1)$ long vector of regressors, $\boldsymbol{\theta}^{(h)}$
is the corresponding coefficient vector, and the error term 
\[
e_{i,t+h}^{(h)}=(1,\boldsymbol{0}^{\prime})\sum_{s=0}^{h-1}\mathbf{A}_{0}^{-s}\mathbf{A}_{1}^{s}\mathbf{A}_{0}^{-1}\mathbf{u}_{i,t+h-s}.
\]
We impose the following assumption to ensure that $e_{i,t+1}^{(h)}$ satisfies
weak exogeneity.
\begin{assumption}
\label{assu:indep_general} (a) The times series $(\mathbf{u}_{i,t})_{t\in[T]}$ is independent across $i$,  strictly stationary, and is a martingale difference sequence (m.d.s.) and further satisfies $\mathbb{E}[u_{i,t}^{y}|\mathbf{u}_{i,t}^{x}]=0$.
(b) All roots of the determinant equation 
\[
g(z)=\mathrm{det}\left(\mathbf{A}_{0}-\sum_{s=1}^{p}\mathbf{A}_{s}z^{s}\right)=0
\]
 stay outside of the unit circle on the complex plane.
\end{assumption}

Condition (a) rules out cross-sectional dependence for simplicity.
The m.d.s.~assumption is necessary for the panel LP linear coefficient to be interpreted as IRF. 
The conditional mean
$\mathbb{E}[u_{i,t}^{y}|\mathbf{u}_{i,t}^{x}]=0$
further guarantees that $u_{i,t+1}^{y}$ is mean independent
of all the right-hand side regressors in (\ref{eq:stru_y}). As $e_{i,t+h}^{(h)}$
is a linear combination of $(\mathbf{u}_{i,s})_{s=t+1}^{h}$, it is
mean-independent of past information. 
Condition (b) ensures that
the observed variables $\mathbf{w}_{i,t}$ are strictly stationary
over time, exhibiting no unit root or explosive behavior. Analyzing panel LP with highly persistent panels is technically involved and beyond the scope of this paper. Our follow-up work \citep*{liao2024nickell} develops a novel method to handle persistent panel data. See Appendix C.5 for details.

When $\mathbf{x}_{i,t}$ is multivariate, without loss of generality
we can denote the variable of main economic interest as the first
scalar $x_{i,t}^{(1)}$, and the rest of the vector $\mathbf{x}_{i,t}^{(2)}$
as additional control variables to make more plausible the mean independence
in (\ref{eq:stru_y}) and weak exogeneity in (\ref{eq:pred_h}). In estimation, however, $x_{i,t}^{(1)}$ and $\mathbf{x}_{i,t}^{(2)}$
are symmetric in that they share the same status as regressors accompanying
the potential lagged dependent variables in the predictive equation (\ref{eq:pred_h}).
In the panel LP regression, all regressors in $\mathbf{w}_{i,t}$ incur the Nickell
bias regardless of the number of lags, which is elaborated in Appendix A.2 for a VAR($\infty$) model.

\begin{rem}
This fact has profound implications for the GMM method for the dynamic panel regression using internal IVs \citep{arellano1991some}. The above discussion made clear that if we intend to seek IVs, we must prepare instruments not only for $(y_{i,t+1-s})_{s=1}^{p}$, but also for every variable in $(\mathbf{x}_{i,t+1-s})_{s=1}^{p}$. In practice, many IVs can bring about poor finite sample performance \citep{Roodman2009}, not to mention the weak IV's further complications \citep{Andrews2019}. Thus, the IV approach is not an ideal solution for panel LP.
\end{rem}

Closed-form expressions of the bias, as in Proposition \ref{thm:biasFEAR1}
for the prototype model, are intractable in a full-scale dynamic
system such as (\ref{eq:pred_h}). We suggest an automated bias correction
method in the next section.

\subsection{Split-Panel Jackknife}

Panel LP requires a correct specification of the main equation of
interest, but keeps an agnostic attitude toward the dynamics of $\mathbf{x}_{i,t}$
in the lower block of (\ref{eq:block_VAR}). Unlike the analytic bias
correction, \citet{Dhaene2015}'s SPJ is a data-driven method that
suits well with the single-equation panel LP.

Denote the FE estimator of (\ref{eq:pred_h}) as $\hat{\boldsymbol{\theta}}^{(h)\mathrm{fe}}$,
and let $\hat{\boldsymbol{\theta}}_{a}^{(h)\mathrm{fe}}$ and $\hat{\boldsymbol{\theta}}_{b}^{(h)\mathrm{fe}}$
be the FE estimators using the first half of observations in the set
$\mathcal{T}_{a}^{h}:=\left\{ t\leq T_{h}/2\right\} $ and the second
half of observations in the set $\mathcal{T}_{b}^{h}:=\left\{ T_{h}/2<t\leq T_{h}\right\} $,
respectively. The SPJ estimator is defined as 
\begin{equation}\label{eq:spj}
    \hat{\boldsymbol{\theta}}^{(h){\rm spj}}=2\hat{\boldsymbol{\theta}}^{(h)\mathrm{fe}}-\dfrac{1}{2}\left(\hat{\boldsymbol{\theta}}_{a}^{(h)\mathrm{fe}}+\hat{\boldsymbol{\theta}}_{b}^{(h)\mathrm{fe}}\right).
\end{equation} 

\begin{rem}
There have been many proposed solutions in the literature of Nickell bias, and therefore
it is not our intention to invent yet another new method. 
In a recent research independent of ours,
\citet[April]{dube2023local} is aware of the explicit Nickell
bias in panel LP with the lagged dependent variables included and refers to \citet{chen2019mastering}'s SPJ as
a potential solution. Our contribution here is to verify that
the theory of SPJ goes through in panel LP with the series of predictive
regressions. A key ingredient is our Lemma 1 in the
Appendix, which establishes the order of $\sum_{t\in\mathcal{T}_{a}^{h}}\mathbb{E}\left[\bar{\mathbf{W}}_{i,b}e_{i,t+h}^{(h)}\right]$
that crosses the blocks $\mathcal{T}_{a}^{h}$ and $\mathcal{T}_{b}^{h}$.
This term is peculiar to panel LP and its order depends on $h$, for
$e_{i,t+h}^{(h)}$ is not an arbitrary exogenous shock but the error
term yielded from iterating the first equation of the reduced-form
VAR. 
\end{rem}
To establish the asymptotic properties of the SPJ estimator, we impose
the following Assumption \ref{assu:limit_HJ} that consists of high-level
conditions of the law of large numbers and central limit theorem commonly
seen in the literature of panel data, say \citet{bai2009panel}. Define
\[
\widehat{\mathbf{Q}}_{k}:=\frac{1}{NT_{h}/2}\sum_{i\in[N]}\sum_{t\in\mathcal{T}_{k}^{h}}\tilde{\mathbf{W}}_{i,t}\tilde{\textbf{\ensuremath{\mathbf{W}}}}_{i,t}^{\top},
\]
for $k\in\{a,b\}$ associated with the two halves of the data over
the time dimension, where $\tilde{\textbf{\ensuremath{\mathbf{W}}}}_{i,t}=\textbf{\ensuremath{\mathbf{W}}}_{i,t}-T_{h}^{-1}\sum_{t\in\mathcal{T}^{h}}\textbf{\ensuremath{\mathbf{W}}}_{i,t}$.
Let $\widehat{\mathbf{Q}}:=(\widehat{\mathbf{Q}}_{a}+\widehat{\mathbf{Q}}_{b})/2$.
We use $\boldsymbol{1}\left\{ \cdot\right\} $ to denote the indicator
function.
\begin{assumption}
\label{assu:limit_HJ} There are positive-definite matrices $\mathbf{Q}$
and $\mathbf{R}$ such that 
\[
\mathbf{Q}=\mathrm{plim}_{(N,T)\to\infty}\widehat{\mathbf{Q}}_{a}=\mathrm{plim}_{(N,T)\to\infty}\widehat{\mathbf{Q}}_{b}
\]
and 
\begin{equation}
\frac{1}{\sqrt{NT_{h}}}\sum_{i\in[N]}\sum_{t\in\mathcal{T}^{h}}\left[\mathbf{d}_{i,t}^{*}e_{i,t+h}^{(h)}-\mathbb{E}\left[\mathbf{d}_{i,t}^{*}e_{i,t+h}^{(h)}\right]\right]\stackrel{d}{\to}\mathcal{N}\left(0,\mathbf{\mathbf{R}}\right)\label{eq:assu_CLT_hj}
\end{equation}
as $(N,T)\to\infty$, where 
\[
\mathbf{R}=\lim_{(N,T)\to\infty}\frac{1}{N}\sum_{i\in[N]}\mathbb{E}\left[\frac{1}{T_{h}}\sum_{t,s\in\mathcal{T}^{h}}\mathbf{d}_{i,t}^{*}\mathbf{d}_{i,s}^{*\top}e_{i,t+h}^{(h)}e_{i,s+h}^{(h)}\right]
\]
with 
$$\mathbf{d}_{i,t}^{*}=(\textbf{\ensuremath{\mathbf{W}}}_{i,t}-\bar{\textbf{\ensuremath{\mathbf{W}}}}_{i,b})\cdot\boldsymbol{1}\left\{ t\in\mathcal{T}_{a}^{h}\right\} +(\textbf{\ensuremath{\mathbf{W}}}_{i,t}-\bar{\textbf{\ensuremath{\mathbf{W}}}}_{i,a})\cdot\boldsymbol{1}\left\{ t\in\mathcal{T}_{b}^{h}\right\} 
$$
and $\bar{\textbf{\ensuremath{\mathbf{W}}}}_{i,k}=(T_{h}/2)^{-1}\sum_{t\in\mathcal{T}_{k}^{h}}\textbf{\ensuremath{\mathbf{W}}}_{i,t}$
for $k\in\{a,b\}$. 
\end{assumption}
The SPJ estimator is asymptotically normal without a bias term if
$T$ is non-trivial relative to $N$ in that $N/T^{3}\to0$.  
\begin{thm}
\label{thm:hj} If Assumptions \ref{assu:indep_general} and \ref{assu:limit_HJ}
hold, then 
\begin{equation}
\sqrt{NT_{h}}\left(\hat{\boldsymbol{\theta}}^{(h){\rm spj}}-\boldsymbol{\theta}^{(h)}\right)\stackrel{d}{\to}\mathcal{N}(\boldsymbol{0},\mathbf{Q}^{-1}\mathbf{R}\mathbf{Q}^{-1})\label{eq:CLT_HJ}
\end{equation}
as $(N,T)\to\infty$ and $N/T^{3}\to0$. 
\end{thm}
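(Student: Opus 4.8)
The plan is to reduce the three fixed-effect estimators to a single ``cross-demeaned'' score that shares a common Hessian limit, show that the split-panel combination automatically cancels the $O(1/T)$ Nickell bias and leaves only an $O(1/T^2)$ remainder, and then invoke the central limit theorem of Assumption~\ref{assu:limit_HJ} for the surviving stochastic term. First I would write each ingredient as inverse-Hessian times score. Because $\tilde{\mathbf{W}}_{i,t}$ is within-group demeaned, $\sum_{t}\tilde{\mathbf{W}}_{i,t}\bar{e}_i=\boldsymbol{0}$, so each numerator may be taken against the raw error $e_{i,t+h}^{(h)}$, giving $\hat{\boldsymbol{\theta}}^{(h)\mathrm{fe}}-\boldsymbol{\theta}^{(h)}=\widehat{\mathbf{Q}}^{-1}\mathbf{g}$ and $\hat{\boldsymbol{\theta}}_k^{(h)\mathrm{fe}}-\boldsymbol{\theta}^{(h)}=\widehat{\mathbf{Q}}_k^{-1}\mathbf{g}_k$ for $k\in\{a,b\}$, where $\mathbf{g}=(NT_h)^{-1}\sum_{i}\sum_{t\in\mathcal{T}^h}\tilde{\mathbf{W}}_{i,t}e_{i,t+h}^{(h)}$ and $\mathbf{g}_k=(NT_h/2)^{-1}\sum_{i}\sum_{t\in\mathcal{T}_k^h}(\mathbf{W}_{i,t}-\bar{\mathbf{W}}_{i,k})e_{i,t+h}^{(h)}$. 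Assumption~\ref{assu:limit_HJ} forces all three Hessians to the same limit $\mathbf{Q}$, so I would replace every $\widehat{\mathbf{Q}}_k^{-1}$ by $\mathbf{Q}^{-1}$; since each score is $O_p((NT_h)^{-1/2})$ up to an $O(1/T)$ mean, standard stochastic-order bookkeeping shows the replacement error is $o_p((NT_h)^{-1/2})$ and hence negligible after scaling by $\sqrt{NT_h}$.

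The algebraic heart is that, with a common $\mathbf{Q}^{-1}$, the combination $2\mathbf{g}-\tfrac12(\mathbf{g}_a+\mathbf{g}_b)$ collapses to a single score weighted by $\mathbf{d}_{i,t}^{*}$. Using $\bar{\mathbf{W}}_i=\tfrac12(\bar{\mathbf{W}}_{i,a}+\bar{\mathbf{W}}_{i,b})$, the net weight on $e_{i,t+h}^{(h)}$ for $t\in\mathcal{T}_a^h$ is $2\tilde{\mathbf{W}}_{i,t}-(\mathbf{W}_{i,t}-\bar{\mathbf{W}}_{i,a})=\mathbf{W}_{i,t}-\bar{\mathbf{W}}_{i,b}$, and symmetrically $\mathbf{W}_{i,t}-\bar{\mathbf{W}}_{i,a}$ for $t\in\mathcal{T}_b^h$, which is precisely $\mathbf{d}_{i,t}^{*}$. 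Therefore
\[
\sqrt{NT_h}\bigl(\tilde{\boldsymbol{\theta}}^{(h){\rm spj}}-\boldsymbol{\theta}^{(h)}\bigr)=\mathbf{Q}^{-1}\frac{1}{\sqrt{NT_h}}\sum_{i\in[N]}\sum_{t\in\mathcal{T}^h}\mathbf{d}_{i,t}^{*}e_{i,t+h}^{(h)}+o_p(1).
\]
Splitting this score into its centered part plus its mean, the centered part converges to $\mathcal{N}(\boldsymbol{0},\mathbf{R})$ by \eqref{eq:assu_CLT_hj}, and Slutsky's lemma finishes the argument once the scaled mean is shown to vanish.

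The main obstacle is controlling the scaled bias $(NT_h)^{-1/2}\sum_{i}\sum_{t}\mathbb{E}[\mathbf{d}_{i,t}^{*}e_{i,t+h}^{(h)}]$. Weak exogeneity under Assumption~\ref{assu:indep_general} kills $\mathbb{E}[\mathbf{W}_{i,t}e_{i,t+h}^{(h)}]$ and, because any $\mathbf{W}_{i,s}$ with $s\le t$ is measurable with respect to the information preceding the innovations that compose $e_{i,t+h}^{(h)}$, it also kills $\sum_{t\in\mathcal{T}_b^h}\mathbb{E}[\bar{\mathbf{W}}_{i,a}e_{i,t+h}^{(h)}]$ \emph{exactly}. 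What survives is the genuinely forward-looking cross-block term $\sum_{t\in\mathcal{T}_a^h}\mathbb{E}[\bar{\mathbf{W}}_{i,b}e_{i,t+h}^{(h)}]$, in which the second-block average correlates with the error of a first-block regression. This is exactly where Lemma~\ref{lem:corr_b} enters: the causal, geometrically decaying $\mathrm{MA}(\infty)$ representation of $\mathbf{w}_{i,t}$ guaranteed by Assumption~\ref{assu:indep_general}(b) makes $\mathbb{E}[\mathbf{W}_{i,s}e_{i,t+h}^{(h)}]$ decay geometrically in $s-t$, so the double sum over the two blocks concentrates on the $O(1)$ boundary pairs, and division by the block length $T_h/2$ yields $\sum_{t\in\mathcal{T}_a^h}\mathbb{E}[\bar{\mathbf{W}}_{i,b}e_{i,t+h}^{(h)}]=O(1/T)$ with an $h$-dependent constant reflecting the $h$-step iteration of the VAR error.

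Consequently the per-unit mean is $O(1/T^2)$, the scaled bias is $O(\sqrt{NT_h}/T^2)=O(\sqrt{N/T^3})$, and it vanishes precisely under $N/T^3\to0$. This rate requirement—rather than the $N/T\to c$ of Proposition~\ref{thm:biasFEAR1}—is the price of second-order, not exact, bias cancellation, and I expect the hardest and most paper-specific step to be the $h$-dependent geometric-decay bound on the cross-block correlation furnished by Lemma~\ref{lem:corr_b}; the inverse-Hessian replacement and the assembly into $\mathcal{N}(\boldsymbol{0},\mathbf{Q}^{-1}\mathbf{R}\mathbf{Q}^{-1})$ are then routine.
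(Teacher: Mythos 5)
Your proposal is correct and follows essentially the same route as the paper's proof: the same inverse-Hessian-times-score decomposition collapsing $2\boldsymbol{\zeta}-\tfrac12(\boldsymbol{\zeta}_a+\boldsymbol{\zeta}_b)$ into the $\mathbf{d}_{i,t}^{*}$-weighted score, the same observation that weak exogeneity annihilates all mean terms except the cross-block term $\sum_{t\in\mathcal{T}_a^h}\mathbb{E}[\bar{\mathbf{W}}_{i,b}e_{i,t+h}^{(h)}]$ controlled by Lemma~\ref{lem:corr_b}, and the same $O(h N^{1/2}/T_h^{3/2})$ bound on the scaled bias under $N/T^3\to0$. The only cosmetic difference is that you substitute $\mathbf{Q}^{-1}$ for the sample Hessians up front, whereas the paper keeps $\widehat{\mathbf{Q}}^{-1}$ on the leading term and absorbs the differences $(\widehat{\mathbf{Q}}^{-1}-\widehat{\mathbf{Q}}_k^{-1})\boldsymbol{\zeta}_k/2$ into $o_p(1)$ remainders.
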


The asymptotic distribution centered around zero allows us to invoke
the common procedure for inference. To construct a feasible $t$-statistic,
we compute the 
\[
\widehat{\mathbf{R}}=(NT_{h})^{-1}\sum_{i\in[N]}\sum_{t,s\in\mathcal{T}^{h}}\mathbf{d}_{i,t}^{*}\mathbf{d}_{is}^{*\top}\hat{e}_{i,t+h}^{(h){\rm spj}}\hat{e}_{i,s+h}^{(h){\rm spj}},
\]
where $\hat{e}_{i,t+h}^{(h){\rm spj}}=\tilde{y}_{i,t+h}-\tilde{\textbf{\ensuremath{\mathbf{W}}}}_{i,t}^{\top}\hat{\boldsymbol{\theta}}_{h}^{{\rm spj}}$
is the SPJ's estimated residual. Given some standard assumptions, $\widehat{\mathbf{R}}$
consistently estimates the individual-clustered variance $\mathbf{R}$.
The standard practice of statistical inference based on the $t$-statistic
is asymptotically valid.

The condition $N/T^3 \to 0$ for SPJ accommodates the leading case of $N/T \to c \in (0, \infty)$ in macro-finance applications. While not designed for large-$N$-small-$T$ panels, SPJ effectively eliminates bias and ensures valid inference under moderate $N/T$ ratios, as evidenced by our numerical and empirical results.
Real data has a finite number of observations. 
As a rule of thumb for empirical applications, we recommend our method for panels with $T\geq 30$ and $N/T \leq 10$.

\begin{rem}\label{rem: extensions}
The theory in this section covers the case with cross-sectional FE
and one-way clustered standard error only. There are many alternative
specifications in empirical applications. For example, researchers
may want to allow cross-sectional correlation in the standard error \citep{pesaran2015testing,juodis2022incidental},
to add time fixed effects to control temporal heterogeneity, and to
accommodate unbalanced panel data. To widen the applicability of the
SPJ approach, Appendix B.2 discusses the two-way
clustered standard error, the \citet{driscoll1998consistent} covariance matrix estimator, the two-way fixed effects, and the observation-splitting
procedure for unbalanced panels.
\end{rem}

\begin{rem}\label{rem: more split} 
Regarding the way of data splitting, \citet[Theorem 3.1]{Dhaene2015} point out that partitioning the full panel into two or more sub-panels over the time dimension and assigning proper weights to them can remove the Nickell bias and keep the same asymptotic variance. The half-panel splitting is the simplest and most intuitive implementation.
\end{rem}

\begin{rem}\label{rem: bias-variance}

Since our theory is based on asymptotics, we can compare the estimators by referring to features of the respective asymptotic distributions. 
\citet{Dhaene2015} shows that FE and SPJ have an asymptotically normal distribution with the same variance. However, the asymptotic normal distribution of the SPJ is centered at 0, whereas the location of that of FE is non-zero.\footnote{\label{footnote: mse} According to (\ref{eq:FE error order}), under the prototype model the asymptotic MSE of FE is ${\rm Asym.MSE}(\hat\beta^{(h){\rm fe}}) = \left[\beta^{(0)}\sigma_{u_x}^2f_{T,h}(\rho)/(T_h s_x^2)\right]^2 + \sigma^2_{xe,h}/(\sigma_x^4\cdot NT_h)$. In contrast, ${\rm Asym.MSE}(\hat\beta^{(h){\rm spj}}) =  \sigma^2_{xe,h}/(\sigma_x^4\cdot NT_h)$ , since SPJ is asymptotically centered at zero with the same asymptotic variance as FE. Therefore, ${\rm Asym.MSE}(\hat\beta^{(h){\rm spj}}) < {\rm Asym.MSE}(\hat\beta^{(h){\rm fe}})$ whenever $\beta^{(0)}\neq0$.  This fact is corroborated by the observations that the finite-sample RMSE of FE is larger than that of SPJ in all our simulation exercises.
} 
As a result, SPJ strictly dominates FE in terms of the MSE of the asymptotic distribution, which equals the variance plus the squared bias.
In finite sample though, we may observe that the confidence interval of SPJ is wider than that of FE due to the sampling error stemming from each of the half panels. 

\end{rem}

\subsection{Simulations} \label{sec:Simulations}

To illustrate the presence of Nickell bias and the finite sample performance
of SPJ, we conduct Monte Carlo simulation exercises based on the prototype model
of the simple form as in Equations (\ref{eq:DGP}) and (\ref{eq:AR1}). We
generate the innovations $\varepsilon_{i,t}$ and $u_{i,t}$ from independent
$\mathcal{N}(0,1)$, and the fixed effect is generated by 
$\mu_{i}^{(0)y}=0.2\sqrt{T}\bar{x}_{i}+\xi_i
$
where $\xi_i\sim\mathcal{N}(0,1)$ is independent of all other variables.
For the contemporaneous connection between $y_{i,t+1}$ and $x_{i,t+1}$, 
we specify $\beta^{(0)}= -0.6$, where the negative relationship is 
motivated by the impact of financial crises on the real economy.
The AR(1) coefficient
$\rho\in\{0,0.2,0.5,0.8\}$ varies the persistence of the predictor.
We consider the sample sizes $(N,T)\in\{(30,60),(30,120),(50,120)\}$
in line with the empirical examples. All simulation results are produced
by 1000 independent replications.

Proposition \ref{thm:biasFEAR1} offers the closed-form formula of the bias for 
the AR(1) specification of $x$.
It provides an ``oracle'' debiased (DB) estimator (See Appendix
A.4 for details) to take advantage of the closed-form
formula, which is based on the oracle of correct specification of AR(1). It will be compared with the plain FE and the SPJ estimators that keep an agnostic attitude about the dynamics of $x$.

Figure \ref{fig:IRF_simul} reports the average of estimated IRFs over the 1000 replications and compares them with the true IRFs. 
With $\beta^{(0)}<0$, the true IRFs are negative and decay to zero exponentially fast. 
The FE estimator exhibits an upward bias, 
indicating a positive Nickell bias with the opposite sign of the true IRFs. 
In other words, the FE estimator underestimates the negative effects of $x_{i,t}$ on $y_{i,t+h}$. Moreover, the bias worsens as the horizon $h$ increases. This finding is predicted by Proposition \ref{thm:biasFEAR1} and discussed in Remark \ref{rem:attentuation}.

Regarding the influence of $\rho$, FE's bias is relatively small when $\rho=0$ or 0.2.
As $\rho$ increases to 0.5 and 0.8, the bias becomes more pronounced. 
The bias is mitigated when $T$ grows from $60$ to $120$. When $N$ grows from $30$ to $50$, the bias becomes larger.
In comparison, the oracle DB estimator, as well as our recommended SPJ estimator, is very close to the true IRF. 

\par When it comes to the estimation errors measured by root-mean-square
error (RMSE), Figure \ref{fig:RMSE_simul} shows that the FE estimator
produces the largest error among the three methods in all cases, while
the oracle DB estimator yields the smallest RMSE. RMSEs of all the
three estimators get larger as $\rho$ gets closer to 1.
The recommended method, SPJ,
again exhibits performance similar to that of the oracle DB estimator in terms of RMSEs.

Figure \ref{fig:coverage} plots the empirical coverage 
probabilities of the confidence intervals (CIs) by inverting the $t$-statistic. 
The nominal 95\% probability
is marked by the horizontal dashed line. 
For each $(N,T)$
and $\rho$, the coverage rate of CIs from the FE estimator is close
to the nominal probability when $h=0$, while the bias emerges when $h>0$
and the coverage rate falls short of 0.95 as $h$ grows. Such Nickell
bias is present even though $\rho=0$ when the IRF equals zero as
$h>0$, and the performance further deteriorates as $x_{i,t}$ gets more persistent under a larger $\rho$.
When $N=30,$ the Nickell bias is alleviated as $T$ grows from 60
to 120, while it materializes again as $N$ increases to $50.$ These
numerical findings echo the closed-form expression of the bias in Proposition
\ref{thm:biasFEAR1}. The CIs from the DB estimator, as an oracle
estimator when the closed-form expression of the Nickell bias is known, have
coverage probabilities very close to 95\% in all cases. 
The SPJ estimator closely resembles the oracle DB estimator,
except for the slight deviation in the case of a small sample size $(N,T)=(30,60)$
and $\rho=0.8$; here SPJ still substantially mitigates the severe
bias from the FE. To summarize, the simulation results provide a clear picture of the
bias issue for the FE estimator in panel LP, and support the
practical values of the SPJ estimator in bias correction. 

If \label{para: bias in inference} statistical inference is of key research interest, Figure \ref{fig:coverage} shows that the Nickell bias in FE causes substantial under-coverage of the confidence interval (or equivalently, excessively high rejection rate in the $t$-test). Such distortion violates the principle of statistical inference and results in misleading empirical findings. Therefore, the Nickell bias must be removed in statistical inference, and SPJ is preferred to FE in this sense.

\begin{figure}
\begin{centering}
\includegraphics[width=1\textwidth]{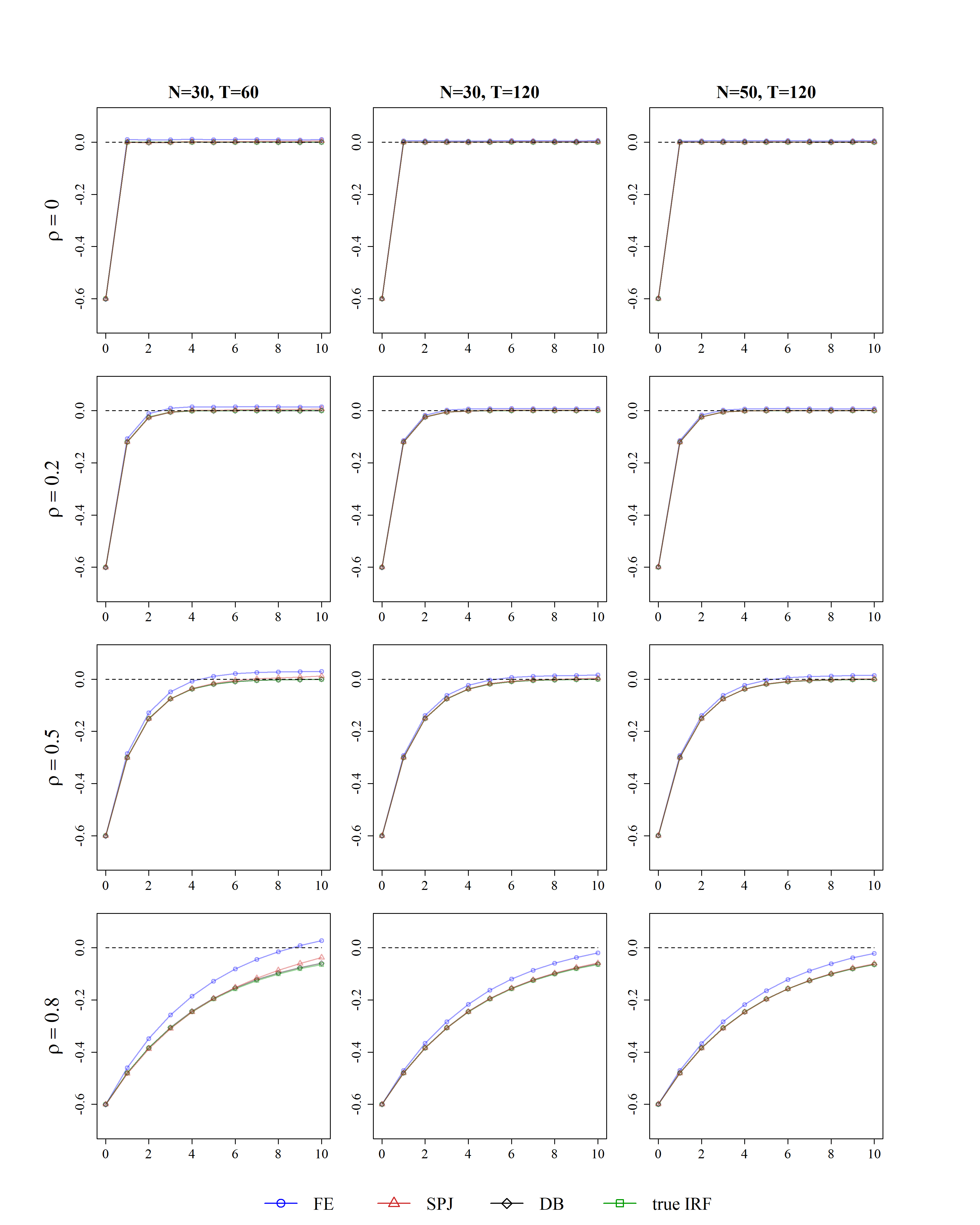}
\end{centering}
\footnotesize Note: In this figure SPJ (red), DB (black) and the true IRF (green) mostly overlap.
\caption{\label{fig:IRF_simul} Estimated IRFs Averaged Over Replications}
\end{figure}

\begin{figure}
\begin{centering}
\includegraphics[width=1\textwidth]{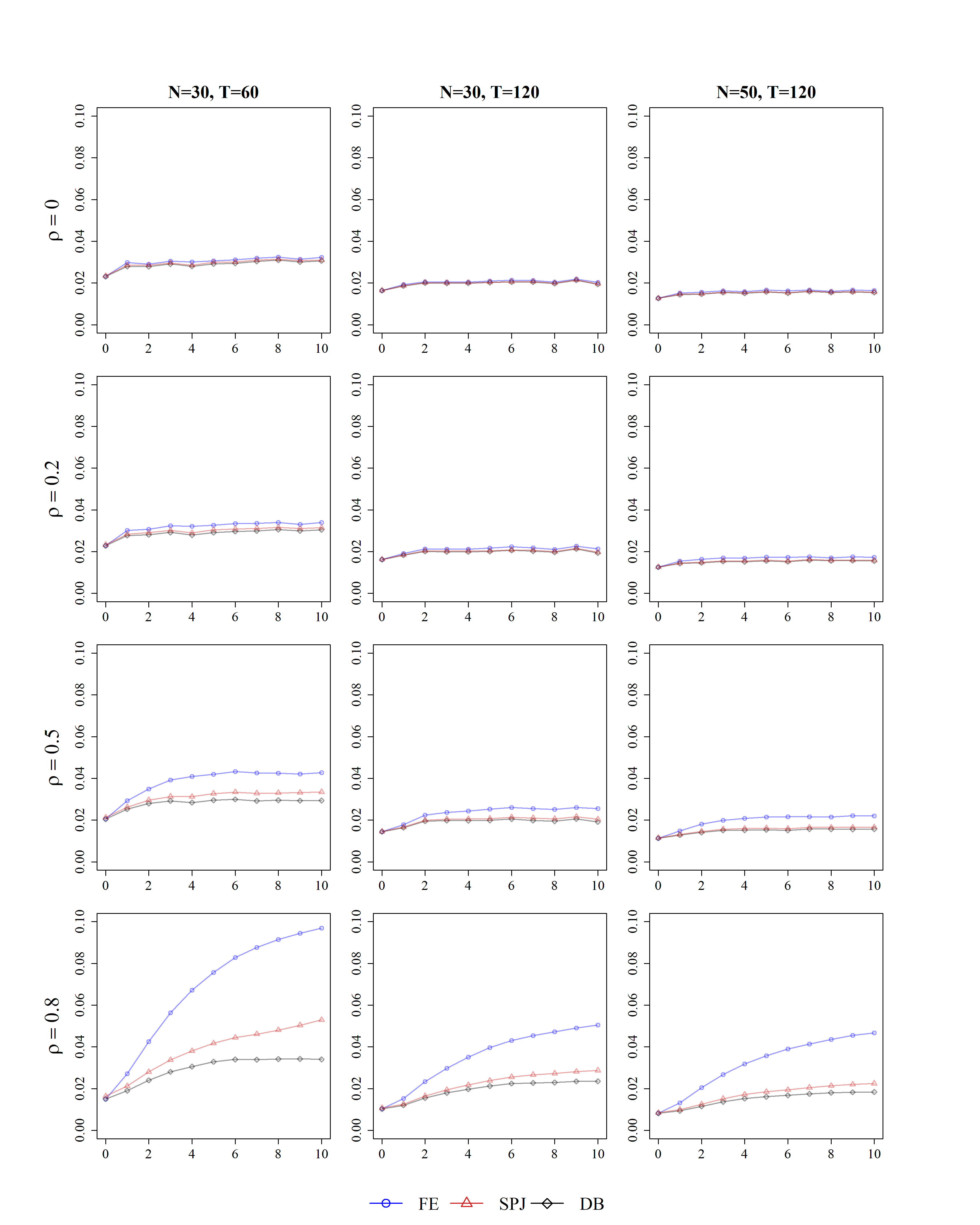}
\end{centering}
\caption{\label{fig:RMSE_simul} RMSEs of the Three Estimators}
\end{figure}

\begin{figure}
\begin{centering}
\includegraphics[width=1\textwidth]{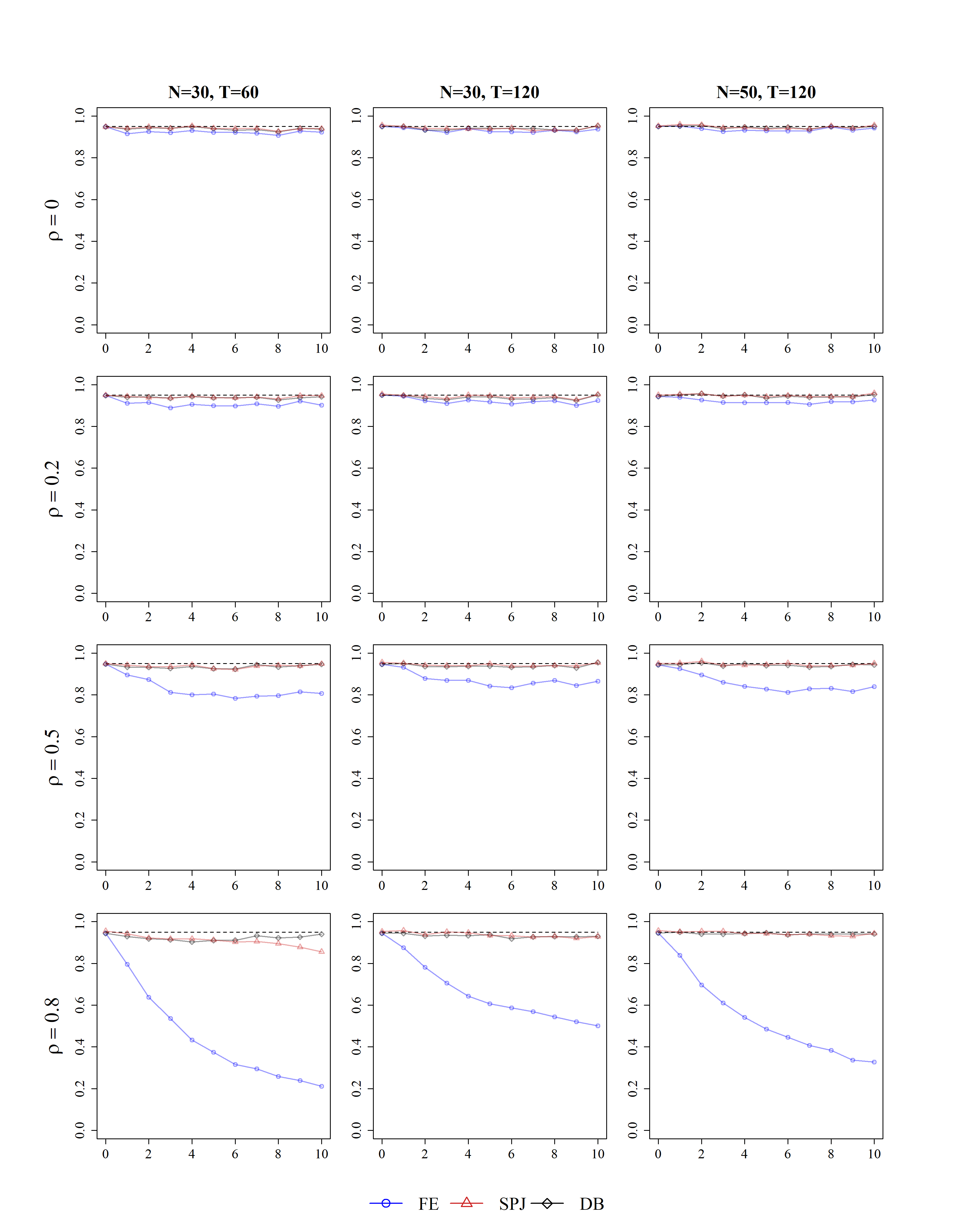}
\end{centering}
\footnotesize Note: The nominal level of 0.95 is marked by the horizontal
dashed line. 

\caption{\label{fig:coverage} Coverage Probability of the Confidence Interval Based on $t$-Statistic  }
\end{figure}

\begin{rem} \label{rem: add simul}
\par Besides the prototype model of Equations (\ref{eq:DGP}) and (\ref{eq:AR1}), 
real-data applications involve numerous specifications, such as cross-sectional correlations, the lagged dependent variable as a control for the dynamics of $y_{i,t}$, the time fixed effect, and the change of outcome $\Delta_h y_{i,t+h} = y_{i,t+h} - y_{i,t}$ as the dependent variable. These variations of panel LP are considered in our empirical applications in Section \ref{sec:Empirical-Application}. In addition, empirically-oriented simulations are useful for examining the robustness of SPJ to the structures of real data that deviate from our prototype model. We therefore conduct additional simulations to further justify the use of SPJ. To save space, we detail these additional simulation results in Appendix C.
These extensive simulations witness the robustness of SPJ under various settings, thereby preserving the advantages of LP in panel data.
\end{rem}

\begin{rem}
    In Appendices C.1 and C.3, we compare SPJ with other alternative methods that remove the Nickell bias, including the approximate bias correction \citep{herbst2021bias}, the likelihood estimator \citep{alvarez2022robust}, and the differenced GMM estimator \citep{arellano1991some}. The approximate bias correction and the likelihood estimator can be more efficient in finite samples, while they highly depend on analytical formulas that require case-by-case derivations. The GMM estimator suffers from weak IV issues when the regressor becomes moderately persistent, thereby exhibiting unstable performance. In contrast, our SPJ is easily implemented and shows robust performance. 
    The validity of weak-IV-robust estimators is currently unknown for panel LP; we leave it for future studies. 
\end{rem}

\section{Revisiting the Aftermath of Financial Crises}\label{sec:Empirical-Application}

In this section, we apply our method to the four renowned empirical studies on macro-finance linkage as mentioned in the Introduction. They all center on the aftermath of financial crises on output, but each of them places a distinct emphasis
on different financial shocks. 
\citet{romer2017new} design a new financial crisis chronology to discuss the responses of output and unemployment following the overall financial distress. \citet{baron2021banking} explore the role of banking crises in the credit crunch and output reduction  \citep{diamond1983bank,bernanke2018real}.
The third one and its related studies \citep{mian2017household,mian2010household}
highlight the rising household debts in depressing output in the mid-run. The fourth study \citet{cerra2008growth} examines the persistent damage of currency crises on national economies. These studies employ cross-country panel data and adopt the panel LP regression Eq.(\ref{eq:pred_h}):
$y_{i,t+h}$ denotes some measure of economic output,
and the regressor vector $\mathbf{W}_{i,t}$ includes a key variable of interest $x_{i,t}^{(1)}$---a certain measure of financial shocks.\footnote{The first three studies adopt the panel LP method, and the fourth study \citet{cerra2008growth} adopts the dynamic panel with distributed lag model, which also suffers the Nickell bias. The FE estimates are closely aligned with their estimation results for the currency crises, which will be discussed later in detail.} We revisit these works and contribute a methodological edge to the literature on financial crises.

\subsection{Financial Distress: \texorpdfstring{\citet{romer2017new}}{Lg} \label{subsec:rr}}

Crisis chronology can be traced back to the records
of \citet{caprio1996bank} about bank crisis events in the 1990s.
Thereafter, \citet{reinhart2009aftermath},
\citet{schularick2012credit}, \citet{laeven2013systemic}, and \citet{jorda2013credit} have adopted more precise subjective standards and quantitative data to measure financial crises and evaluate the associated economic losses. In a comprehensive study, \citet{romer2017new} (RR, henceforth) create a narrative semiannual measure of financial distress for 24 advanced countries from 1967 to 2012, based on the contemporaneous narrative accounts of country conditions on the disruptions to credit supply in the OECD Economic Outlook. To capture the variation in financial disruption across countries and time periods, they classify financial distress into five categories: credit disruption, minor crisis, moderate crisis, big crisis, and extreme crisis, and each category is further broken into minus, regular, and plus. Thus, their new index of financial distress has a scale of 16, where 0 represents no financial distress, and a higher value indicates a more severe financial crisis.
Based on this novel measure of general financial distress, they use FE to estimate panel LP and find that the average decline in output after a financial crisis is moderate in size, though persistent and statistically significant.

We revisit RR's open-access semi-annual dataset, an unbalanced panel of 24 OECD countries.
RR construct the financial distress index from 1967 to 2012, but between 1967 and 1979 only Germany had a ``Credit disruption (regular)'' in 1974. Our analysis restricts the time period from 1980 to 2012; this sample adjustment barely changes their estimation results.

RR's panel LP specification is
\begin{equation}
y_{i,t+h}=\mu_{i}^{(h)y}+g_{t}^{(h)}+\beta^{(h)}x_{i,t}^{{\rm FD}}+\sum_{j=1}^{4}\tau^{(h)}_{j}x_{i,t-j}^{{\rm FD}}+\sum_{k=1}^{4}\eta^{(h)}_{k}y_{i,t-k}+e^{(h)}_{i,t+h}, \label{eq:rr}
\end{equation}
for $h=0,1,...,10$,\footnote{
The only difference from Eq.(\ref{eq:pred_h}) is the presence of the time fixed-effect $g_t^{(h)y}$; See Sections B.2.3 and C.3 for such an extension.}
where the dependent variable $y_{i,t+h}$ is the logarithmic real GDP or the unemployment rate, and 
$x_{i,t}^{{\rm FD}}$ is their new index on financial distress. $\mu_{i}^{(h)y}$ and $g_{t}^{(h)}$ indicate country and time fixed effects for each horizon $h$. Moreover, four lags of the dependent variables and the financial distress index are included as control variables. 
The linear coefficient $\beta^{(h)}$ is the IRF of the economic activities in period $h$ ahead of financial shocks occurring at time $t$, which corresponds to a minor level shift in financial distress as RR defines. To obtain a typical shock in financial distress, RR multiplies the estimates of $\beta^{(h)}$ by 7, which corresponds to a ``Moderate crisis (minus)'' financial crisis.

Figure \ref{fig.rr_f4}'s panels (A) and (B) show the IRFs of the real GDP and the unemployment rate to a typical financial shock, respectively, estimated by RR's FE and our recommended SPJ.
SPJ confirms RR's finding that output falls and the unemployment rate rises significantly and persistently following the financial crisis, but its impulse responses exhibit considerable differences from those of the FE estimates, particularly for the long horizons. 

Panel (A) shows that the immediate aftermath of a moderate crisis is a fall in GDP of 2.2\% based on the SPJ estimate, which is close to the FE estimate. However, the differences between the SPJ and the FE estimates become more pronounced as the horizon $h$ rises. According to SPJ, the decline in output grows substantially 3.5 years after the financial shock and peaks at 6.3\%, which is about 16\% higher than the FE estimates. The differences between the two estimates are also persistent. After 5 years ($h=10$), the decline in output based on SPJ returns to about 5\%, which remains 28\% larger in magnitude than the corresponding FE. 

Consistent with the responses of real GDP, Panel (B) suggests that the FE estimator also tends to underestimate the aftermath of a moderate financial crisis on the unemployment rate. A financial shock raises the unemployment rate steadily. As with GDP, the unemployment rate continued to increase through 3.5 years following the impulse, peaking at 2.6 percentage points based on the SPJ estimate (24\% higher than the FE estimate), and after that, the differences between the two estimates remain.

\begin{figure}[ht]
\begin{centering}
\includegraphics[width=1\textwidth]{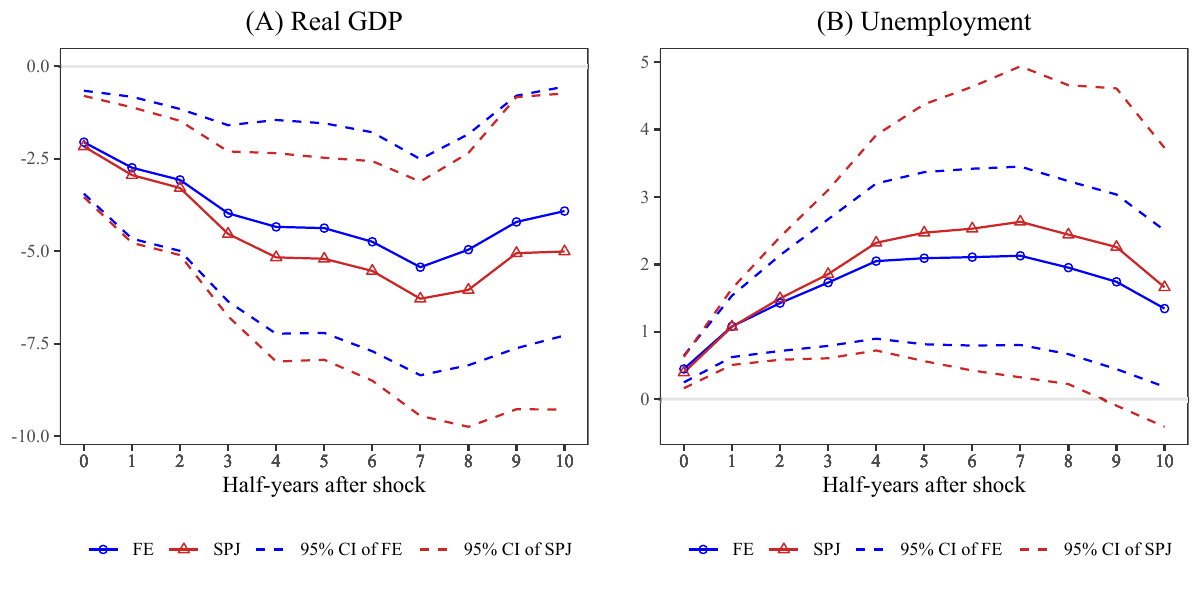} 
\end{centering}

\footnotesize Note: Panel (A) and Panel (B) present the impulse responses of the logarithmic  real GDP and the unemployment rate to financial distress, respectively. The impulse responses are estimated from Model (\ref{eq:rr}) and multiplied by 7 to indicate the impact of a “Moderate
crisis (minus)” financial crisis. The blue solid lines are the FE estimates as in RR, while the red solid lines indicate the SPJ estimates. The dashed lines show the 95\% confidence intervals based on standard errors clustered on country.

\caption{Impulse Responses of Real GDP and the Unemployment Rate to Financial Distress}
\label{fig.rr_f4}
\end{figure}

Overall, FE tends to underestimate the IRFs, which is likely caused in part by the persistence of financial distress. 
To check the persistence of the financial distress index, we adopt a simple panel AR(1) model:
$x_{i,t}^{{\rm FD}}=\mu_{i}+g_{t}+\rho x_{i,t-1}^{{\rm FD}}+e_{i,t},$ 
where $x_{i,t}^{{\rm FD}}$ is the financial distress index, and $\mu_{i}$ and $g_{t}$ denote country and year fixed effects. This regression yields the autoregressive coefficient \label{para: RR rho} $\hat{\rho}= 0.836,$ 
 $(s.e.=0.031, R^2= 0.8177)$, comparable to $\rho = 0.8$ in our simulation. Thus, the financial distress index is persistent, which may play a role in the differences between the FE and SPJ estimates.
 Though each of these two estimators falls into the 95\% confidence interval of the other estimator, 
it is still important to correct the Nickell bias in FE. As demonstrated by the simulation results in Section \ref{sec:Simulations}, the coverage of confidence intervals of the FE estimates can be severely distorted (Figure \ref{fig:coverage}) in the presence of Nickell bias (Figure \ref{fig:IRF_simul}), and the conventional hypothesis testing for the null hypothesis $\mathbb{H}_0:\beta^{(h)}=0$ using the $t$ statistics by FE is invalid.

\subsection{Banking Crises: \texorpdfstring{\citet{baron2021banking}}{Lg}}
\label{subsec: Baron2021} 
The banking system is vital to the modern financial system, and a vast literature has studied the financial intermediary of banks and how banking crises led to a sustained decline in economic activities \citep{diamond1983bank,calomiris2003fundamentals, gertler2010financial, brunnermeier2014macroeconomic, rampini2019financial}. Ben Bernake argues that bank runs played a critical role in the Great Depression of the 1930s, and the collapse of Lehman Brothers triggered global financial panics, which eventually turned into a worldwide financial crisis during 2007--2008 and the prolonged Great Recession \citep{bernanke1983nonmonetary, bernanke2018real}. The recent 2023 meltdown of Silicon Valley Bank, the second largest failure of a financial institution in U.S. history, again spurred worries about the global financial markets.

\citet{baron2021banking} (BVX, henceforth) study the aggregate output loss associated with bank crises.
Different from the traditional approach that takes banking panics as banking crises, BVX argue that a large decline in bank equity return is the prerequisite for banking crises.\footnote{Many scholars have emphasized the criticality of bank equity. \citet{gertler2010financial} introduce financial intermediaries into their model, showing that disruptions in financial intermediation depress aggregate economic activity. \citet{he2013intermediary} find the asymmetry of capital market risk premiums during banking crises, that is, when bank equity is low, bank losses have a significant impact on the risk premium, but when bank equity is high, it has little effect. \citet{brunnermeier2014macroeconomic} suggest that a decline in bank assets through the price channel creates greater risk as the economy moves away from a steady state. \citet{rampini2019financial} provide a dynamic model including bank net assets and find that low liquidity in banks will lead to more severe and prolonged recessions during banking crises.} 
Specifically, they construct a new data set of bank equity return for 46 advanced and emerging economies over 1870--2016, and define a bank crisis as a bank equity crash of more than 30\%. 
Their FE estimates find that bank equity crashes predict a sizable and long-lasting decline in future real GDP.

The baseline LP specification in BVX is
\begin{equation}
\begin{split}\Delta_{h}y_{i,t+h}=\mu_{i}^{(h)y} & +\beta^{(h)\mathrm{B}}x_{i,t}^{{\rm B}}+\beta^{(h)\mathrm{N}}x_{i,t}^{{\rm N}}+\sum_{j=1}^{3}\left(\tau^{(h)\mathrm{B}}_{j}x_{i,t-j}^{{\rm B}}+\tau^{(h)\mathrm{N}}_{j}x_{i,t-j}^{{\rm N}}\right)\\
 & +\sum_{k=0}^{3}\eta^{(h)}_{k}\Delta y_{i,t-k}+\sum_{\ell=0}^{3}\zeta^{(h)}_{\ell}\Delta y_{i,t-\ell}^{{\rm credit}}+e^{(h)}_{i,t+h}
\end{split}
\label{eq:bvx_t1_y}
\end{equation}
for $h=1,2,...,6$, where the dependent variable $\Delta_{h}y_{i,t+h}$
is the change in the logarithm of real GDP from year $t$ to $t+h$;
to be consistent with the other two empirical studies in this section, we multiply the change by 100 to represent percentages. 
The regressor $x_{i,t}^{{\rm B}}=\mathbf{1}\{r_{i,t}^{{\rm B}}\leq-30\%\}$
is a dummy variable for the bank equity crash when the bank equity index dropped more than 30\%, and similarly
$x_{i,t}^{{\rm N}}=\mathbf{1}\{r_{i,t}^{{\rm N}}\leq-30\%\}$ is a dummy for the nonfinancial equity crash. 
The control variables include three lags of bank and nonfinancial equity crash dummies, 
and the contemporaneous and up to three-year lagged real GDP growth and 
change in credit-to-GDP.

The dependent variable here is the $h$-order
difference $\Delta_{h}y_{i,t+h}$, 
and therefore the slope coefficients $\beta^{(h)\mathrm{B}}$ and $\beta^{(h)\mathrm{N}}$ 
represent cumulative responses. As detailed in Appendix B.1, 
FE under this specification also suffers from the Nickell bias and underestimates the magnitude of the IRFs, and the bias can be corrected by SPJ. 
Figure \ref{fig.bvx_t1_fe}'s Panel (A) shows 
the output declines significantly and persistently after a banking crisis. Compared with SPJ, FE tends to underestimate output loss,
and the gap widens over time.
The SPJ suggests that the cumulative decline in GDP peaks four years after the bank crisis at 4.2\%, while the corresponding output gap is 3.4\% by FE.
Panel (B) reiterates the real output losses following a crash in nonfinancial equity.
In the first three years after the nonfinancial stock crash, the IRF by SPJ is very close to the one by FE, whereas the differences in the two IRFs become visible afterward. FE again tends to underestimate.

\begin{figure}[ht]
\begin{centering}
\includegraphics[width=1\textwidth]{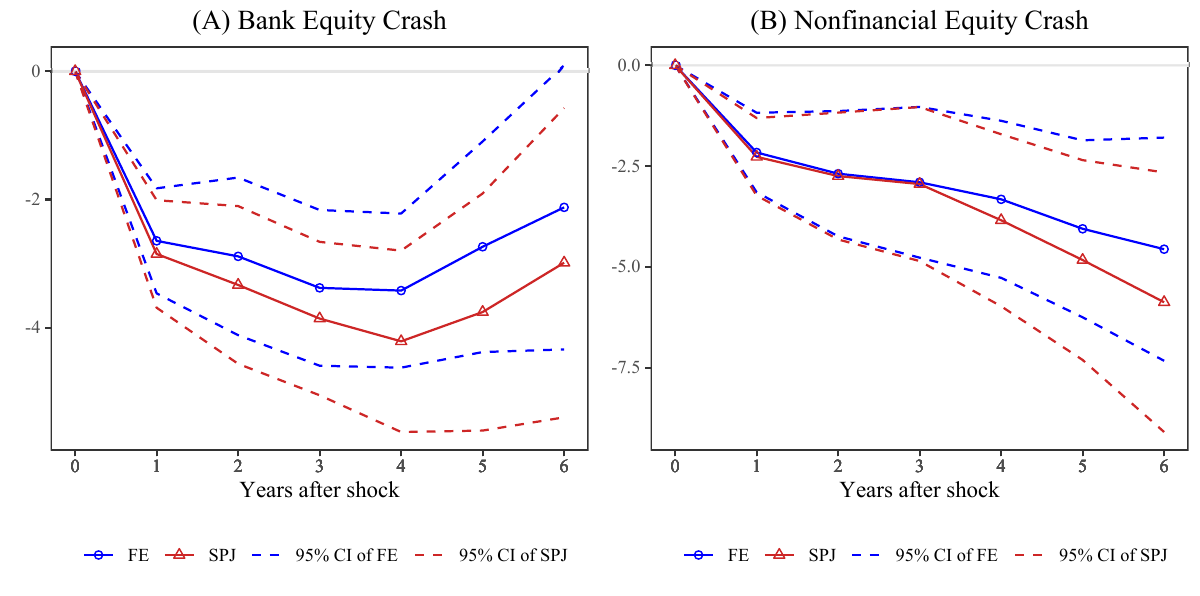} 
\end{centering}
\footnotesize Note: Panel (A) and Panel (B) present the impulse responses of accumulative changes in the logarithmic real GDP to bank and nonfinancial equity crashes respectively, which are estimated from the Model (\ref{eq:bvx_t1_y}). The blue solid lines represent the FE estimates in BVX, while the red solid lines indicate our SPJ. The dashed lines represent 95\% confidence intervals based on standard errors double-clustered on country and year.

\centering{}\caption{Impulse Responses of Real GDP to Bank and Nonfinancial Equity Crashes}
\label{fig.bvx_t1_fe}
\end{figure}

In the baseline specification, we do not control for time fixed effects as they may absorb the global impact of some big banking crises. BVX's FE estimates find that the inclusion of time fixed effects in the panel LP mitigates the output loss. We find similar results by SPJ, and thus the difference between the two estimates slightly declines. On the other hand, the gap between the two estimated IRFs of nonfinancial firms' equity crashes becomes more visible.

\subsection{Household Debt: \texorpdfstring{\citet{mian2017household}}{Lg}}
\label{subsec: Mian2017}

Prior to 2007, the subprime mortgage market in the United States developed rapidly due to the continued prosperity of the U.S. housing market
and the low interest rates. Household debt levels were on the rise, and mortgages were inversely proportional to household income \citep{mian2009consequences}.
The 2007--2008 global financial crisis reminds us of another important source of financial crisis: household debt.
Expansion of household debt boosted consumer demand \citep{mian2011house},
fueled housing speculation \citep{mian2022credit}, and raised
house prices \citep{justiniano2019credit}. 
A bubble cannot last forever. The cooling of the U.S. housing market, especially the hike in short-term interest rates, overloaded repayment burdens upon home buyers, leading to large-scale defaults that eventually triggered the subprime mortgage crisis.\footnote{In addition, the expansion of household debt in most developed economies such as the United States mainly affects the economy through the channel of consumption demand \citep{mian2018finance,mian2020does}. The higher the debt leverage, the greater the responsiveness of household consumption propensity to changes in housing wealth \citep{mian2013household}.} 
The global financial crisis highlights that household debts would not only affect economic fluctuations in the short term but also hamper economic growth in the medium and long term. 

In an influential study, \citet{mian2017household} (MSV, henceforth) explore the dynamics between household debt, nonfinancial firms debt, and economic fluctuations, with a cross-country panel via both VAR and LP. They find that the household debt shock leads to a boom-recession cycle, that is, GDP first increases for 2-3 years but then decreases substantially. More precisely, 
they find that a rise in the household debt to GDP ratio from four years ago to last year predicts a substantial decline in subsequent real GDP growth from the current year onward. By contrast, the output declines immediately following firm debt shocks, whereas recovers gradually afterward. We will compare the baseline panel LP results using the FE estimation in MSV's Figure 2 (p.1770) with the results based on SPJ.

Collecting an unbalanced panel of 30 countries from 1960
to 2012, MSV specify the LP regression as
\begin{equation}
y_{i,t+h}=\mu_{i}^{(h)y}+\beta^{(h)\mathrm{HH}}x_{i,t}^{{\rm HH}}+\beta^{(h)\mathrm{F}}x_{i,t}^{{\rm F}}+\sum_{j=1}^{4}\left(\tau^{(h)\mathrm{HH}}_{j}x_{i,t-j}^{{\rm HH}}+\tau^{(h)\mathrm{F}}_{j}x_{i,t-j}^{{\rm F}}\right)+\sum_{k=0}^{4}\eta^{(h)}_{k}y_{i,t-k}+e^{(h)}_{i,t+h}\label{eq:msv}
\end{equation}
for $h=1,2,...,10$, where $y_{i,t+h}$ is the logarithmic 
real GDP. The key explanatory variables of interest are the household
debt to GDP ratio $x_{i,t}^{{\rm HH}}$ and nonfinancial firm debt
to GDP ratio $x_{i,t}^{{\rm F}}$. Control variables include four
lags of household debt to GDP ratio and nonfinancial firm debt to
GDP ratio, and contemporaneous and up to four-year lagged logarithmic 
real GDP. 

\begin{figure}[ht]
\begin{centering}
\includegraphics[width=1\textwidth]{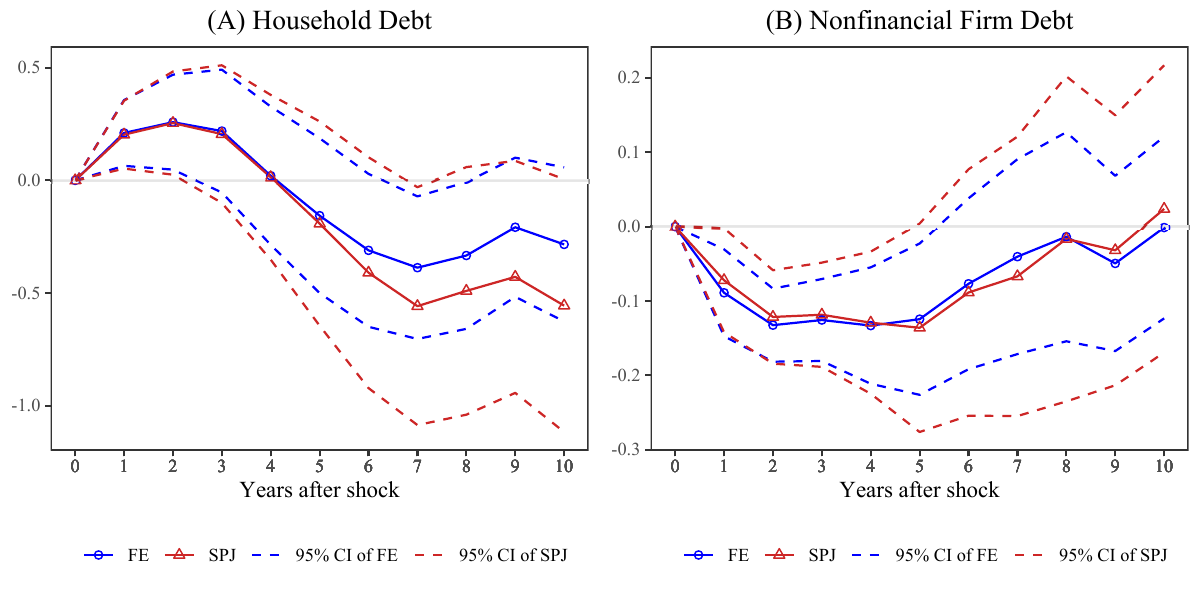} 
\end{centering}
\footnotesize Note: Panel (A) and Panel (B) present the impulse responses of the logarithmic  real GDP to household debt and nonfinancial firm debt, respectively, which are estimated from Model (\ref{eq:msv}). The blue solid lines represent the FE estimates in MSV, while the red solid lines indicate our SPJ. The dashed lines represent 95\% confidence intervals based on standard errors double-clustered on country and year.

\centering{}\caption{Impulse Responses of Real GDP to Household Debt and Nonfinancial Firm
Debt}
\label{fig.msv_fe}
\end{figure}

Our empirical analysis supports and reinforces MSV's main findings that household debt booms predict a sustained decline in output after a temporary rise.
SPJ's IRFs in Figure \ref{fig.msv_fe} maintain the boom-recession cycle, with long-term real GDP declines greater than MSV's FE.
In Panel (A) SPJ resembles FE in the first to three years, indicating a temporary rise in GDP due to the housing market boom. The output has started to decline persistently since the fourth year.
The trough is observed in the seventh year, 
where FE estimates a ten percentage point shock in the household debt to GDP ratio is associated with a 3.9\% drop, in contrast with SPJ's 5.6\% drop.
This is consistent with the literature that finds debt booms may distort resource allocation and human capital accumulation, and thus reduce long-run growth \citep{gopinath2017capital, charles2018housing, borio2016labour}. As a comparison, Panel (B) shows that the IRFs of output to firm debt shocks are close.

\subsection{Currency crises: \texorpdfstring{\citet{cerra2008growth}}{Lg}}  
\label{subsec: CS2008}

In our final application, we examine the impact of currency crises on output loss, a significant concern in international economics due to their severe and enduring effects on national economic output \citep{burnside2001prospective, hong2005recovery, cerra2008growth}. These crises, marked by sharp currency depreciations, can trigger widespread economic disruptions and persistent contractions. For instance, the Asian Financial Crisis of 1997-1998 led to substantial economic contractions in Southeast Asia, notably in Indonesia, Thailand, and South Korea. Similarly, the Argentine Crisis of 2001-2002 resulted in a severe economic depression and rising unemployment.

We used the data set from \cite{cerra2008growth}, which includes panel data on currency crises for 175 countries from 1965 to 2000. An exchange market pressure index (EMPI) is constructed as the sum of the percentage depreciation in the exchange rate and the percentage loss in foreign exchange reserves, allowing for cross-country comparability. A dummy variable for a currency crisis is assigned to a specific year and country if the EMPI is in the upper quartile of all observations within the panel. Our panel local projection (LP) specification is as follows:

\begin{equation}
\Delta_{h}y_{i,t+h}=\mu_{i}^{(h)y}+\beta^{(h)}x_{i,t}+\sum_{j=1}^{4}\tau_{j}^{(h)}x_{i,t-j}+\sum_{k=1}^{4}\eta_{k}^{(h)}\Delta y_{i,t-k}+e_{i,t+h}^{(h)},
\label{eq:cs}
\end{equation}
for $h=1,2,...,10$. Here, the dependent variable $\Delta_{h}y_{i,t+h}$ represents the change in the logarithm of real GDP from year $t-1$ to $t+h$. The regressor $x_{i,t}$ is a dummy variable indicating a currency crisis. Control variables include four lags of the currency crisis dummies and real GDP growth.

\begin{figure}[ht]
\begin{centering}
\includegraphics[width=0.6\textwidth]{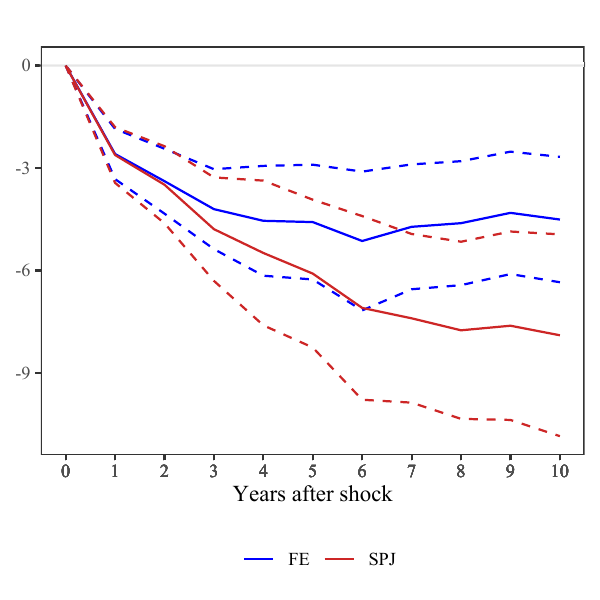}
\footnotesize\flushleft Note: This figure presents the impulse responses of cumulative changes in the logarithm of real GDP to the currency crises. The blue solid lines represent the FE estimates, while the red solid lines indicate our SPJ. The dashed lines represent 95\% confidence intervals based on standard errors.

\centering{}\caption{Impulse Responses of Real GDP to Currency Crises
\label{fig.cs_f3}}
\end{centering}
\end{figure}

Figure \ref{fig.cs_f3} illustrates the cumulative changes in real GDP following currency crises, using estimates from both FE and SPJ. This comparison highlights significant differences in how each method captures the economic impact of such crises.

The FE estimator shows a decline in real GDP immediately after a currency crisis. This decline persists over several years, indicating that currency crises have a lasting negative effect on economic output. However, the FE estimates suggest a moderate initial drop in GDP, which stabilizes relatively quickly. This might provide a somewhat reassuring picture to policymakers, suggesting that while the impact is negative, it is not overwhelmingly severe.

In contrast, the SPJ estimates reveal a more severe and protracted decline in GDP, suggesting that traditional FE estimates may significantly underestimate the true economic cost of currency crises. Quantitatively, the FE estimates show an initial GDP decline of 3\% following a currency crisis, stabilizing around a 4.5\% decrease after five years.\footnote{This quantitative result closely aligns with Figure 3 in \cite{cerra2008growth}, who employ a dynamic panel with distributed lags model, which also suffers from the Nickell bias.} The SPJ estimator, however, indicates a more substantial and persistent output drop, exceeding 7\% after five years. This finding corroborates \cite{hong2005recovery}, who argue that GDP levels remain permanently below their initial trend following currency crises. Furthermore, while the FE estimate shows a peak cumulative output loss of approximately 5.1\% in the sixth year post-crisis, the SPJ estimate suggests a 7.1\% output decline over the same horizon---about 40\% higher than FE. Moreover, SPJ suggests that the peak effect is about 7.9\% cumulative output loss about a decade after the crisis, which is about 75\% higher than that from FE. 

A notable aspect of Figure \ref{fig.cs_f3} is that the FE estimators fall outside the 95\% confidence intervals of the SPJ estimates for long horizons ($h>6$). This statistically significant difference between the SPJ and FE estimators underscores the substantial bias in the FE estimates, emphasizing the importance of robust estimation techniques in capturing the full extent of economic disruptions caused by currency crises. 
Our econometric theory posits that Nickell bias is more severe in panels with a larger number of cross-sectional units $N$ and a smaller number of time periods $T$, which is the case in this example.

\subsection{Summary and Policy Implications\label{subsec: summary empirical}}

Our empirical analysis demonstrates that financial shocks induce substantial output losses, corroborating the consensus that recessions triggered by financial crises are both deeper and more protracted than typical downturns. Moreover, our comparison of FE with SPJ reveals a systematic downward bias in the former, despite the distinct nature of financial crises, as well as variations in measurement and sample coverage. This bias leads to an underestimation of financial crisis costs, usually with large magnitudes in the long horizons. These findings underscore that the Nickell bias is pervasive, and it requires correction for valid inference.

Those case studies are in line with several key findings in our prototype model and simulation exercises.
First, in all four cases, FE exhibits attenuation relative to SPJ, consistent with Remark \ref{rem:attentuation} for the prototype model and Remark B1 in Appendix B.1 for the differenced dependent variable. This result indicates that previous studies on financial crises using FE may underestimate the impact of financial crises. Second, the gaps between SPJ and FE tend to be larger for long horizons than short horizons, indicating that the FE estimator is more likely to underestimate the long-term impact of financial crises. 
Lastly, among all applications, the discrepancies between FE and SPJ are the most statistically significant in \citet{cerra2008growth}, where the relative magnitude of the sample sizes $N/T$ is the largest. This result echoes the bias term in (\ref{eq:biasFEAR1}) of Proposition  \ref{thm:biasFEAR1} that a larger ratio $N/T$ undermines the statistical inference for FE.

Next, we discuss the magnitudes of the Nickell bias in the four cases and its importance for policy design. Table \ref{tab.re_diff} presents the absolute relative differences between the two estimates at the peak period of the output declines following financial shocks. Column (6) shows that the FE estimates of economic contraction following financial distress, banking crises, household debts, and currency crises as key results in those four studies suffer from considerable underestimation, ranging from 16\% to 75\%. Moreover, our results also indicate that the economic recovery from financial crises is slower than the FE estimates suggest. This helps partially explain why the recovery from the 2008 global financial crisis was weak and slow \citep{imf2018world}.

Note that the estimated quantitative effects of financial crises on output vary widely in the literature for various reasons, such as the measurement of financial crisis, the time periods, the sample of countries, and the measurement issues of GDP \citep{sufi2021financial}. 
Our study showcases the important estimation bias arising from the common econometric method used in the literature, and provides better econometric tools for practitioners to achieve more accurate estimates of financial crises on output loss.

Lastly, although the numerical differences between the SPJ and FE estimates are visually small in some empirical applications, they remain highly relevant for policymakers designing effective responses to financial crises. Fiscal stimulus policies are commonly employed to mitigate economic damage following severe financial crises. For example, in response to the subprime mortgage crisis of 2007-2008, the U.S. government under President Obama enacted the American Recovery and Reinvestment Act (ARRA) in 2009, which provided approximately \$831 billion, around 5.8\% of GDP, to avoid further economic deterioration.

Assuming a government spending multiplier of 1.0, a relatively high value according to \cite{ramey2019ten} and \cite{ramey2018government}, the choice of estimation methods between FE and SPJ influences the magnitude of a stimulus package designed to counteract the maximum output loss following a financial crisis.\footnote{\cite{ramey2019ten} conducted a recent survey of fiscal multipliers, finding that government purchase multipliers typically range between 0.6 and 1. A smaller multiplier implies the need for a larger stimulus package to offset the same amount of economic contraction. We select a multiplier of 1.0 for straightforward comparison. However, even under alternative fiscal multipliers, our qualitative conclusion---that the fiscal stimulus is likely insufficient based on FE estimates---remains unchanged.} Take \cite{romer2017new} as an example.  
Following a typical financial crisis, the FE estimates would require a fiscal stimulus of approximately 5.43\% of GDP, which equals the estimated output loss at the peak in Column (5) of Table \ref{tab.re_diff} divided by the fiscal multiplier, while the SPJ estimates recommend a larger stimulus of about 6.29\% of GDP. Consequently, failing to account for Nickell bias---using FE rather than SPJ---would lead policymakers to underestimate the necessary stimulus by roughly 0.85\% of GDP. Our simple calculation indicates that the size of ARRA may be smaller than the desired level, and thus a larger stimulus could have accelerated recovery from the Great Recession.

\begin{table}[ht]
\renewcommand\arraystretch{1.2}
\begin{center}
\caption{Relative Difference in the FE and SPJ estimates of IRFs}\label{tab.re_diff}
\resizebox{1\columnwidth}{!}{
\begin{tabular}{l c c c c c c }
\hline\hline
\multirow{2}*{Paper and Model} & \makecell{Dep. Variable}   & \makecell{Indep. Variable}     & \makecell{$h^{\rm peak}$}  & \makecell{$\widehat{\beta}^{(h)\mathrm{spj}}$}    & \makecell{$\widehat{\beta}^{(h)\mathrm{fe}}$}    & \makecell{Relative Diff.} \\
& (1)   & (2)       & (3)         & \makecell{(4)}   & \makecell{(5)}   & (6) \\
\hline
\multirow{1}*\makecell{\cite{romer2017new}:} \\
\multirow{2}*{Model (\ref{eq:rr})}& Real GDP   & \multirow{2}*{Financial distress}  & 7           & -6.285 & -5.432 & 15.69\% \\
& Unemployment  & & 7           & 2.632  & 2.128  & 23.69\% \\
\multirow{1}*\makecell{\cite{baron2021banking}:} \\
\multirow{2}*{Model (\ref{eq:bvx_t1_y})}& \multirow{2}*{Real GDP growth}   & Bank equity crash  & 4           &-4.213  &-3.422  &23.13\% \\
&    & Nonfinancial equity crash  & 6           &-5.876  &-4.561  &28.84\% \\
\multirow{1}*\makecell{\cite{mian2017household}:} \\
\multirow{2}*{Model (\ref{eq:msv})}& \multirow{2}*{Real GDP}   & Household debt & 7           &-0.558  &-0.387  &44.08\% \\
&    & Nonfinancial firm debt  & 5           &-0.136  &-0.124  &9.30\% \\

\multirow{1}*\makecell{\cite{cerra2008growth}:} \\
Model (\ref{eq:cs})& Real GDP   & Currency crisis & 10   & -7.893  & -4.507  & 75.14\% \\
\hline\hline
\end{tabular}
}
\end{center}
\begin{tablenotes}
\footnotesize
\item Note: Column (3) is the peak period when the economic activities contract the most, and 
      Columns (4) and (5) present the SPJ and FE estimates at this horizon. The period is half-year in RR and yearly in the other three studies.
      The relative difference, defined as $| \widehat{\beta}^{(h)\mathrm{spj}} / \widehat{\beta}^{(h)\mathrm{fe}}-1| \times 100\%$, is shown in Column (6).
\end{tablenotes}
\end{table}

This simple policy calibration applies to the other three cases. To save space, we next focus on the currency crisis, 
which witnesses the largest discrepancy between SPJ and FE estimates.  Based on the results in Table \ref{tab.re_diff}, following a typical currency crisis, FE would call for a fiscal stimulus of approximately 4.51\% of GDP, in comparison to that of 7.89\% of GDP according to SPJ. 
There is a 3.38\% gap to eliminate the potential loss of production at the peak. Developing countries experiencing a currency crisis are usually required to consolidate their fiscal expenditure. Our analysis implies that those fiscal consolidation measures may make their economies more difficult to recover from the currency crisis.  

As a summary, the Nickell bias of FE estimates for financial crises is not merely a statistical concern; it also has significant implications for policymaking and real-world economic outcomes. Ignoring this bias may lead to suboptimal policy decisions that prolong economic hardship and delay recovery from severe crises.

\section{Conclusion}

The Nickell bias for FE is well-known in linear dynamic panel data models where lagged dependent variables serve as regressors. Yet, it has not been studied in empirical macroeconomic applications invoking the panel LP.\footnote{
In panel LP, 
\cite{chong2012harrod}, 
\cite{choi2018oil} and \cite{hobijn2021using} are aware of the explicit Nickell bias and they suggested or used lagged dependent variables as IVs.}
We show that an asymptotic bias emerges in the FE estimator for panel LP even if no lagged dependent variables are present, which is the case in many empirical applications, for example, \citet{chodorow2019macro}, \citet{ottonello2020financial}, and \citet{bahaj2022employment}, to name a few. Furthermore, in the general specification, we illustrate that all predetermined regressors incur Nickell bias. It is imperative to correct this bias for valid asymptotic inference.

The SPJ method is capable of correcting the Nickell bias for general panel LP models. After bias correction, the standard statistical inference based on the $t$-statistic remains valid. By applying our bias correction procedure to four empirical studies on assessing the economic losses associated with financial crises, we show that FE tends to underestimate the economic contraction following financial crises in the mid- and long- runs. Our method extends beyond the literature on financial crises, and can be easily applied in other studies using panel LP.

\bigskip
\bigskip

\singlespacing \bibliographystyle{apalikeyear}
\small
\bibliography{FELP.bib}
\clearpage 
\begin{center}
{\Large Online Appendix for ``Nickell Bias in Panel 
Local Projection: 
 Financial Crises Are Worse Than You Think''} \\ \vspace{1.2em}
{\large Ziwei Mei$^a$, Liugang Sheng$^b$, Zhentao Shi$^b$ \vspace{0.6em}
\\   
 $^a$University of Macau \\  
$^b$The Chinese University of Hong Kong}
\vspace{0.6em}
\end{center}
 \vspace{0.5em}

\setcounter{footnote}{0} \setcounter{table}{0} \setcounter{figure}{0}
\setcounter{equation}{0} 
\setcounter{prop}{0} 
\setcounter{rem}{0} 
\global\long\def\thefootnote{\thesection\arabic{footnote}}%
 
\global\long\def\theequation{\thesection\arabic{equation}}%
 
\global\long\def\thefigure{\thesection\arabic{figure}}%
 
\global\long\def\thetable{\thesection\arabic{table}}%
\global\long\def\theprop{\thesection\arabic{prop}}%
\global\long\def\therem{\thesection\arabic{rem}}%
\appendix
\onehalfspacing 

\normalsize
\bigskip
This appendix consists of three sections. Section \ref{sec:proof}
provides derivations and proofs of the theoretical statements in the main text. 
Section \ref{sec:extensions_theory} enriches the estimation and inference
procedures to allow alternative specifications of the regressions
and dependence structures of the error terms. Section \ref{sec:Simulation-app}
gives additional simulation results under a variety of model specifications with a comprehensive comparison to other popular methods. 

\section{Technical Appendix}\label{sec:proof}\par 

\subsection{Derivation of Eq.(\ref{eq:pred_h})}\label{subsec:pred_h_deriv}

Given the structural equation (\ref{eq:stru_y}), elementary linear
algebra leads to the predictive regression for $y_{i,t+h}$. Notice
that the identification of the panel VAR system demands that $\mathbf{A}_{0,x}$,
the $K\times K$ right-bottom block of $\mathbf{A}_{0}$, must be
of full rank; otherwise the VAR system is not invertible. The block upper triangular feature of $\mathbf{A}_{0}$
allows us to write down its inverse $\mathbf{A}_{0}^{-1}=\begin{pmatrix}1 & \mathbf{a}_{0,yx}^{\prime}\mathbf{A}_{0,x}^{-1}\\
\boldsymbol{0} & \mathbf{A}_{0,x}^{-1}
\end{pmatrix}$ and therefore the reduced-form VAR is
\[
\mathbf{w}_{i,t+1}=\mathbf{A}_{0}^{-1}\left(\boldsymbol{\mu}_{i}^{(0)}+\sum_{s=1}^{p}\mathbf{A}_{s}\mathbf{w}_{i,t+1-s}+\mathbf{u}_{i,t+1}\right)=\boldsymbol{\mu}_{i}^{(1)}+\sum_{s=1}^{p}\mathbf{B}_{s}^{(1)}\mathbf{w}_{i,t+1-s}+\mathbf{e}_{i,t+1}^{(1)},
\]
where $\boldsymbol{\mu}_{i}^{(1)}=\mathbf{A}_{0}^{-1}\boldsymbol{\mu}_{i}^{(0)}$,
$\mathbf{B}_{s}^{(1)}=\mathbf{A}_{0}^{-1}\mathbf{A}_{s}$, and $\mathbf{e}_{i,t+1}^{(1)}=\mathbf{A}_{0}^{-1}\mathbf{u}_{i,t+1}$.
The above expression is the predictive regression for $h=1$. Assumption
\ref{assu:indep_general} (b) implies that all the roots of the determinant
equation 
\[
\mathrm{det}\left(\mathbf{I}_{K+1}-\sum_{s=1}^{p}\mathbf{B}_{s}^{(1)}z^{s}\right)=0
\]
 stay outside of the unit circle on the complex plane to rule out
unit roots and explosive roots. For $h\geq2$, we have 
\[
\mathbf{w}_{i,t+h}=\boldsymbol{\mu}_{i}^{(h)}+\sum_{s=1}^{p}\mathbf{B}_{s}^{(h)}\mathbf{w}_{i,t+1-s}+\mathbf{e}_{i,t+h}^{(h)}
\]
where the slope coefficients are defined recursively as 
\[
\mathbf{B}_{s}^{(h)}=\mathbf{B}_{1}^{(h-1)}\mathbf{B}_{s}^{(1)}+\mathbf{B}_{s+1}^{(h-1)}\cdot\boldsymbol{1}\left\{ s\leq p-1\right\} ,
\]
and the intercept and error term can be written in closed-forms as
\begin{equation}
\boldsymbol{\mu}_{i}^{(h)}=\sum_{s=0}^{h-1}\mathbf{B}_{1}^{(s)}\boldsymbol{\mu}_{i}^{(1)},\ \ \ \mathbf{e}_{i,t+h}^{(h)}=\sum_{s=0}^{h-1}\mathbf{B}_{1}^{(s)}\mathbf{e}_{i,t+h-s}^{(1)}\label{eq:error recursive}
\end{equation}
with $\mathbf{B}_{1}^{(0)}=\mathbf{I}.$ For panel LP, we are interested
in the first equation of the $h$-period-ahead predictive regression
\begin{align*}
y_{i,t+h} & =(1,\boldsymbol{0}^{\prime})\mathbf{w}_{i,t+h}=\mu_{i}^{(h)y}+\sum_{s=1}^{p}\boldsymbol{\theta}_{s}^{(h)\prime}\mathbf{w}_{i,t+1-s}+e_{i,t+h}^{(h)}
\end{align*}
 where $\boldsymbol{\theta}_{s}^{(h)\prime}=(1,\boldsymbol{0}^{\prime})\mathbf{B}_{s}^{(h)}$
is the first row of $\mathbf{B}_{s}^{(h)}$, and $e_{i,t+h}^{(h)}=(1,\boldsymbol{0}^{\prime})\mathbf{e}_{i,t+h}^{(h)}=\sum_{s=0}^{h-1}\boldsymbol{\theta}_{s}^{(h)\prime}\mathbf{e}_{i,t+h-s}^{(1)}$
is the first element of the vector $\mathbf{e}_{i,t+h}^{(h)}$. 

\subsection{Violation of Strict Exogeneity in Panel VAR}\label{subsec:VAR_bias}

The VAR($\infty$) considered by \citet{plagborg2021local} incorporates
as a special case of the VAR($p$) in the main text. VAR($\infty$)
is convenient for the discussion of the population model. Let $\mathbf{A}(z)=\sum_{s=0}^{\infty}\mathbf{A}_{s}z^{s}$
be an infinite-order polynomial for $(1+K)\times(1+K)$ matrices $\mathbf{A}_{s}$,
and the diagonal of $\mathbf{A}_{0}$ is standardized as\textbf{ }1.
Following the notations in (\ref{eq:compact_VAR}), we write the VAR($\infty$)
for a representative individual $i$ as 
\[
\mathbf{A}(\mathbb{L})\mathbf{w}_{i,t}=\boldsymbol{\mu}_{i}^{(0)}+\mathbf{u}_{i,t},
\]
where $\mathbb{L}$ is the lag operator. 

Assume all roots of the determinant equation $\mathrm{det}\left(\mathbf{A}(z)\right)=0$
are outside of the unit cycle, and thus there exists $\mathbf{C}(\mathbb{L})=\mathbf{A}(\mathbb{L})^{-1}$,
where $\mathbf{C}(z)=\sum_{s=0}^{\infty}\mathbf{C}_{s}z^{s}$ is another
infinite-order polynomial with $c_{y,0}=1$. It transforms the VAR($\infty$)
system into a corresponding vector moving average (VMA) system 
\[
\mathbf{w}_{i,t}=\mathbf{C}(\mathbb{L})\left(\boldsymbol{\mu}_{i}^{(0)}+\mathbf{u}_{i,t}\right)=\boldsymbol{\delta}_{i}+\sum_{s=0}^{\infty}\mathbf{C}_{s}\mathbf{u}_{i,t-s}
\]
where $\boldsymbol{\delta}_{i}=\mathbf{C}(1)\boldsymbol{\mu}_{i}^{(0)}$
is the unconditional mean vector. Partition every matrix 
\[
\mathbf{A}_{s}=\begin{pmatrix}a_{y,s} & \mathbf{a}_{yx,s}^{\top}\\
\mathbf{a}_{xy,s} & \mathbf{A}_{x,s}
\end{pmatrix},\ \ \mathbf{C}_{s}=\begin{pmatrix}c_{y,s} & \mathbf{c}_{yx,s}^{\top}\\
\mathbf{c}_{xy,s} & \mathbf{C}_{x,s}
\end{pmatrix}
\]
in a compatible manner with $(u_{i,t+1}^{y},\mathbf{u}_{i,t+1}^{x\prime})^{\prime}$.
Parallel to (\ref{eq:block_VAR}), the Wold-causal order implies $\mathbf{a}_{xy,0}=\boldsymbol{0}$
and leads to $\mathbf{c}_{xy,0}=\boldsymbol{0}$ as well. The first
equation of the VMA($\infty$) is
\[
y_{i,t+h}=\delta_{i}^{y}+\sum_{s=0}^{\infty}c_{y,s}u_{i,t+h-s}^{y}+\sum_{s=0}^{\infty}\mathbf{c}_{yx,s}^{\top}\mathbf{u}_{i,t+h-s}^{x}
\]
for $h\geq2$. 

Let $\mathcal{F}^{i,t}=\sigma\left((\mathbf{w}_{i,s})_{s=-\infty}^{t}\right)=\sigma((\mathbf{u}_{i,s})_{s=-\infty}^{t})$
be the information set from the infinite past up to time $t$, where
$\sigma\left(\cdot\right)$ is the sigma-field generated by the corresponding
random variables. Since $y_{i,t+h}$ is a linear combination of the
innovations, the conditional mean model is 
\begin{equation}
\mathbb{E}\left[y_{i,t+h}|\mathcal{F}^{i,t}\right]=\delta_{i}^{y}+\sum_{s=h}^{\infty}c_{y,s}u_{i,t+h-s}^{y}+\sum_{s=h}^{\infty}\mathbf{c}_{yx,s}^{\top}\mathbf{u}_{i,t+h-s}^{x}.\label{eq:full_reg}
\end{equation}
It follows that the error term against the information up to time
$t$ is
\begin{equation}
e_{i,t+h}^{(h)}=y_{i,t+h}-\mathbb{E}\left[y_{i,t+h}|\mathcal{F}^{i,t}\right]=\sum_{s=0}^{h-1}c_{y,s}u_{i,t+h-s}^{y}+\sum_{s=0}^{h-1}\mathbf{c}_{yx,s}^{\top}\mathbf{u}_{i,t+h-s}^{x}.\label{eq:full_error}
\end{equation}

Let $\mathcal{G}$ by the information set of all included regressors,
to be discussed below. Notice that $c_{y,0}=1$ by normalization.
Whether the local projection regression violates strict exogeneity
is equivalent to check if 
\begin{equation}
\mathbb{E}[e_{i,t+h}^{(h)}|\mathcal{G}]=\mathbb{E}\left[u_{i,t+h}^{y}|\mathcal{G}\right]+\sum_{s=1}^{h-1}c_{y,s}\mathbb{E}\left[u_{i,t+h-s}^{y}|\mathcal{G}\right]+\sum_{s=0}^{h-1}\mathbf{c}_{yx,s}^{\top}\mathbb{E}\left[\mathbf{u}_{i,t+h-s}^{x}|\mathcal{G}\right]\label{eq:cond_E}
\end{equation}
equals 0 or not. 
\begin{enumerate}
\item If $c_{y,s}\neq0$ for some $s\ge1$ ($c_{y,0}=1$ by standardization),
the researcher includes lagged dependent variables in the panel LP
regression we have $\sigma\left((y_{i,s})_{s=-\infty}^{T-h}\right)\subseteq\mathcal{G}$.
As a result, the first term in (\ref{eq:cond_E})
is $\mathbb{E}\left[u_{i,t+h}^{y}|\mathcal{G}\right]=u_{i,t+h}^{y}\neq0$ whenever $t\leq T-2h$,
and strict exogeneity must be violated. This is the explicit Nickell
bias due to lagged dependent variables.
\item If the researcher includes $\mathbf{x}_{i,t}$ in the regression and
thus $\sigma\left((\mathbf{u}_{i,s}^{x})_{s=-\infty}^{T-h}\right)\subseteq\mathcal{G}$.
There are two cases:
\begin{enumerate}
\item If in the true DGP $\mathbf{c}_{yx,s}\neq0$ for some $s\geq0$, then
the third term in the right-most expression of (\ref{eq:cond_E}) whenever $t\leq T-2h$
becomes 
\[
\sum_{s=0}^{h-1}\mathbf{c}_{yx,s}^{\top}\mathbb{E}\left[\mathbf{u}_{i,t+h-s}^{x}|\mathcal{G}\right]=\sum_{s=0}^{h-1}\mathbf{c}_{yx,s}^{\top}\mathbf{u}_{i,t+h-s}^{x}\neq0.
\]
\item If $\mathbf{c}_{xy,s}\neq\boldsymbol{0}$ for some $s\geq1$ ($\mathbf{c}_{xy,0}=\boldsymbol{0}$
is the implication of Wold-causality), then for some $\tau<t$ the
lagged $y_{i,\tau}$ Granger-causes $x_{i,t}$. In this case, again
in (\ref{eq:cond_E}) the first term $\mathbb{E}\left[u_{i,t+h}^{y}|\mathcal{G}\right]\neq0$ whenever $t\leq T-2h$
violates strict exogeneity. 
\end{enumerate}
\end{enumerate}
In summary, Point 1 translates to the familiar explicit
Nickell bias. Point 2(a) has $\mathbf{x}_{i,t}$ as regressors and
it incurs Nickell bias as well; the prototype model in Section \ref{subsec: implicit bias}
is a special case with $c_{y,s}=0$ but $\mathbf{c}_{yx,0}\neq\boldsymbol{0}$,
leading to Nickell bias despite the absence of the lagged
dependent variables. The information of $y_{i,\tau}$ is leaked via
$\mathbf{c}_{xy,s}\neq0$ into the future \textbf{$\mathbf{x}_{i,t}$}
in Point 2(b); the ``static model'' in Example 2 of \citet{chudik2018half}'s
Online Appendix is a special case of it.

When $\{c_{y,s}=0\}_{s=1}^{H-1}$, $\{\mathbf{c}_{xy,s}=\boldsymbol{0}\}_{s=1}^{H-1}$,
and $\{\mathbf{c}_{yx,s}=\boldsymbol{0}\}_{s=0}^{H-1}$, the true
DGP becomes a void model $y_{i,t+h}=\delta_{i}^{y}+u_{i,t+h}^{y}$
and it is completely unpredictable by the past information. But if
this happens, a researcher shall exclude all regressors from the linear
regression in the first place. Given that she has no prior knowledge
about the true regression coefficients, a researcher should worry that
each potentially relevant regressor based on her economic reasoning
is subject to Nickell
bias in panel LP. As a consequence, it is imperative to conduct
bias correction for asymptotically valid inference. 

\subsection{Proof of Proposition \ref{thm:biasFEAR1}}
\label{subsec: proof FEAR1}
We begin with the following decomposition 
\begin{align*}
\sqrt{NT_{h}}\left(\hat{\beta}^{(h){\rm \mathrm{fe}}}-{\beta}^{(h)}\right) & =\dfrac{1}{\sqrt{NT_{h}}s_{x}^{2}}\sum_{i\in[N]}\sum_{t\in\mathcal{T}^{h}}\tilde{x}_{i,t}e_{i,t+h}^{(h)}\\
 & =\frac{1}{\sqrt{N}s_{x}^{2}}\sum_{i\in[N]}\left(\dfrac{1}{\sqrt{T_{h}}}\sum_{t\in\mathcal{T}^{h}}\tilde{x}_{i,t}u_{i,t+h}^{y}+\dfrac{1}{\sqrt{T_{h}}}\beta^{(0)}\sum_{t\in\mathcal{T}^{h}}\sum_{s=0}^{h-1}\rho^{s}\tilde{x}_{i,t}u_{i,t+h-s}^{x}\right).
\end{align*}
The independence between $u_{i,t}^{y}$ and $u_{i,t}^{x}$ immediately
implies the first term $\mathbb{E}\left[T_{h}^{-1/2}\sum_{t\in\mathcal{T}^{h}}\tilde{x}_{i,t}u_{i,t+h}^{y}\right]=0$,
and next we focus on the second term $T_{h}^{-1/2}\beta^{(0)}\sum_{t\in\mathcal{T}^{h}}\sum_{s=0}^{h-1}\rho^{s}\tilde{x}_{i,t}u_{i,t+h-s}^{x}$.
Notice that $\mathbb{E}[x_{i,t}u_{i,t+h-s}^{x}]=0$ for any $0\leq s\leq h-1$
and hence 
\begin{equation}
\sum_{t\in\mathcal{T}^{h}}\mathbb{E}\left[x_{i,t}\sum_{s=0}^{h-1}\rho^{s}u_{i,t+h-s}^{x}\right]=0.\label{eq:cov1}
\end{equation}
By the AR(1) model (\ref{eq:AR1}), we have $x_{i,t}-\mu_i^x\sum_{\ell=0}^{t}\rho^{\ell}=\sum_{\ell=0}^{t}\rho^{\ell}u_{i,t-\ell}^{x}$,
which yields 
\[
\bar{x}_{i}=\dfrac{1}{T_{h}}\sum_{t\in\mathcal{T}^{h}}x_{i,t}=\dfrac{1}{T_{h}}\sum_{r=1}^{T_{h}}\sum_{\ell=0}^{r}\rho^{\ell}u_{i,r-\ell}^{x}+\mu_i^x\dfrac{1}{T_{h}}\sum_{r=1}^{T_{h}}\sum_{\ell=0}^{r}\rho^{\ell}.
\]
For any $s\in[h]$, the independence of $u_{i,t}^{x}$ across $i$
and $t$ gives 
\begin{eqnarray*}
 &  & \sum_{t=0}^{T_{h}}\mathbb{E}\left[\bar{x}_{i}\rho^{s}u_{i,t+h-s}^{x}\right]\\
 & = & \dfrac{\rho^{s}}{T_{h}}\sum_{t=0}^{T_{h}}\sum_{r=1}^{T_{h}}\sum_{\ell=0}^{r}\mathbb{E}\left[u_{i,t+h-s}^{x}\cdot\rho^{\ell}u_{i,r-\ell}^{x}\right]=\dfrac{\rho^{s}}{T_{h}}\sum_{t=0}^{T_{h}}\sum_{r=1}^{T_{h}}\sum_{\ell=0}^{r}\rho^{r-\ell}\mathbb{E}\left[u_{i,t+h-s}^{x}u_{i,\ell}^{x}\right]\\
 & = & \dfrac{\rho^{s}}{T_{h}}\sum_{t=h-s}^{T_{h}}\sum_{r=1}^{T_{h}}\sum_{\ell=0}^{r}\rho^{r-\ell}\mathbb{E}\left[u_{i,t}^{x}u_{i,\ell}^{x}\right]=\sigma_{u_{x}}^{2}\dfrac{\rho^{s}}{T_{h}}\sum_{t=h-s}^{T_{h}}\left(T_{h}-t\right)\rho^{t-\left(h-s\right)}
\end{eqnarray*}
where the last equality follows as $\mathbb{E}\left[u_{i,t}^{x}u_{i,\ell}^{x}\right]=\sigma_{u_{x}}^{2}\boldsymbol{1}\left\{ t=\ell\right\} $.
Define 
\begin{align*}
f_{T,h}(\rho) & :=\sum_{s=0}^{h-1}\sum_{t\in\mathcal{T}^{h}}\mathbb{E}\left[\bar{x}_{i}\rho^{s}u_{i,t+h-s}^{x}\right]=\sum_{s=0}^{h-1}\sum_{t=h-s}^{T_{h}}\left(1-\frac{t}{T_{h}}\right)\rho^{t-h+2s}\\
 & =\frac{\left(T-2h\right)\left(1-\rho^{2}\right)-\left(T-h\right)\rho^{h}\left(1-\rho^{2}\right)+\rho^{T-2h+1}\left(1-\rho^{2h}\right)}{\left(T-h\right)\left(1-\rho\right)^{2}\left(1-\rho^{2}\right)}\\
 & =\frac{\left(1-\rho^{h}\right)}{\left(1-\rho\right)^{2}}-\dfrac{h}{\left(T-h\right)\left(1-\rho\right)^{2}}+\frac{\rho^{T-2h+1}\left(1-\rho^{2h}\right)}{\left(T-h\right)\left(1-\rho\right)^{2}(1-\rho^2)}
\end{align*}
after long but elementary calculations for the third equality. We
deduce 
\[
\dfrac{1}{\sqrt{T_{h}}}\sum_{s=0}^{h-1}\sum_{t\in\mathcal{T}^{h}}\rho^{s}\mathbb{E}\left[\tilde{x}_{i,t}u_{i,t+h-s}^{x}\right]=-\dfrac{1}{\sqrt{T_{h}}}f_{T,h}(\rho).
\]
As a result, 
\begin{equation}
\mathbb{E}\left[\frac{1}{\sqrt{T_{h}}}\sum_{t\in\mathcal{T}^{h}}\tilde{x}_{i,t}e_{i,t+h}^{(h)}+\dfrac{\beta^{(0)}}{\sqrt{T_{h}}}f_{T,h}(\rho)\right]=0\label{eq:Exe}
\end{equation}
for all $i\in[N]$. The asymptotic normality assumed in (\ref{eq:clt_xe_bias})
gives 
\begin{eqnarray*}
 &  & \dfrac{1}{\sigma_{xe,h}\sqrt{NT_{h}}}\sum_{i\in[N]}\sum_{t\in\mathcal{T}^{h}}\left(\tilde{x}_{i,t}e_{i,t+h}^{(h)}+\dfrac{\beta^{(0)}}{\sqrt{T_{h}}}f_{T,h}(\rho)\right)\\
 & = & \dfrac{1}{\sigma_{xe,h}\sqrt{NT_{h}}}\sum_{i\in[N]}\sum_{t\in\mathcal{T}^{h}}\left(\tilde{x}_{i,t}e_{i,t+h}^{(h)}-\mathbb{E}\left(\tilde{x}_{i,t}e_{i,t+h}^{(h)}\right)\right)\stackrel{d}{\to}\mathcal{N}(0,1).
\end{eqnarray*}
Furthermore, the law of large numbers assumed in (\ref{eq:plim_varx})
yields 
\begin{align*}
\sqrt{NT_{h}}\left(\hat{\beta}^{(h){\rm \mathrm{fe}}}-\beta^{(h)}\right)+\frac{\beta^{(0)}\sigma_{u_{x}}^{2}}{s_{x}^{2}}\sqrt{\dfrac{N}{T_{h}}}f_{T,h}(\rho) & \stackrel{d}{\to}\mathcal{N}\left(0,\ \frac{\sigma_{xe,h}^{2}}{\sigma_{x}^{4}}\right)
\end{align*}
as stated in the theorem.

\subsection{Oracle Debiased Estimator \label{subsec:consist}}\par Proposition
\ref{thm:biasFEAR1} shows that under (\ref{eq:DGP}) and (\ref{eq:AR1}),
the Nickell bias in the asymptotic normal  distribution of $\hat{\beta}^{(h){\rm \mathrm{fe}}}$ is $-\frac{\beta^{(0)}\sigma_{u_{x}}^{2}}{s_{x}^{2}}\sqrt{\dfrac{N}{T_{h}}}f_{T,h}(\rho)$.
We thus define the following debiased FE estimator 
\[
\hat{\beta}^{(h){\rm db}}=\hat{\beta}^{(h){\rm \mathrm{fe}}}+\dfrac{\hat{\beta}^{(0)}}{T_{h}s_{x}^{2}}\cdot\hat{\sigma}_{u_{x}}^{2}\cdot f_{T,h}(\hat{\rho})
\]
where $\hat{\beta}^{(0)}$ is the FE estimator for (\ref{eq:DGP}),
$\hat{\rho}$ is the FE estimator of the AR(1) model (\ref{eq:AR1}).
For the variances, $\hat{\sigma}_{u_{x}}^{2}$ is the sum of squared
residuals of the AR(1) regression, and $s_{x}^{2}$ is defined 
in (\ref{eq:plim_varx}).\par We need to further estimate $\sigma_{xe,h}^{2}$
for inference. A consistent estimator is given by 
\begin{align}
\widehat{\sigma}_{xe,h}^{2} & :=\dfrac{1}{NT_{h}}\sum_{i\in[N]}\left[\sum_{t\in\mathcal{T}^{h}}\tilde{x}_{i,t}\widehat{e}_{i,t+h}^{(h){\rm \mathrm{db}}}\right]^{2}\label{eq:sighatxeh}
\end{align}
where $\widehat{e}_{i,t+h}^{(h)\mathrm{db}}=\tilde{y}_{i,t+h}-\widehat{\beta}^{(h)\mathrm{db}}\tilde{x}_{i,t}$$.$
Under standard regularity conditions we can show 
\[
\widehat{\sigma}_{xe,h}^{2}-\dfrac{1}{NT_{h}}\sum_{i\in[N]}\mathbb{E}\left[\sum_{t\in\mathcal{T}^{h}}\text{\ensuremath{\tilde{x}_{i,t}}}e_{i,t+h}^{(h)}\right]^{2}=o_{p}(1).
\]
Besides, we have deduced by (\ref{eq:Exe}) that 
\[
\dfrac{1}{NT_{h}}\sum_{i\in[N]}\left[\mathbb{E}\sum_{t\in\mathcal{T}^{h}}\tilde{x}_{i,t}e_{i,t+h}^{(h)}\right]^{2}=\dfrac{(\beta^{(0)})^{2}\sigma_{u_{x}}^{4}f_{T,h}^{2}(\rho)}{T_{h}}=o_{p}(1)
\]
as $T_{h}\to\infty.$ As a result $\widehat{\sigma}_{xe,h}^{2}-\sigma_{xe,h}^{2}=o_{p}(1)$
by the fact that 
\begin{align*}
\sigma_{xe,h}^{2} & =\lim_{(N,T)\to\infty}\dfrac{1}{NT_{h}}\sum_{i\in[N]}{\rm var}\left[\sum_{t\in\mathcal{T}^{h}}\tilde{x}_{i,t}e_{i,t+h}^{(h)}\right]\\
 & =\lim_{(N,T)\to\infty}\dfrac{1}{NT_{h}}\sum_{i\in[N]}\left\{ \mathbb{E}\left[\sum_{t\in\mathcal{T}^{h}}\tilde{x}_{i,t}e_{i,t+h}^{(h)}\right]^{2}-\left[\mathbb{E}\sum_{t\in\mathcal{T}^{h}}\tilde{x}_{i,t}e_{i,t+h}^{(h)}\right]^{2}\right\} .
\end{align*}
Familiar two-sided, symmetric (around the point estimate) confidence
intervals can be constructed given $\widehat{s}^{(h)\mathrm{db}}=\dfrac{\widehat{\sigma}_{xe,h}}{s_{x}\sqrt{NT_{h}}}.$

\subsection{Proof of Theorem \ref{thm:hj}} We first present a useful
lemma.
\begin{lem}
\label{lem:corr_b} If Assumption \ref{assu:indep_general} holds,
then 
\[
\sup_{i\in[N]}\text{\ensuremath{\Bigg\Vert}}\sum_{t\in\mathcal{T}_{a}^{h}}\mathbb{E}\left(\bar{\mathbf{W}}_{i,b}e_{i,t+h}^{(h)}\right)\ensuremath{\Bigg\Vert}=O\left(\frac{h}{T_{h}}\right)
\]
where $e_{i,t+h}^{(h)}=\sum_{s=0}^{h-1}\boldsymbol{\theta}_{s}^{(h)\prime}\mathbf{e}_{i,t+h-s}^{(1)}$
is the first entry of $\mathbf{e}^{(h)}_{i,t+h}$ defined in (\ref{eq:error recursive}).
\end{lem}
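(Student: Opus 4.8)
The plan is to exploit the VMA($\infty$) representation of the regressors together with the fact that, by (\ref{eq:error recursive}), the predictive error $e_{i,t+h}^{(h)}$ loads only on the ``future'' innovations $\mathbf{u}_{i,t+1},\dots,\mathbf{u}_{i,t+h}$; consequently the cross-block covariance is driven entirely by the leakage of those innovations near the splitting boundary $t\approx T_h/2$. First I would invoke Assumption \ref{assu:indep_general}(b) to write $\mathbf{w}_{i,\ell}=\boldsymbol{\delta}_i+\sum_{r\ge0}\mathbf{C}_r\mathbf{u}_{i,\ell-r}$ with $\lVert\mathbf{C}_r\rVert\le C\lambda^{r}$ for some $\lambda\in(0,1)$ (geometric decay of the reduced-form impulse responses under stationarity), and combine it with the m.d.s.\ property in Condition (a), which renders the innovations serially uncorrelated, to get $\mathbb{E}[\mathbf{w}_{i,\ell}\mathbf{u}_{i,\tau}^{\prime}]=\mathbf{C}_{\ell-\tau}\boldsymbol{\Sigma}_u\cdot\boldsymbol{1}\{\ell\ge\tau\}$.

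Second, using (\ref{eq:error recursive}) I would write $e_{i,t+h}^{(h)}=\sum_{s=0}^{h-1}\mathbf{g}_s^{\prime}\mathbf{u}_{i,t+h-s}$, where $\mathbf{g}_s^{\prime}=(1,\boldsymbol{0}^{\prime})\mathbf{B}_1^{(s)}\mathbf{A}_0^{-1}$ also decays geometrically, $\lVert\mathbf{g}_s\rVert\le C\lambda^{s}$, again by stationarity. Recalling that $\mathbf{W}_{i,t'}$ stacks $\mathbf{w}_{i,t'},\dots,\mathbf{w}_{i,t'-p+1}$, I would bound the stacked target vector block by block, reducing the claim to controlling
\[
\frac{2}{T_h}\sum_{t\in\mathcal{T}_a^h}\sum_{t'\in\mathcal{T}_b^h}\sum_{q=0}^{p-1}\mathbb{E}\bigl[\mathbf{w}_{i,t'-q}\,e_{i,t+h}^{(h)}\bigr].
\]
Substituting the two expansions, each inner expectation collapses to a product of a geometrically small $\mathbf{C}_{\ell-\tau}$ and a geometrically small coefficient $\mathbf{g}_{t+h-\tau}$, nonzero only when the innovation index $\tau\in\{t+1,\dots,t+h\}$ satisfies $\tau\le\ell:=t'-q$.

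Third, I would reorganize the whole sum by the innovation time $\tau$. For fixed $\tau$, summing the error coefficients over the admissible $t\in\{\tau-h,\dots,\tau-1\}$ yields $\sum_{j=0}^{h-1}\lambda^{j}=O(1)$ — this is the crucial collapse, turning the at-most-$h$ overlapping values of $t$ into a bounded sum — while summing $\lVert\mathbf{C}_{\ell-\tau}\rVert$ over the second-block indices $\ell\in\{L_0,\dots,T_h\}$ (with $L_0\approx T_h/2$ the smallest such index) gives $O(1)$ when $\tau$ lies within $O(h)$ of the boundary $T_h/2$, and a geometrically negligible tail $C\lambda^{L_0-\tau}$ otherwise. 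Since only the $O(h)$ values of $\tau$ in a neighborhood of the boundary contribute at the $O(1)$ level while the interior tail sums to $O(1)$, the double sum is $O(h)$; the prefactor $2/T_h$ then delivers the claimed $O(h/T_h)$. The bounds depend only on the common stationary law, so the supremum over $i$ is harmless.

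The main obstacle is the bookkeeping in the third step: a naive argument that merely counts the at-most-$h$ innovation overlaps for each $\ell$ and the $O(h)$ near-boundary indices would yield only $O(h^{2}/T_h)$. Securing the sharp $O(h/T_h)$ requires using the geometric decay of the predictive-error coefficients $\mathbf{g}_s$ to absorb one factor of $h$, and then carefully separating the near-boundary innovation times (which contribute at order one) from the interior ones (whose $\mathbf{C}_{\ell-\tau}$ weights are exponentially small). Keeping all constants free of $i$ and $\tau$ is what makes the bound uniform.
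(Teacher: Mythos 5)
Your proof is correct and follows essentially the same route as the paper's: both rest on the moving-average representation of the regressors with geometrically decaying coefficients, the fact that $e_{i,t+h}^{(h)}$ loads only on the innovations dated $t+1,\ldots,t+h$, and the serial uncorrelatedness of the innovations under Assumption \ref{assu:indep_general}(a) to pin down exactly which cross-moments survive. The only place you diverge is the final bookkeeping: the paper bounds each of the $h$ terms in the sum over $s$ by $C\,\mathrm{e}^{-c(r-t)}\mathrm{e}^{c(p+h)}$ using only $\sup_{0\le s\le h-1}\Vert\boldsymbol{\theta}_{s}^{(h)}\Vert<\infty$, so its factor of $h$ comes from counting the $s$-sum and its implied constant grows exponentially in $h$ (harmless since $h$ is held fixed); you instead reorganize the triple sum by innovation date and exploit the geometric decay of the error-loading coefficients to absorb that factor, obtaining the same $O(h/T_h)$ rate with a constant free of $h$. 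Either version suffices for the application in Theorem \ref{thm:hj}, though yours gives the cleaner, $h$-uniform statement.
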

\begin{proof}[Proof of Lemma \ref{lem:corr_b}] Recall that $\mathbf{W}_{i,t}=(\mathbf{w}_{i,t}^{\prime},\mathbf{w}_{i,t-1}^{\prime},\ldots,\mathbf{w}_{i,t-p+1}^{\prime})^{\prime}$.
It suffices to show that $\sup_{i\in[N]}\Bigg\Vert\mathbb{E}\left[\sum_{t\in\mathcal{T}_{a}^{h}}\bar{\mathbf{w}}_{i,b}^{j}e_{i,t+h}^{(h)}\right]\Bigg\Vert=O\left(h/T_{h}\right)$
where $\bar{\mathbf{w}}_{i,b}^{j}=(T_{h}/2)^{-1}\sum_{t\in\mathcal{T}_{b}^{h}}\mathbf{w}_{i,t-j}$
for $j=0,1,\cdots,p-1.$ The VAR($p$) process $\mathbf{w}_{i,t}$
has the representation 
$\mathbf{w}_{i,t}=\boldsymbol{\delta}_{i}+\sum_{d=0}^{\infty}\boldsymbol{\Psi}_{d}\mathbf{e}_{i,t-d}^{(1)}$
with the coefficient matrices $\|\boldsymbol{\Psi}_{d}\|\leq C{\rm e}^{-cd}$ for some absolute
constants $C,c>0$ and the unconditional mean $\boldsymbol{\delta}_{i}$.
Besides, we
have 
\begin{eqnarray*}
\mathbb{E}\left[\sum_{t\in\mathcal{T}_{a}^{h}}\bar{\mathbf{w}}_{i,b}^{j}e_{i,t+h}^{(h)}\right] & = & \dfrac{1}{T_{h}/2}\sum_{t\in\mathcal{T}_{a}^{h}}\sum_{r\in\mathcal{T}_{b}^{h}}\mathbb{E}\left[\mathbf{w}_{i,r-j}^{j}e_{i,t+h}^{(h)}\right]\\
 & = & \dfrac{1}{T_{h}/2}\sum_{r\in\mathcal{T}_{b}^{h}}\sum_{t\in\mathcal{T}_{a}^{h}}\sum_{s=0}^{h-1}\sum_{d=0}^{\infty}\mathbb{E}\left[\boldsymbol{\Psi}_{d}\mathbf{e}_{i,r-j-d}^{(1)}\cdot\boldsymbol{\theta}_{s}^{(h)\prime}\mathbf{e}_{i,t+h-s}^{(1)}\right]
\end{eqnarray*}
where 
\begin{equation}\label{eq: expec cases}
   \mathbb{E}\left[\boldsymbol{\Psi}_{d}\mathbf{e}_{i,r-j-d}^{(1)}\cdot\boldsymbol{\theta}_{s}^{(h)\prime}\mathbf{e}_{i,t+h-s}^{(1)}\right] = \begin{cases}
   \boldsymbol{\Psi}_{d}\mathbf{A}_0^{-1}\boldsymbol{\Sigma}_{u}(\mathbf{A}_0^{-1})^\prime \boldsymbol{\theta}_{s}^{(h)}, &d = r - t - j - h +s\\
   \boldsymbol{0}, &\rm{otherwise}
    \end{cases}
\end{equation} 
where $\boldsymbol{\Sigma}_{u}=\mathbb{E}\left(\mathbf{u}_{i,t}\mathbf{u}_{i,t}^{\prime}\right)$. By (\ref{eq: expec cases}) and the fact that $j\in[p],0\leq s\leq h-1$, we have $$\Bigg\Vert \sum_{d=0}^{\infty}\mathbb{E}\left[\boldsymbol{\Psi}_{d}\mathbf{e}_{i,r-j-d}^{(1)}\cdot\boldsymbol{\theta}_{s}^{(h)\prime}\mathbf{e}_{i,t+h-s}^{(1)}\right]\Bigg\Vert \leq C_1\cdot C{\rm e}^{-c(r-t)}\cdot {\rm e}^{c(p+h)}$$ where $C_1 = \Vert\mathbf{A}_{0}^{-1}\Vert^2 \cdot \Vert\boldsymbol{\Sigma}_{u}\Vert \cdot \sup_{0\leq s\leq h-1}\Vert\boldsymbol{\theta}_{s}^{(h)}\Vert$. Hence,
\begin{align*}
\Bigg\Vert\mathbb{E}\left[\sum_{t\in\mathcal{T}_{a}^{h}}\bar{\mathbf{w}}_{i,b}^{j}e_{i,t+h}^{(h)}\right]\Bigg\Vert & \leq \dfrac{C_1\cdot C{\rm e}^{c(p+h)}}{T_{h}/2}\sum_{r\in\mathcal{T}_{b}^{h}} \sum_{t\in\mathcal{T}_{a}^{h}} \sum_{s=0}^{h-1} {\rm e}^{-c(r-t)}\\ 
 & = \dfrac{C_2\cdot h}{T_{h}}\sum_{r=T_h/2+1}^{T_h} {\rm e}^{-cr} \sum_{t=1}^{T_h/2} {\rm e}^{ct} \\
 & = \dfrac{C_2\cdot h}{T_{h}}\cdot \dfrac{{\rm e}^{-c(T_h/2+1)}(1-{\rm e}^{-cT_h/2})}{1-{\rm e}^{-c}} \cdot \dfrac{{\rm e}^{cT_h/2}-1}{1-{\rm e}^{-c}} \\
 & \leq \dfrac{C_3\cdot h}{T_{h}} {\rm e}^{-c(T_h/2 + 1)} \cdot {\rm e}^{cT_h/2} \leq \dfrac{C_3\cdot h}{T_{h}}
\end{align*}
where $C_2=2C_1\cdot C{\rm e}^{c(p+h)}$ and $C_3 = \frac{C_2}{(1-{\rm e}^{-c})^2}$. This non-asymptotic bound holds uniformly for all $i\in[N]$.
\end{proof}\par \bigskip{}
\par We proceed with the main proof. The FE estimator is 
\[
\hat{\boldsymbol{\theta}}^{(h)\mathrm{fe}}=\widehat{\mathbf{Q}}^{-1}\dfrac{1}{NT_{h}}\sum_{i\in[N]}\sum_{t\in\mathcal{T}^{h}}\tilde{\mathbf{W}}_{i,t}y_{i,t+h}
\]
 and thus 
\[
\hat{\boldsymbol{\theta}}^{(h)\mathrm{fe}}-\boldsymbol{\theta}^{(h)}=\widehat{\mathbf{Q}}^{-1}\boldsymbol{\zeta}
\]
where $\boldsymbol{\zeta}=\dfrac{1}{NT_{h}}\sum_{i\in[N]}\sum_{t\in\mathcal{T}^{h}}\tilde{\mathbf{W}}_{i,t}e_{i,t+h}^{(h)}$.
The estimation error can be decomposed as 
\begin{equation}
\hat{\boldsymbol{\theta}}^{(h)\mathrm{spj}}-\boldsymbol{\theta}^{(h)}=\widehat{\mathbf{Q}}^{-1}\left(2\boldsymbol{\zeta}-\dfrac{1}{2}\left(\boldsymbol{\zeta}_{a}+\boldsymbol{\zeta}_{b}\right)\right)+\left(\widehat{\mathbf{Q}}^{-1}-\widehat{\mathbf{Q}}_{a}^{-1}\right)\dfrac{\boldsymbol{\zeta}_{a}}{2}+\left(\widehat{\mathbf{Q}}^{-1}-\widehat{\mathbf{Q}}_{b}^{-1}\right)\dfrac{\boldsymbol{\zeta}_{b}}{2}\label{eq:theta_error}
\end{equation}
where $\boldsymbol{\zeta}_{k}=\dfrac{2}{NT_{h}}\sum_{i\in[N]}\sum_{t\in\mathcal{T}_{k}^{h}}\left(\mathbf{w}_{i,t}-\bar{\mathbf{w}}_{i,k}\right)e_{i,t+h}^{(h)}$
for $k\in\left\{ a,b\right\} $.\par Multiply $\sqrt{NT_{h}}$ on
both side of (\ref{eq:theta_error}). We focus on 
\[
\sqrt{NT_{h}}\left(2\boldsymbol{\zeta}-\dfrac{1}{2}\left(\boldsymbol{\zeta}_{a}+\boldsymbol{\zeta}_{b}\right)\right)=\dfrac{1}{\sqrt{NT_{h}}}\sum_{i\in[N]}\sum_{t\in\mathcal{T}^{h}}\mathbf{d}_{i,t}^{*}e_{i,t+h}^{(h)}.
\]
The summand $\left\{ T_{h}^{-1/2}\sum_{t\in\mathcal{T}^{h}}\left[\mathbf{d}_{i,t}^{*}e_{i,t+h}-\mathbb{E}\left(\mathbf{d}_{i,t}^{*}e_{i,t+h}\right)\right]\right\} _{i=1}^{N}$
is independent across $i\in[N]$ under Assumption \ref{assu:indep_general}.
The definition of $\mathbf{d}_{i,t}^{*}$ yields the following decomposition
\[
\frac{1}{\sqrt{NT_{h}}}\sum_{i\in[N]}\sum_{t\in\mathcal{T}^{h}}\mathbf{d}_{i,t}^{*}e_{i,t+h}^{(h)}=\frac{1}{\sqrt{NT_{h}}}\sum_{i\in[N]}\sum_{t\in\mathcal{T}_{a}^{h}}(\mathbf{W}_{i,t}-\bar{\mathbf{W}}_{i,b})e_{i,t+h}^{(h)}+\frac{1}{\sqrt{NT_{h}}}\sum_{i\in[N]}\sum_{t\in\mathcal{T}_{b}^{h}}(\mathbf{W}_{i,t}-\bar{\mathbf{W}}_{i,a})e_{i,t+h}^{(h)}.
\]
Recall that $e_{i,t+h}^{(h)}$ is the first entry of $\mathbf{e}_{i,t+h}^{(h)}=\sum_{s=0}^{h-1}\mathbf{B}_{1}^{(s)}\mathbf{e}_{i,t+h-s}^{(1)}$
defined in (\ref{eq:error recursive}), a linear combination of $\left(\mathbf{e}_{i,t+\ell}^{(1)}\right)_{\ell=1}^{h}$.
By construction, $\mathbb{E}\left[\mathbf{W}_{i,t}e_{i,t+h}^{(h)}\right]=0$
for all $t\geq1,$ and $\mathbb{E}\left[\bar{\mathbf{W}}_{i,a}e_{i,t+h}^{(h)}\right]=0$
for all $t\in\mathcal{T}_{b}^{h}$, since $\mathbb{E}\left[\mathbf{W}_{i,s}e_{i,t+h}^{(h)}\right]=0$
whenever $s<t.$ Lemma \ref{lem:corr_b} then gives 
\begin{equation}
\frac{1}{\sqrt{NT_{h}}}\sum_{i\in[N]}\sum_{t\in\mathcal{T}^{h}}\mathbb{E}\left[\mathbf{d}_{i,t}^{*}e_{i,t+h}^{(h)}\right]=O\left(\frac{1}{\sqrt{NT_{h}}}\cdot N\dfrac{h}{T_{h}}\right)=O\left(h\frac{N^{1/2}}{T_{h}^{3/2}}\right)\to0\label{eq:HJ-bias}
\end{equation}
under the condition $N/T^{3}\to0$ with a fixed $h$.\par Parallel
calculation delivers $\sqrt{NT_{h}}\boldsymbol{\zeta}_{a}=O_{p}(1)$
and $\sqrt{NT_{h}}\boldsymbol{\zeta}_{b}=O_{p}(1).$ We therefore
conclude 
\begin{eqnarray*}
 &  & \sqrt{NT_{h}}\left(\hat{\boldsymbol{\theta}}^{(h){\rm spj}}-\boldsymbol{\theta}^{(h)}\right)\\
 & = & \widehat{\mathbf{Q}}^{-1}\left(2\boldsymbol{\zeta}-\dfrac{\boldsymbol{\zeta}_{a}}{2}-\dfrac{\boldsymbol{\zeta}_{b}}{2}\right)+\left(\widehat{\mathbf{Q}}^{-1}-\widehat{\mathbf{Q}}_{a}^{-1}\right)\cdot O_{p}\left(1\right)+\left(\widehat{\mathbf{Q}}^{-1}-\widehat{\mathbf{Q}}_{b}^{-1}\right)\cdot O_{p}\left(1\right)\\
 & = & \widehat{\mathbf{Q}}^{-1}\frac{1}{\sqrt{NT_{h}}}\sum_{i\in[N]}\sum_{t\in\mathcal{T}^{h}}\left[\mathbf{d}_{i,t}^{*}e_{i,t+h}^{(h)}-\mathbb{E}\left[\mathbf{d}_{i,t}^{*}e_{i,t+h}^{(h)}\right]\right]+\frac{1}{\sqrt{NT_{h}}}\sum_{i\in[N]}\sum_{t\in\mathcal{T}^{h}}\mathbb{E}\left[\mathbf{d}_{i,t}^{*}e_{i,t+h}^{(h)}\right]+o_{p}\left(1\right)\\
 & = & \widehat{\mathbf{Q}}^{-1}\frac{1}{\sqrt{NT_{h}}}\sum_{i\in[N]}\sum_{t\in\mathcal{T}^{h}}\left[\mathbf{d}_{i,t}^{*}e_{i,t+h}^{(h)}-\mathbb{E}\left[\mathbf{d}_{i,t}^{*}e_{i,t+h}^{(h)}\right]\right]+o(1)+o_{p}\left(1\right)\\
 & \stackrel{d}{\to} & \mathcal{N}\left(\boldsymbol{0},\boldsymbol{{\rm Q}}^{-1}\boldsymbol{{\rm R}}\boldsymbol{{\rm Q}}^{-1}\right)
\end{eqnarray*}
where the second equality follows by Assumption \ref{assu:limit_HJ}
as $\mathbf{Q}=\mathrm{plim}\widehat{\mathbf{Q}}_{a}=\mathrm{plim}\widehat{\mathbf{Q}}_{b}$
and so does $\mathbf{Q}=\mathrm{plim}\widehat{\mathbf{Q}}$, the third
equality from (\ref{eq:HJ-bias}), and the limiting distribution
follows by (\ref{eq:assu_CLT_hj}).

\section{Extensions for Alternative Specifications}\label{sec:extensions_theory}\par 

\subsection{Differenced Dependent Variable \label{subsec:bias_specification_FD}}

Instead of using the level of the dependent variable, another popular specification of LP uses the cumulative difference in $y_{i,t}$ over the horizon $h$ as the dependent variable, for example \citet{baron2021banking}. Specifically, the regression is given by 
\begin{equation}
\Delta_{h}y_{i,t+h}=\mu_{i}^{(h)}+\beta^{(h)}x_{i,t}+e{}_{i,t+h}^{(h)},\label{eq:DGP-hD}
\end{equation}
where $\Delta_{h}y_{i,t+h}:=y_{i,t+h}-y_{i,t}$. We slightly abuse the notations $\mu^{(h)}_i$, $\beta^{(h)}$, and $e_{i,t+h}^{(h)}$ to represent the fixed effect, the IRFs, and the error term, respectively. 
The independent variable $x_{i,t}$ accommodates various forms of stationary panels, such as the raw data, first-order difference, and percentage change. Similar to the prototype panel LP model (\ref{eq:modelh}),  this specification is not immune from the Nickell bias. 
\par Suppose that the true DGP now becomes 
\begin{equation}
\Delta y_{i,t}=\mu_{i}^{(0)y}+\beta^{(0)}x_{i,t}+u_{i,t}^{y},\label{eq:DGP-FD}
\end{equation}
and the dynamics of $x_{i,t}$ remains (\ref{eq:AR1}). Similar derivations as in Section \ref{sec:bias} yield 
\begin{equation}
\Delta y_{i,t+r}=\alpha_{i}^{(r)}+b^{(r)}x_{i,t}+\varepsilon_{i,t+r}^{(r)}\label{eq:DGP Delta FD s}
\end{equation}
for all $r\geq1$, where $b^{(r)}=\rho^{r}\beta^{(0)}$, $\alpha_{i}^{(r)}=\mu_{i}^{(0)y}+\beta^{(0)}\mu_{i}^{x}\sum_{s=0}^{r-1}\rho^{s}$,
and $\varepsilon_{i,t+r}^{(r)}=u_{i,t+r}^{y}+\beta^{(0)}\sum_{s=0}^{r-1}\rho^{s}u_{i,t+r-s}^{x}$.
The components in (\ref{eq:DGP-FD}) can be found by taking $r$ from
1 to $h$ and summing up for (\ref{eq:DGP Delta FD s}):
\begin{equation}
    \label{eq: sum up}
    \begin{aligned}
\beta^{(h)} & =\sum_{r=1}^{h} b^{(r)}=\beta^{(0)}\sum_{r=1}^h \rho^r,\\
\mu_{i}^{(h)} & =\sum_{r=1}^{h} \alpha_i^{(r)},\text{ and }
e_{i,t+h}^{(h)}  =\sum_{r=1}^{h}\varepsilon^{(r)}_{i,t+r}.
    \end{aligned}
\end{equation} 
Some elementary calculations yield the following closed-form expressions 
\begin{align*}
\beta^{(h)} & =\dfrac{\beta^{(0)}\rho(1-\rho^{h})}{1-\rho},\\
\mu_{i}^{(h)} & =h\mu_{i}^{(0)y}+\dfrac{\beta^{(0)}\mu_{i}^{x}[h(1-\rho)-\rho(1-\rho^{h})]}{(1-\rho)^{2}},\\
e_{i,t+h}^{(h)} & =\sum_{r=1}^{h}u_{i,t+r}^{y}+\beta^{(0)}\sum_{r=1}^{h}\sum_{s=0}^{r-1}\rho^{s}u_{i,t+r-s}^{x}.
\end{align*}
The FE estimator of panel LP for the model (\ref{eq:DGP-hD}) is given as 
\[\hat\beta^{(h)\text{fe-dif}} =  {\sum_{i\in[N]} \sum_{t\in\mathcal{T}^h} \tilde x_{i,t} \Delta_h y_{i,t+h} } / { \sum_{i\in[N]} \sum_{t\in\mathcal{T}^h} \tilde x_{i,t}^2 }. \]
The Nickell bias of the FE estimator $\hat\beta^{(h)\text{fe-dif}}$ for  (\ref{eq:DGP-hD}) stems from the second term
$\beta^{(0)}\sum_{r=1}^{h}\sum_{s=0}^{r-1}\rho^{s}u_{i,t+r-s}^{x}$
of $e_{i,t+h}^{(h)}$ by parallel arguments as in Section \ref{sec:bias}. The following proposition formally quantifies the Nickell bias. 
\begin{prop} \label{thm:biasFEAR1_dif} 
Suppose $\Delta y_{i,t}$ and $x_{i,t}$ are generated from the DGPs (\ref{eq:DGP-FD}) and (\ref{eq:AR1}). Under the conditions of Proposition \ref{thm:biasFEAR1},  
\begin{equation}
\sqrt{NT_{h}}\left(\hat\beta^{(h){\rm fe}\text{-}{\rm dif}} -\beta^{(h)}\right)+\beta^{(0)}\cdot\dfrac{\sigma_{u_{x}}^{2}}{s_{x}^{2}}\sqrt{\dfrac{N}{T_{h}}}\sum_{r=1}^h f_{T,r}(\rho)\stackrel{d}{\to}   \mathcal{N}\left(0,\ \frac{\sigma_{xe,h}^{2}}{\sigma_{x}^{4}}\right),
\label{eq:biasFEAR1_dif}
\end{equation}
where $\{f_{T,r}(\rho)\}_{ 1 \leq r \leq h}$,
$\sigma_{u_{x}}^{2}$, and 
$\sigma_{x}^{2}$ follow the definitions in Proposition \ref{thm:biasFEAR1}.  
\end{prop}
Compared to the asymptotic normality in Proposition \ref{thm:biasFEAR1} for the prototype model, the only change in Proposition \ref{thm:biasFEAR1_dif} is that the function $f_{T,h}(\rho)$ in the bias term of (\ref{eq:biasFEAR1}) becomes the cumulative sum $\sum_{r=1}^h f_{T,r}(\rho)$. The Nickell bias with the differenced dependent variable has the same root as that of the prototype model (\ref{eq:modelh}), and thus SPJ is still a valid solution to the Nickell bias in (\ref{eq:DGP-hD}). 

\begin{rem}
\label{rem:attentuation_dif} Suppose $\rho \in (0,1)$. Following the analysis in Remark \ref{rem:attentuation}, the following properties hold for the panel LP model (\ref{eq:DGP-hD}). 
\begin{enumerate}
\item Given all
the parameters in the DGP, the bias worsens with a larger $h$, since $\sum_{r=1}^h f_{T,r}(\rho)/\sqrt{T_h}$ increases with $h$.
\item The bias shrinks the FE estimator toward zero since 
\[
\hat{\beta}^{(h){\rm \mathrm{fe}}\text{-}{\rm dif}}\stackrel{a}{\sim}\beta^{(h)}\left(1-\frac{1}{T_{h}}\cdot\dfrac{\sigma_{u_{x}}^{2}\sum_{r=1}^h f_{T,r}(\rho)}{\sigma_{x}^{2}\sum_{r=1}^h \rho^{r}}\right)+\frac{\mathcal{Z}}{\sqrt{NT_{h}}}
\]
by noticing the true IRF $\beta^{(h)}=\beta^{(0)}\sum_{r=1}^h\rho^{r}$. When $T$ is sufficiently large, we have 
\[ 1-\frac{1}{T_{h}}\cdot\dfrac{\sigma_{u_{x}}^{2}\sum_{r=1}^h f_{T,r}(\rho)}{\sigma_{x}^{2}\sum_{r=1}^h \rho^{r}}>0.\]
Therefore, no matter whether the true $\beta^{(h)}$ is positive or negative,
the FE estimator exhibits an \emph{attenuation bias} and underestimates
$|\beta^{(h)}|$. This conclusion coincides with that in Remark \ref{rem:attentuation}, and the empirical evidence in Sections \ref{subsec: Baron2021} and \ref{subsec: CS2008} with differenced dependent variables.

\item 

Under a small sample $T$ and a relatively long horizon $h$, it is possible that 
\[ 1-\frac{1}{T_{h}}\cdot\dfrac{\sigma_{u_{x}}^{2}\sum_{r=1}^h f_{T,r}(\rho)}{\sigma_{x}^{2}\sum_{r=1}^h \rho^{r}} < 0.\]
Thus, in some particular settings with finite samples the FE estimates and the true IRFs may have opposite signs. This phenomenon is observed in the simulation study in Section \ref{subsec: empirical oriented}, as shown in Figure \ref{fig:dif}. 
\end{enumerate}
\end{rem}
\begin{proof}[Proof of Proposition \ref{thm:biasFEAR1_dif}]Following the proof of Proposition \ref{thm:biasFEAR1} in Section \ref{subsec: proof FEAR1}, the essential procedure is to calculate $\mathbb{E}\left[ T_h^{-1/2}\sum_{t\in\mathcal{T}^h} \tilde x_{i,t} e_{i,t+h}^{(h)}\right]$. Note that the $\varepsilon^{(r)}_{i,t+r}$ in (\ref{eq:DGP Delta FD s}) for the model with a differenced dependent variable has the same expression as the error term in (\ref{eq: close form prototype}) for the prototype model. Following the arguments for (\ref{eq:Exe}), we have 
\begin{equation*}
\mathbb{E}\left[\frac{1}{\sqrt{T_{h}}}\sum_{t\in\mathcal{T}^{h}}\tilde{x}_{i,t}\varepsilon_{i,t+r}^{(r)}+\dfrac{\beta^{(0)}}{\sqrt{T_{h}}}f_{T,r}(\rho)\right]=0.
\end{equation*}
Therefore, by $e_{i,t+h}^{(h)} = \sum_{r=1}^{h}\varepsilon_{i,t+r}^{(r)}$ as  in (\ref{eq: sum up}), we have 
\begin{equation*}
     \mathbb{E}\left[\frac{1}{\sqrt{T_{h}}}\sum_{t\in\mathcal{T}^{h}}\tilde{x}_{i,t}e_{i,t+h}^{(h)}\right] = \sum_{r=1}^{h}\mathbb{E}\left[\frac{1}{\sqrt{T_{h}}}\sum_{t\in\mathcal{T}^{h}}\tilde{x}_{i,t}\varepsilon_{i,t+r}^{(r)}\right] = -\dfrac{\beta^{(0)}}{\sqrt{T_{h}}}\sum_{r=1}^h f_{T,r}(\rho).
\end{equation*}
With this essential result, all other arguments for Proposition \ref{thm:biasFEAR1_dif} follow the proof of Proposition \ref{thm:biasFEAR1} in Section \ref{subsec: proof FEAR1}. 
\end{proof}
\subsection{Estimation and Inference with Split-Panel Jackknife  \label{subsec:extension HJ}}\par In
the main text, we present the theory in simple forms to highlight that the Nickell bias
in LP can be effectively resolved by the split-panel
jackknife (SPJ) estimator. In practice, applied researchers may attempt
alternative specifications in both the predictors and the correlation
structure in the error terms. These extensions can be accommodated
by adapting existing econometric methods.

\par \subsubsection{Two-way Clustered Standard Error\label{subsubsec: twoway}}\par We first
provide the formula for two-way clustered standard error of the SPJ
estimator. Recall that the asymptotic variance estimator $\hat{\mathbf{V}}_{N}:=\hat{\mathbf{Q}}^{-1}\widehat{\mathbf{R}}_{N}\hat{\mathbf{Q}}^{-1}$
below Theorem \ref{thm:hj} is clustered by individual, where 
\[
\widehat{\mathbf{R}}_{N}:=(NT_{h})^{-1}\sum_{i\in[N]}\sum_{t,s\in\mathcal{T}^{h}}\mathbf{d}_{i,t}^{*}\mathbf{d}_{i,s}^{*\top}\hat{e}_{i,t+h}^{(h){\rm spj}}\hat{e}_{i,s+h}^{(h){\rm spj}}.
\]
Similarly, we can construct an asymptotic variance estimator $\hat{\mathbf{V}}_{T}:=\hat{\mathbf{Q}}^{-1}\widehat{\mathbf{R}}_{T}\hat{\mathbf{Q}}^{-1}$
with 
\[
\widehat{\mathbf{R}}_{T}:=(NT_{h})^{-1}\sum_{t\in\mathcal{T}^{h}}\sum_{i,j\in[N]}\mathbf{d}_{i,t}^{*}\mathbf{d}_{j,t}^{*\top}\hat{e}_{i,t+h}^{(h){\rm spj}}\hat{e}_{j,t+h}^{(h){\rm spj}}
\]
to cluster by time. Finally, the robust variance estimator
(or White estimator) is $\hat{\mathbf{V}}_{NT}:=\hat{\mathbf{Q}}^{-1}\widehat{\mathbf{R}}_{NT}\hat{\mathbf{Q}}^{-1}$
with 
\[
\widehat{\mathbf{R}}_{NT}:=(NT_{h})^{-1}\sum_{t\in\mathcal{T}^{h}}\sum_{i\in[N]}\mathbf{d}_{i,t}^{*}\mathbf{d}_{i,t}^{*\top}(\hat{e}_{i,t+h}^{(h){\rm spj}})^{2}.
\]
Following \citet{cameron2011robust}, we can construct the two-way
clustered variance estimator as $\hat{\mathbf{V}}_{\text{TW}}=\hat{\mathbf{V}}_{N}+\hat{\mathbf{V}}_{T}-\hat{\mathbf{V}}_{NT}.$

\subsubsection{Cross-Sectional Correlation\label{subsubsec: Driscoll Kraay}}
\par We provide the formula for a robust standard error of the SPJ
estimator under cross-sectional dependence. Following \citet{driscoll1998consistent}, 
 the robust variance estimator is $\hat{\mathbf{V}}_{\rm DK}:=\hat{\mathbf{Q}}^{-1}\widehat{\mathbf{S}}_{NT}\hat{\mathbf{Q}}^{-1}$
with 
\[
\widehat{\mathbf{S}}_{NT} := N^{-1}\left[\hat{\mathbf{\Delta}}_0 + \sum_{j=1}^{m(T_h)}w(j,m(T_h))\left(\hat{\mathbf{\Delta}}_j + \hat{\mathbf{\Delta}}_j^\top \right)\right],
\]
where $m(T_h) = \lfloor T_h^{1/4}\rfloor$, $w(j,m(T_h)) = 1  - j/[m(T_h)+1]$, and 
\[
\hat{\mathbf{\Delta}}_j = \dfrac{1}{T_h}\sum_{t=j+1}^{T_h} \mathbf{g}_{N,t} \mathbf{g}_{N,t-j}^\top\text{ with } \mathbf{g}_{N,t} = \sum_{i\in[N]}\mathbf{d}_{i,t}^{*}\hat{e}_{i,t+h}^{(h){\rm spj}}
\]
for any $j=1,2,\dots,m(T_h)$. Note that $\widehat{\mathbf{S}}_{NT}$ consistently estimates the covariance matrix
\[{\mathbf{S}}_{NT} = \dfrac{1}{N T_h}\sum_{i,j\in[N]}\sum_{t,s\in\mathcal{T}^h}\mathbb{E}(\mathbf{d}_{i,t}^{*}\mathbf{d}_{j,s}^{*\top}{e}_{i,t+h}^{(h)} {e}_{j,s+h}^{(h)}).\]
It is precisely the \citet{newey1987simple} covariance matrix estimator robust to heteroskedasticity and serial correlation, applied to the sequence of cross-sectional averages of $\mathbf{d}_{i,t}^{*}\hat{e}_{i,t+h}^{(h){\rm spj}}$.

\subsubsection{Time Fixed Effect }\label{subsec: TE HJ}
In the main text we only discuss the cross-sectional fixed effects. Time fixed effects are ubiquitous in panel data empirical applications.
The two-way FE estimator captures potential time FE along with the
individual FE. \citet{chudik2018half} study SPJ for a generic two-way
FE regression. We borrow their approach in panel LP. Specifically,
we can redefine the demeaned variable 
\[
\tilde{\mathbf{W}}_{i,t}=\mathbf{W}_{i,t}-T_{h}^{-1}\sum_{t\in\mathcal{T}^{h}}\mathbf{W}_{i,t}-N^{-1}\sum_{i\in[N]}\mathbf{W}_{i,t}+(NT_{h})^{-1}\sum_{t\in\mathcal{T}^{h}}\sum_{i\in[N]}\mathbf{W}_{i,t}
\]
and similar two-way demeaning also applies to other observable variables.
All other steps will follow the formulas for the one-way fixed effect.
As for the standard error, we update the expression for $\boldsymbol{\mathbf{d}}_{i,t}^{*}$
in Assumption \ref{assu:limit_HJ} given as 
\begin{align*}
\mathbf{d}_{i,t}^{*}= & \left(\mathbf{W}_{i,t}-\bar{\mathbf{W}}_{i,b}-\frac{1}{T_{h}}\sum_{t\in\mathcal{T}^{h}}(\mathbf{W}_{i,t}-\bar{\mathbf{W}}_{i,b})\right)\cdot\boldsymbol{1}\left\{ t\in\mathcal{T}_{a}^{h}\right\} \\
 & +\left(\mathbf{W}_{i,t}-\bar{\mathbf{W}}_{i,a}-\frac{1}{T_{h}}\sum_{t\in\mathcal{T}^{h}}(\mathbf{W}_{i,t}-\bar{\mathbf{W}}_{i,a})\right)\cdot\boldsymbol{1}\left\{ t\in\mathcal{T}_{b}^{h}\right\} 
\end{align*}
where $\bar{\mathbf{W}}_{i,k}=(T_{h}/2)^{-1}\sum_{t\in\mathcal{T}_{k}^{h}}\mathbf{W}_{i,t}$
for $k\in\{a,b\}$.

\subsubsection{Unbalanced Panel }\par While the balanced panel facilitates
theoretical analysis, in practice unbalanced panels are the norm rather
than the exception. We follow Section A7 of \citet{chudik2018half}
to handle unbalanced panels for SPJ with the observation-splitting
procedure. Suppose that for the cross-sectional unit $i$ we observe
all variables over $t=T_{f_{i}},T_{f_{i}+1},\cdots,T_{l_{i}}$, where
$T_{f_{i}}$ and $T_{l_{i}}$ are the first and last time periods,
respectively. Without loss of generality we assume $T_{i}=T_{l_{i}}-T_{f_{i}}+1$
is an even number for every $i$. We implement SPJ by dividing the
$T_{i}$ observations into two sub-samples. The first sub-sample,
denoted by subscript $a$, contains the first $T_{i}/2$ observations;
the second half, denoted by subscript $b$, consists of the remaining
$T_{i}/2$ observations.

\section{Additional Simulation Results\label{sec:Simulation-app} }

This section collects several additional simulation results. Section \ref{subsec:analytical} compares SPJ to the analytical debiased estimator and the likelihood estimator that are highly dependent on closed-form formulas that vary with model specifications. Section \ref{subsec:corr} investigates the numerical performance of SPJ for panel LP with cross-sectional correlations. Section \ref{subsec:lag} considers a simulation setting with the lagged dependent variable as a control and the time fixed effects, and compares SPJ to the widely used difference GMM estimator by \citet{arellano1991some}. Section \ref{subsec: empirical oriented} includes an empirically-oriented simulation based on the data features of \citet{cerra2008growth}.
Section \ref{subsec: high rho} examines the robustness of SPJ when the regressor is persistent with the AR(1) coefficient $\rho$ close to one. Section \ref{subsec: summary simul} summarizes the simulation results. All results in this section are based on 1000 replications. 

\subsection{Comparison to Analytical Bias Correction and Likelihood Estimator\label{subsec:analytical}}
\par We compare the SPJ method to the analytical debiased estimator by \citet{herbst2021bias} and the likelihood estimator (LE) by \citet{alvarez2022robust} (See also \citet{bai2024likelihood}). These estimators are based on closed-form formulas, which are case-by-case according to the model specifications. For simplicity, we still focus on the prototype simulation setup based on (\ref{eq:DGP}) and (\ref{eq:AR1}) as in Section \ref{sec:Simulations}.  
\par Under this prototype setup, our Section \ref{subsec:consist} provides convenient closed-form formulas for the analytical bias correction method in \citet{herbst2021bias}, which is the same as the expressions for our Oracle debiased (DB) estimator in Section \ref{sec:Simulations}. As for the LE estimator, recall that the function $f_{T,h}(\rho)$ is defined below (\ref{eq:bias_limit}) in Proposition \ref{thm:biasFEAR1} about the Nickell bias of FE estimator. The LE estimators of IRFs $\{\beta^{(h)}\}_{0\leq h\leq H}$, the AR(1) coefficient $\rho$, and the variance of the error in the AR(1) model $\sigma_{u_x}^2$, solve the following bias-corrected scores: 
\begin{align}
&\dfrac{1}{NT}\sum_{i\in[N]}\sum_{t\in\mathcal[T]}\tilde x_{i,t}(\tilde y_{i,t} - \tilde{x}_{i,t}\beta^{(0)}) = 0, \label{eq: beta 0 score}\\
&\dfrac{1}{NT_h}\sum_{i\in[N]}\sum_{t\in\mathcal{T}^h}\tilde x_{i,t-h}(\tilde y_{i,t} - \tilde{x}_{i,t-h}\beta^{(h)}) + \dfrac{\beta^{(0)}\sigma_{u_x}^2}{T_h}f_{T,h}(\rho) = 0\text{ for }h=1,2,\dots,H, \label{eq: beta h score} \\
    &\dfrac{1}{N(T-1)}\sum_{i\in[N]}\sum_{t\in\mathcal{T}^h}\tilde x_{i,t-1}(\tilde x_{i,t} - \tilde{x}_{i,t-1}\rho) + \dfrac{\sigma_{u_x}^2}{T-1}f_{T,1}(\rho)  = 0,\label{eq: rho score}\\
    &\sigma_{u_x}^2 = \dfrac{1}{N(T-2)}\sum_{i=1}^n\sum_{t=2}^T(\tilde x_{i,t} - \tilde{x}_{i,t-1}\rho)^2. \label{eq: variance score}
\end{align}
Combining (\ref{eq: rho score}) and (\ref{eq: variance score}), we can solve $\rho$ by the Newton-Raphson method. Other parameters are readily solved with a consistent estimator of $\rho$. In terms of the difference between the two methods, the DB estimator relies on the analytical formula of the Nickell bias in the FE estimator, and estimates the unknown parameters in the bias formula ex-post for bias correction. The LE estimator depends on the bias-corrected scores before conducting the estimation.  
\par Figures \ref{fig:simple_irf}-\ref{fig:simple_coverage} display the simulation results. The DB and LE estimators produce precise estimation with small bias and RMSE, and confidence intervals with accurate coverage probabilities. These two methods are expected to perform the best under the correct model specification since they use most of the model structure. Nevertheless, we highlight that analytical formulas are indispensable for these two methods. These formulas vary with the model specification, requiring complex derivations and theoretical justifications when the order of lags increases, when the lagged dependent variables are included, when cross-sectional correlation exists, and when the dependent variable is specified as the change of outcome $\Delta_h y_{i,t+h} = y_{i,t+h} - y_{i,t}$ (Section \ref{subsec:bias_specification_FD}). Local projection is popular due to its easy implementation and robustness to model specifications. The inconvenience of analytical methods undermines these attractions. 

In contrast to the analytical methods, our recommended SPJ estimator is always a linear combination of three FE estimators following (\ref{eq:spj}), with proper modifications of the standard errors if the cross-sectional correlation exists. Figures \ref{fig:simple_irf}-\ref{fig:simple_coverage} show that SPJ also produces confidence intervals with coverage probabilities close to the nominal level, with RMSEs comparable to the DB and LE methods. The mild inflation of RMSE in SPJ is compensated with easy implementation and robustness to model specifications, maintaining the advantages of LP in panel data.

\subsection{Cross-Sectional Correlation\label{subsec:corr}}
\par In Section \ref{sec:bias}, we follow the common practice in the literature of panel data to assume independent cross-sections. Cross-sectional correlations are common in real data  \citep{pesaran2015testing,juodis2022incidental}. 
This section extends the prototype simulation setup as in Section \ref{sec:Simulations} to cross-sectional correlation. In particular, we generate the error term of the DGP (\ref{eq:DGP}) by 
\[
(u^y_{1,t}, u^y_{2,t}, \ldots, u^y_{N,t})' \sim \mathcal{N}(0_N,\Sigma_N),\]
where $\Sigma_N = (\Sigma_{i,j})_{1\leq i\leq N,1\leq j\leq N}$ is the cross-sectional covariance matrix. We follow \citet{driscoll1998consistent} to generate the covariance matrix by $\Sigma_{i,j} = \textbf{1}\{i=j\} + \lambda_i\lambda_j \textbf{1}\{i\neq j\}$, where $\{\lambda_i\}_{1\leq i\leq N}$  collects $N$ independent random variables of uniform distribution in the interval $[0,1/\sqrt{2}]$. For preciseness, for each $(N,T)$ we generate the covariance matrix $\Sigma_N$ before conducting the simulations, and consider $\Sigma_N$ as constant throughout all replications of the simulation study. Other settings of the parameters and error terms follow Section \ref{sec:Simulations}. 
\par The two-way clustered standard error and the \citet{driscoll1998consistent} standard error (Section \ref{subsubsec: Driscoll Kraay}) are commonly used to address cross-sectional correlations in panel data. This section examines the performance of these two standard errors. We still compare the SPJ and FE estimators as in Section \ref{sec:Simulations}. For SPJ that allows for cross-sectional correlation, the asymptotic variance estimators follow  Sections \ref{subsubsec: twoway} and \ref{subsubsec: Driscoll Kraay}. For FE in the prototype LP model with a single AR(1) regressor, the variance estimators also have similar expressions as those in Sections \ref{subsubsec: twoway} and \ref{subsubsec: Driscoll Kraay}, with $\mathbf{d}^*_{i,t}$ replaced by the demeaned regressor $\tilde x_{i,t}$. 
\par To save space, we only present the results with the \citet{driscoll1998consistent} standard error in Figures \ref{fig:correlation_irf}-\ref{fig:correlation_coverage}; the results with the two-way clustered standard error are almost the same. The FE estimator still suffers from the Nickell bias that causes large estimation errors and severe low convergence probability of the confidence intervals. In contrast, SPJ produces robust estimation and inference with cross-sectional correlations.

\subsection{Panel LP with Lagged Dependent Variable and Time Effect \label{subsec:lag}}
We consider a more general DGP with a lagged dependent variable given as  
\begin{align}y_{i,t} & =\mu_{i}^{(0)y}+g_{t}^{(0)y}+\tau y_{i,t-1}+\beta^{(0)}x_{i,t}+u_{i,t}^{y},
\label{eq: VAR simul y}\\
x_{i,t} & =\mu_{i}^{x}+\kappa y_{i,t-1}+\rho x_{i,t-1}+u_{i,t}^{x}.
\label{eq: VAR simul x}
\end{align} 
The time fixed effect $g_{t}^{(0)y}$ as well as the lagged dependent variable
$y_{i,t-1}$ are added to mimic common practice in applied works.
From (\ref{eq: VAR simul y}) and (\ref{eq: VAR simul x}) we can deduce the following panel VAR model
\begin{equation}
\left(\begin{array}{c}
y_{i,t} \\
x_{i,t}
\end{array}\right)  =  \left(\begin{array}{c}
\mu^{(0)y}_{i} + \beta^{(0)}\mu_i^x + g_t^{(0)y}\\
\mu^x_i
\end{array}\right)  + \mathbf{P}\left(\begin{array}{c}
y_{i,t-1} \\
x_{i,t-1}
\end{array}\right) + \left(\begin{array}{c}
u^y_{i,t} + \beta^{(0)}u^x_{i,t} \\
u^x_{i,t}
\end{array}\right).
\label{eq: VAR simul deduce}
\end{equation}
Therefore, the true impulse response function for Period $h\geq 1$ becomes the $(1,2)$th element of the matrix ${\bf P}^{h}$,
where \[{\bf P}=\left( \begin{array}{cc}
\tau + \beta^{(0)}\kappa &  \beta^{(0)}\rho \\
\kappa & \rho 
\end{array} \right) \]
is the coefficient matrix of the VAR model in (\ref{eq: VAR simul deduce}). 
Based on the insights from Section \ref{subsec: implicit bias}, this
panel VAR(1) model also incurs the Nickell bias issue, though the
closed-form bias formula for this setting is too complicated to be deduced
and coded. Therefore in our simulation we consider the FE estimator
and the SPJ estimator, but omit the DB estimator.

\par In our setting, $\mu_{i}^{(0)y}$, $\mu_{i}^{x}$, $u_{i,t}^{y}$,
$u_{i,t}^{x}$ and the sample sizes follow the simulation setting
of simple models (\ref{eq:DGP}) and (\ref{eq:AR1}). The time effect
is set as $g_{t}^{(0)}:=0.025t+0.001t^{2}$. We fix $\beta^{(0)}=-0.25$,
$\kappa=-0.5$, $\tau=0.5$ and vary $\rho\in\{0.2,0.4,0.5,0.6\}$;
to ensure that the VAR(1) system is stationary (net of the deterministic
trend) we cannot choose $\rho=0.8$ as in Section \ref{sec:Simulations}.
We keep the abbreviation ``FE'' to denote the (two-way)
individual-time fixed effect estimator. The SPJ estimator that allows
time effects is introduced in Appendix \ref{subsec: TE HJ}. In addition, the differenced GMM (DGMM) estimator by \citet{arellano1991some} has been widely used in dynamic panel models with the lagged dependent variable. Therefore, we also include DGMM in this simulation study for comparison. 
\par For $h=0$ we estimate the regression model (\ref{eq: VAR simul y}). For $h\geq 1$, following
the spirit of LP we estimate the model 
\[
y_{i,t+h}=\mu_{i}^{(h)y}+g_{t}^{(h)y}+\tau^{(h)}y_{i,t}+\beta^{(h)}x_{i,t}+e_{i,t+h}^{(h)}
\]
 where $\beta^{(h)}$ is the IRF of interest. For a generic panel variable $w_{i,t}$, let $\check w_{i,t} = w_{i,t} - N^{-1}\sum_{j=1}^N {w}_{j,t}$ denote the demeaned variable over the cross-sections to remove the time effect. DGMM estimates the following differenced model 
\[
\Delta \check y_{i,t+h}  = \tau^{(h)}\Delta \check y_{i,t}+\beta^{(h)}\Delta \check x_{i,t}+ \Delta \check e_{i,t+h}^{(h)}
\]
and uses the first-order lag $(\check y_{i,t-1},\check x_{i,t-1})$ as instrumental variables. 
\par Figures \ref{fig:lag_irf}-\ref{fig:lag_coverage} compare the results of FE, SPJ, and DGMM estimators. The key findings for FE and SPJ are similar to those in Section \ref{sec:Simulations}. Specifically, FE incurs an obvious attenuation bias and its substantial under-coverage
of the confidence interval. The SPJ produces robust estimation and inference as in Section \ref{sec:Simulations} even in this more complex setting with the lagged dependent variable and time fixed effects included in the regression. 
\par In contrast, the performance of DGMM is unstable, highly dependent on the essential parameters in the DGP. When $\rho$ is relatively small (0.2 or 0.4), DGMM produces the largest RMSE among all considered methods under a small $h$ with a coverage probability close to the nominal level, while its RMSE becomes the smallest when $h$ becomes large. When the regressor $x_{i,t}$ becomes more persistent with a larger $\rho$  (0.5 or 0.6), the RMSE of DGMM increases substantially and becomes the largest under all horizons $h$ due to the weak IV issue. In particular, when $(N,T) = (30,60)$ and $\rho = 0.6$, DGMM has extremely unstable performance as shown in Figure \ref{fig:lag_irf}. In this scenario, the RMSE of DGMM is always greater than 1 and out of range of the plot as shown in Figure \ref{fig:lag_rmse}.
The potential lack of efficiency undermines the robustness of local projection if we use DGMM for estimation and inference.

\subsection{Empirically-Driven Simulation Based on \citet{cerra2008growth}\label{subsec: empirical oriented}  }  
 
\par In the prototype model considered in Section \ref{sec:Simulations}, we consider $(N,T) \in \{(30,60),(30,120),(50,120)\}$ and $\rho\in\{0,0.2,0.5,0.8\}$. Though these settings can cover common features of panel data in empirical macroeconomics and finance, including the first three empirical examples in Section \ref{sec:Empirical-Application}, real-data applications may be involved with different settings. In our fourth real-data example (\citet{cerra2008growth}, CS08 hereafter) in Section \ref{subsec: CS2008}, the number of cross-sections $N$ is larger than the time span $T$. In this scenario, the Nickell bias of the FE estimator is more severe according to our Proposition \ref{thm:biasFEAR1}. SPJ and FE estimators showcase the most statistically significant discrepancy in CS08 among the four applications in our Section \ref{sec:Empirical-Application}. In addition, the dependent variable in CS08 is the differenced outcome $\Delta_h y_{i,t+h} = y_{i,t+h} - y_{i,t}$, which is not covered by the prototype model in Section \ref{sec:Simulations}. 
\par We therefore use the DGP (\ref{eq:DGP-FD}) to generate the outcome variable as in Section \ref{subsec:bias_specification_FD}, where the regressor $x_{i,t}$ still follows the AR(1) process (\ref{eq:AR1}) with individual fixed effects. We calibrate the essential parameters for the simulation based on CS08. We estimate the AR(1) coefficient of the essential regressor of interest in CS08 by SPJ in the presence of individual fixed effects, and the result is 0.0953. In addition, in the data of CS08, totally $N=82$ countries have a full length of time series data with $T = 36$. We therefore set $\rho = 0.0953$ and $(N,T) = (82,36)$ for this simulation. Other settings of the coefficients, the fixed effects, and the error terms follow those in Section \ref{sec:Simulations}. We use FE and SPJ to estimate the panel LP model (\ref{eq:DGP-hD}) over the replications. 
\par Figure \ref{fig:dif} exhibits the simulation results. As usual, the FE estimator suffers from the Nickell bias that causes large estimation errors and severely low coverage probabilities of the confidence intervals. In contrast, the SPJ still shows accurate estimation of IRFs and valid confidence intervals with coverage probabilities close to the nominal level, even in this simulation setting calibrated based on our representative real-data example. These results again demonstrate the robustness of SPJ.

\subsection{SPJ for Panel LP with Highly Persistent Regressor\label{subsec: high rho}}

Up to now, we have been maintaining the stationarity of the VAR model throughout the theoretical discussions. To examine the robustness of SPJ with highly persistent regressors, we revisit the prototype model in Section \ref{sec:Simulations} with $\rho$ as large as 0.9 and 0.95, which are close to 1. The econometric literature models the AR(1) coefficient as a sequence $\rho = \rho_T \to 1$ as $T\to\infty$, and names such AR(1) models the \textit{mildly integrated} and \textit{local-to-unity} processes \citep{phillips1987towards}. 
\par Figures \ref{fig:highrho_irf}-\ref{fig:highrho_coverage} display the simulation results. As expected, the FE estimator produces unsatisfactory performance with large estimation errors and extremely low coverage probabilities of confidence intervals. SPJ still substantially mitigates the Nickell bias of FE, with much smaller RMSEs than FE. When $T=120$, the coverage probabilities of the confidence intervals by SPJ do not substantially deviate from the nominal level even when $h = 10$. These results by SPJ improve those by FE with coverage probabilities below 0.1.  
\par It is not surprising to observe some deviations of the SPJ confidence intervals' coverage probabilities from the nominal level, since the theoretical foundations of SPJ do not cover highly persistent regressors. Even in the context of time series, under nonstationarity the ordinary least squares (OLS) suffers from the \textit{Stambaugh} bias \citep{stambaugh1999predictive}, which causes size distortions in the $t$-test. \citet{montiel2021local} develops an inferential procedure for time series LP robust to nonstationary time series with a lag augmentation. The asymptotic normal distribution of this lag-augmented estimator is achieved with efficiency loss for persistent regressors. Specifically, for unit roots this lag-augmented estimator has $\sqrt{T}$-consistency in the sense that $\hat\beta-\beta=O_p(1/\sqrt{T})$, in contrast to the $T$-consistency of OLS with Stambaugh bias. In time series predictive regressions, another estimator called IVX proposed by \citet{phillips2009econometric} enjoys asymptotic normality and maintains high efficiency for persistent regressors. This IVX estimator applies to LP, as LP is a sequence of predictive regressions. 
\par In panel LP with a large $N$, the time series Stambaugh bias will be carried over and fused with the Nickell bias, leading to a composite and substantially enlarged \textit{Nickell-Stambaugh} bias in the $t$-statistic \citep*{liao2024nickell}. The construction of an estimator free from the Nickell-Stambaugh bias for nonstationary panel LP is thus more challenging, and its theoretical foundation requires complex technical proofs. Therefore, nonstationary panel LP is beyond the reach of the current paper and left for a standalone research topic. \citet*{liao2024nickell} establishes a bias-free estimator for panel predictive regression based on a panel version of IVX (called \textit{IVX-X-jackknife}, IVXJ), which applies to panel LP. IVXJ asymptotically removes the Nickell-Stambaugh bias in panel LP with an agnostic attitude toward the persistence of the regressors. Interested readers can find more details in \citet*{liao2024nickell}. 

\subsection{Summary\label{subsec: summary simul}}
\par We summarize the comparative studies of all the methods considered in Section \ref{sec:Simulation-app}. As usual, FE incurs the Nickell bias and causes severe size distortion in inference. The DB and LP considered in Section \ref{subsec:analytical} rely on analytical formulas that vary with model specifications. The derivations of these formulas are inconvenient with further lags, additional control variables, and the change of outcome as the dependent variable. The popular DGMM method considered in Section \ref{subsec:lag} suffers from the weak IV issue when the persistence of $x_{i,t}$ measured by $\rho$ becomes moderate, which causes its unsatisfactory performance. 

SPJ always follows the formula (\ref{eq:spj}) and produces small estimation errors and accurate coverage probabilities of confidence intervals. Among all methods for comparison, SPJ remains robust under a variety of simulation settings, including the prototype model in Section \ref{subsec:analytical}, panel LP with cross-sectional correlations as in Section \ref{subsec:corr}, penal LP with lagged dependent variable as in Section \ref{subsec:lag}, and some empirically oriented scenarios as in Section \ref{subsec: empirical oriented}. Even in the challenging setup with $\rho$ close to 1 as in Section \ref{subsec: high rho}, SPJ still substantially mitigates the Nickell bias in FE. The robustness of SPJ maintains the advantages of local projection when it is extended from time series to panel data.

\begin{figure}
\begin{centering}
\includegraphics[width=1\textwidth]{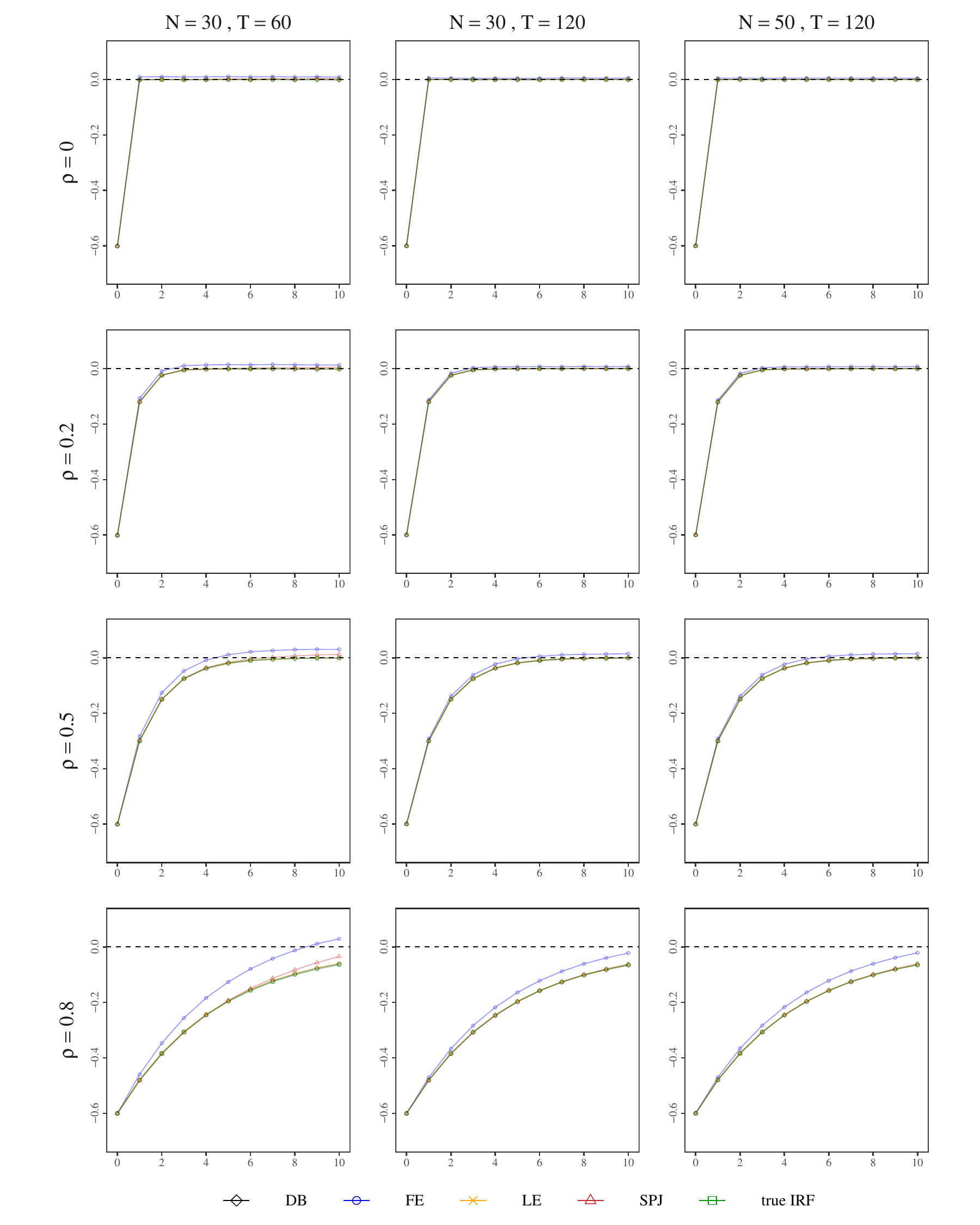}
\end{centering}
\caption{\label{fig:simple_irf} Estimated IRFs Averaged Over Replications with Comparison to DB and LE}
\end{figure}

\begin{figure}
\begin{centering}
\includegraphics[width=1\textwidth]{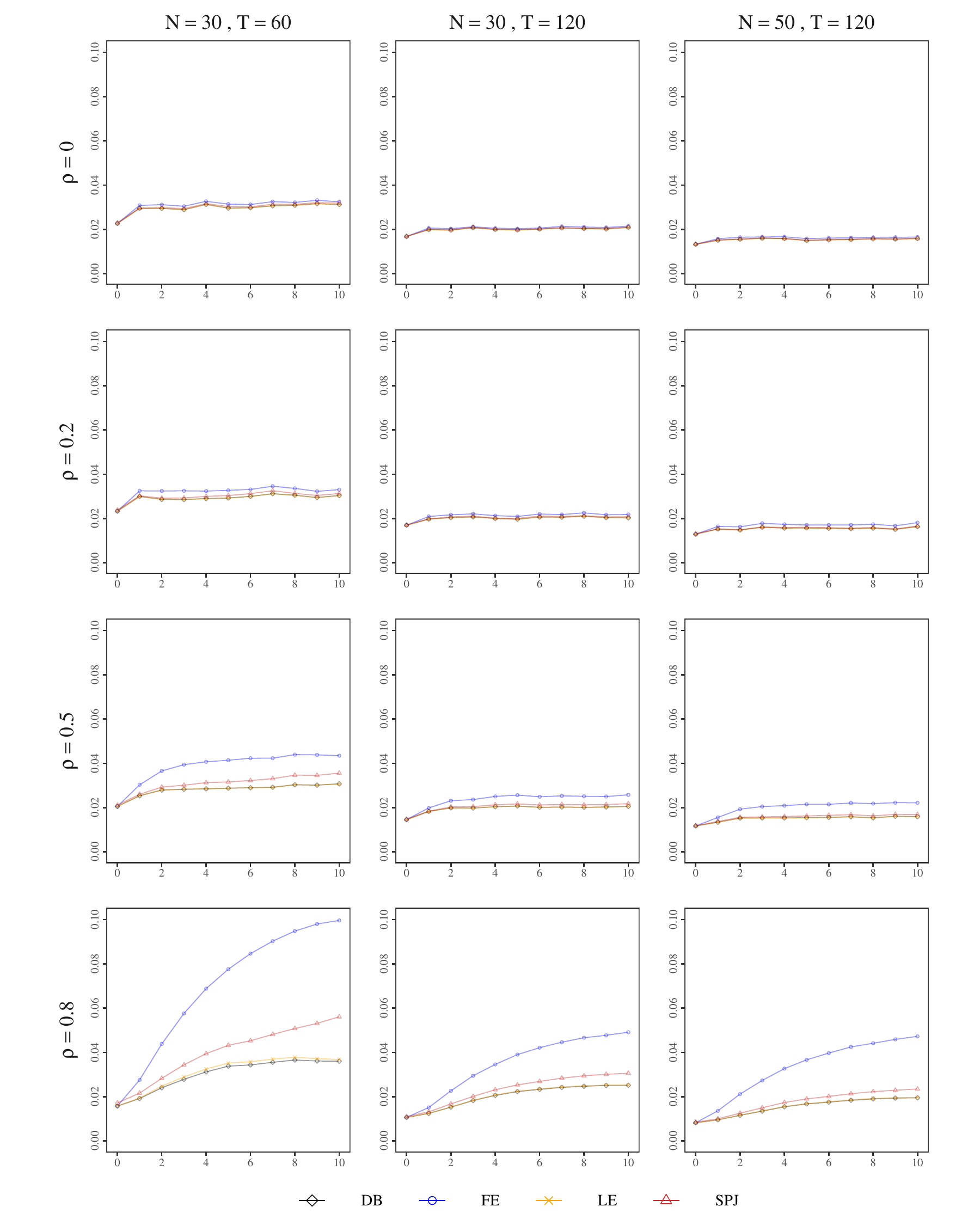}
\end{centering}
\caption{\label{fig:simple_rmse} RMSEs of IRFs with Comparison to DB and LE}
\end{figure}

\begin{figure}
\begin{centering}
\includegraphics[width=1\textwidth]{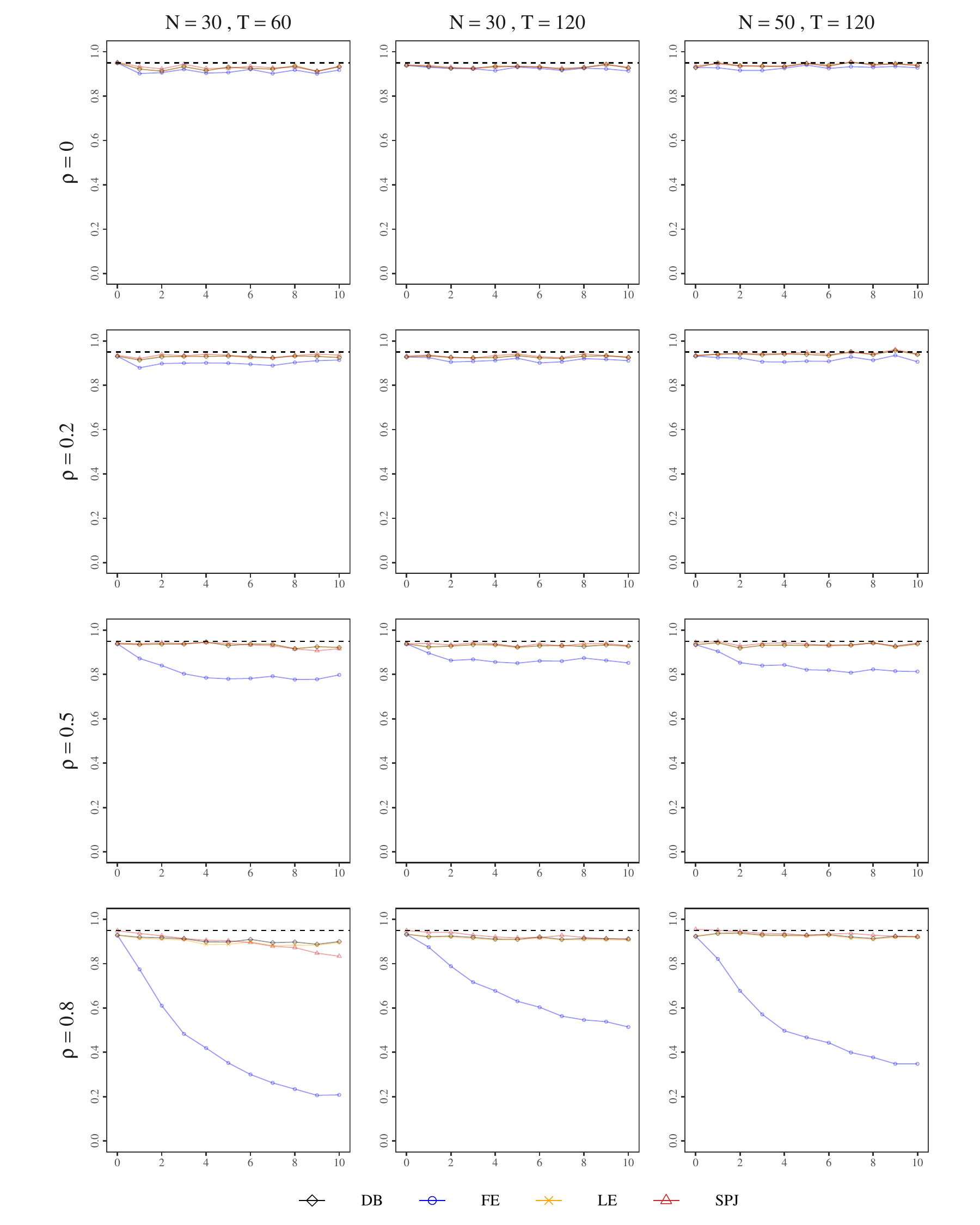}
\end{centering}
\caption{\label{fig:simple_coverage} Coverage Probability of Confidence Interval Based on $t$-Statistic with Comparison to DB and LE}
\end{figure}

\begin{figure}
\begin{centering}
\includegraphics[width=1\textwidth]{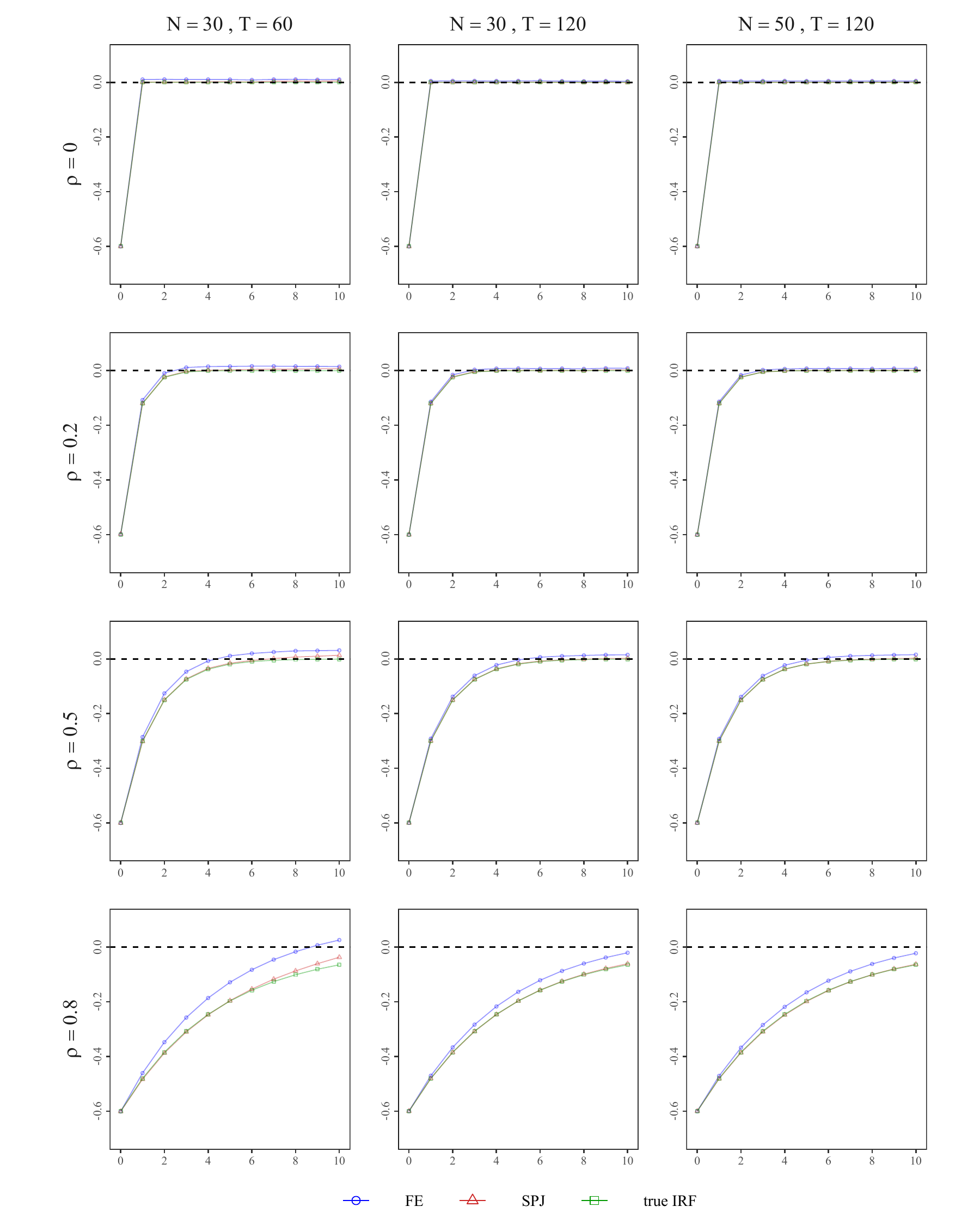}
\end{centering}
\caption{\label{fig:correlation_irf} Estimated IRFs Averaged Over Replications with Cross-Sectional Correlation}
\end{figure}

\begin{figure}
\begin{centering}
\includegraphics[width=1\textwidth]{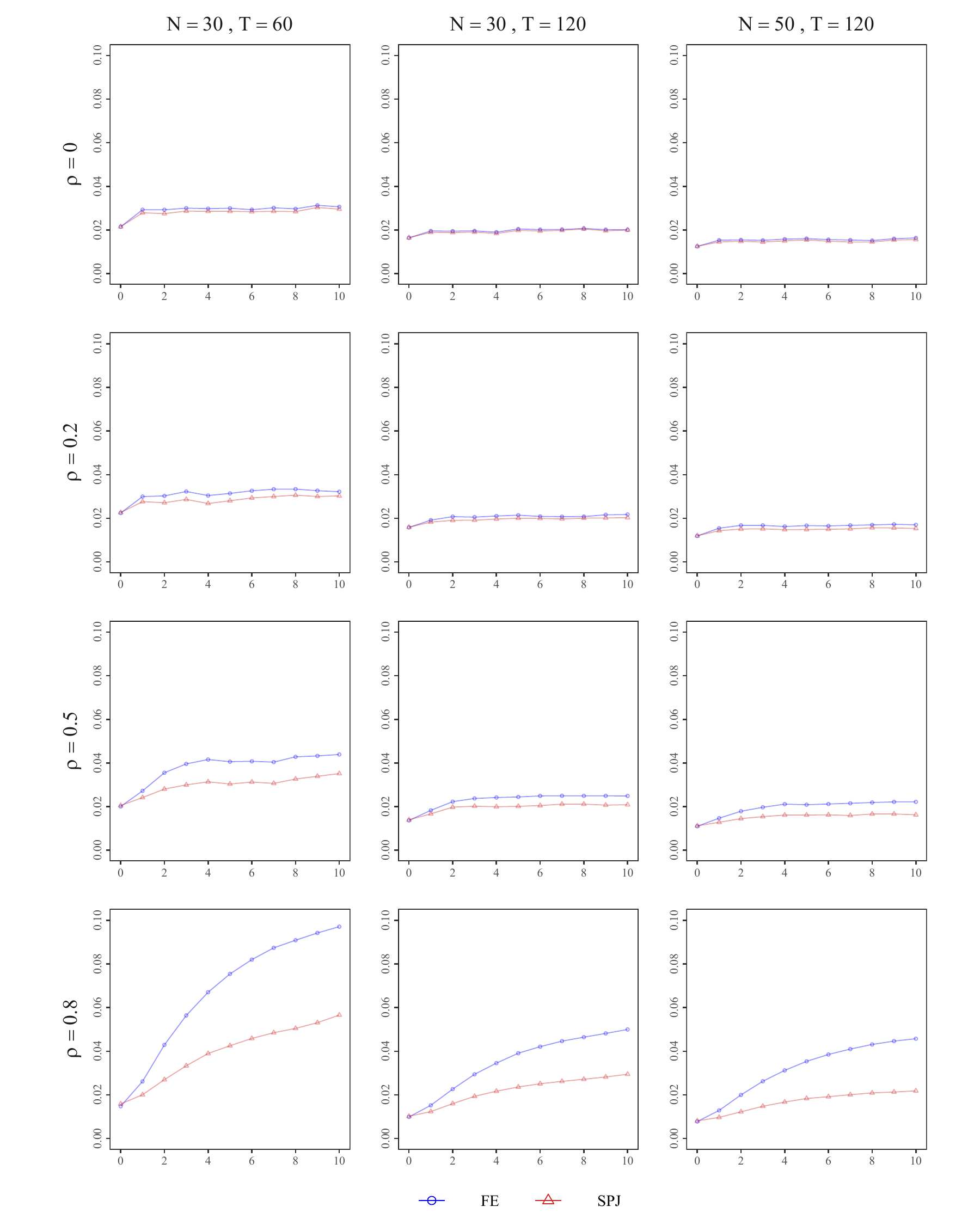}
\end{centering}
\caption{\label{fig:correlation_rmse} RMSEs of IRFs with Cross-Sectional Correlation}
\end{figure}

\begin{figure}
\begin{centering}
\includegraphics[width=1\textwidth]{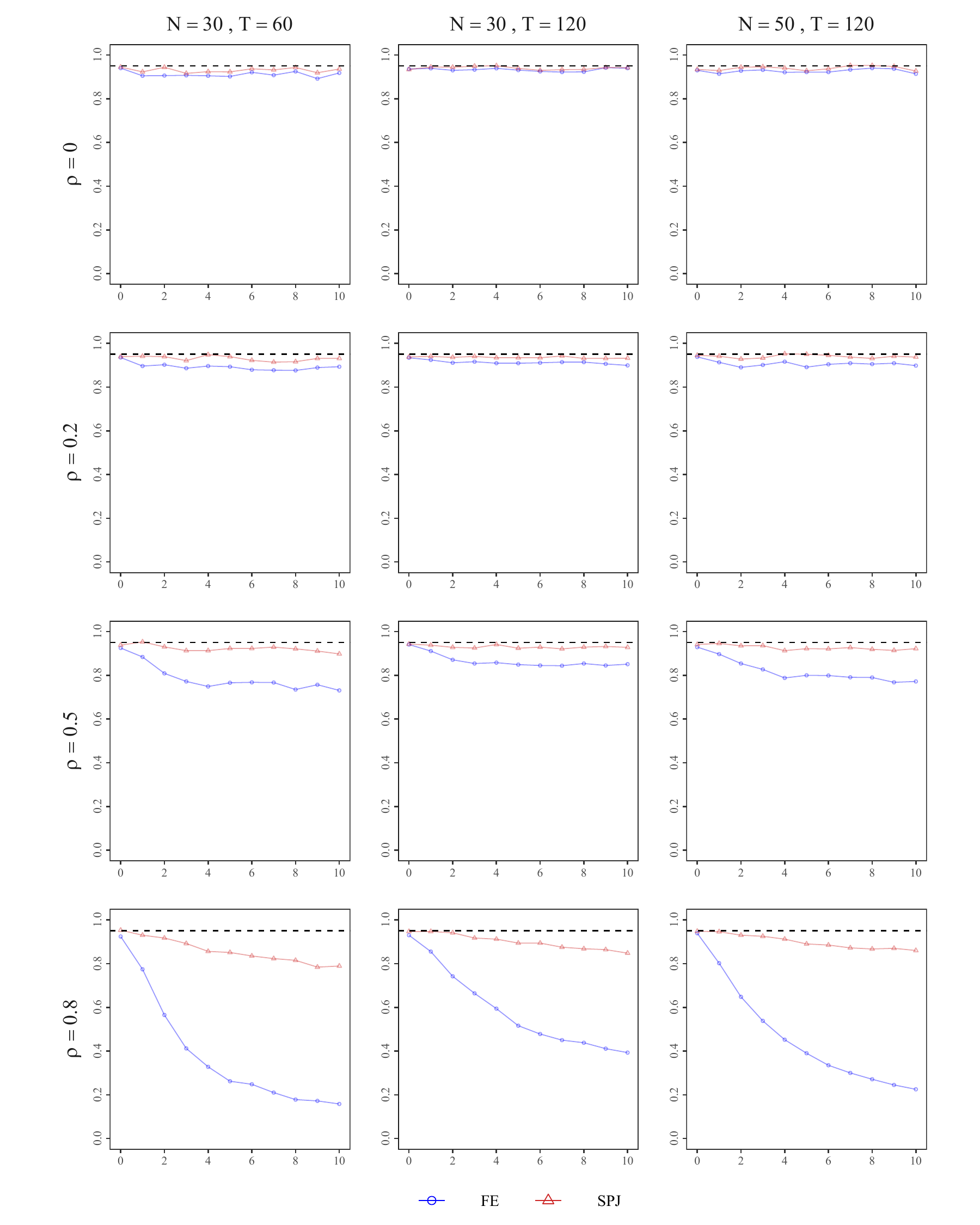}
\end{centering}
\caption{\label{fig:correlation_coverage} Coverage Probability of Confidence Interval Based on $t$-Statistic with Cross-Sectional Correlation}
\end{figure}

\begin{figure}
\begin{centering}
\includegraphics[width=1\textwidth]{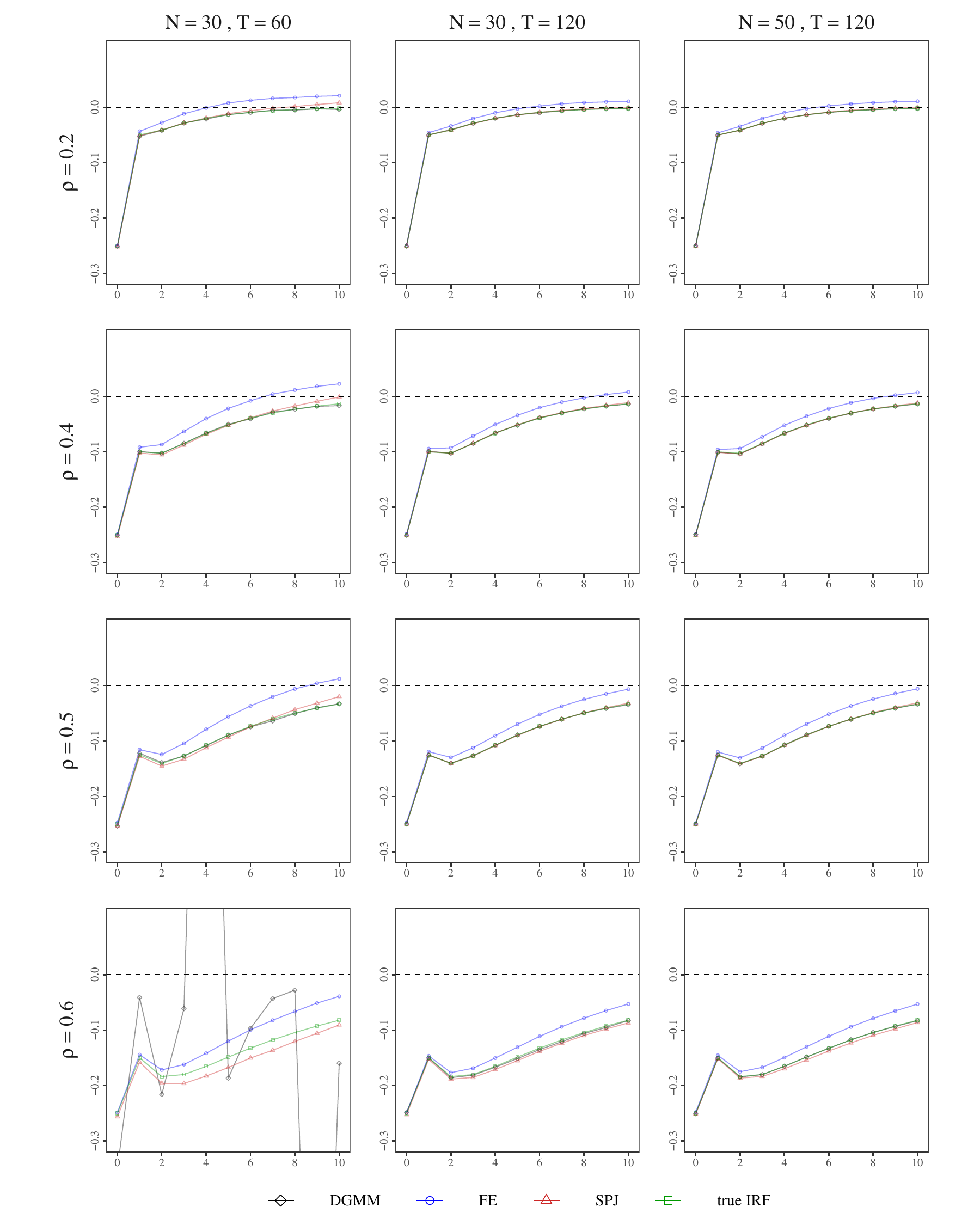}
\end{centering}
\caption{\label{fig:lag_irf} Estimated IRFs Averaged Over Replications with Lagged Dependent Variable and Time Effect}
\end{figure}

\begin{figure}
\begin{centering}
\includegraphics[width=1\textwidth]{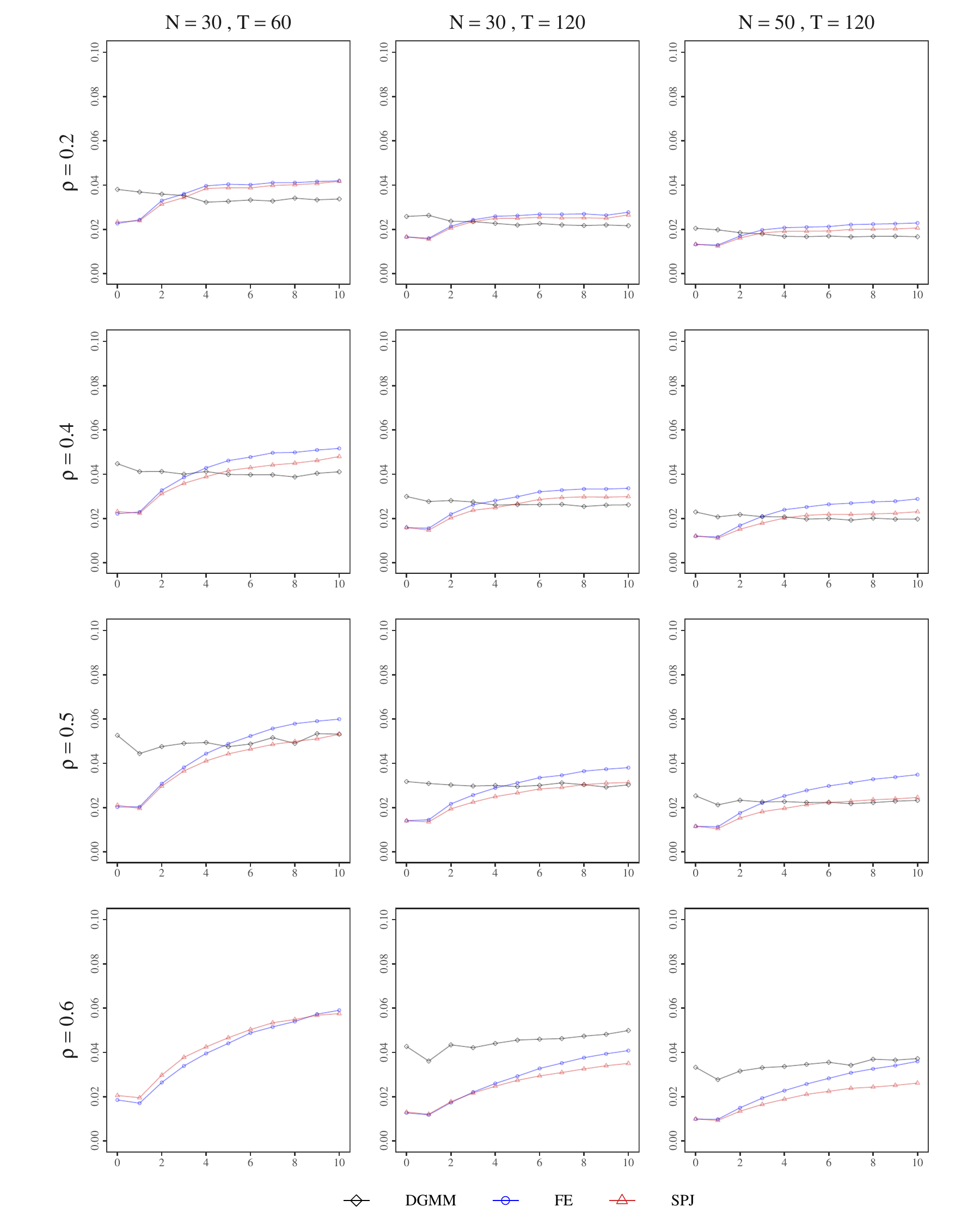}
\end{centering}
\caption{\label{fig:lag_rmse} RMSEs of IRFs   with Lagged Dependent Variable and Time Effect}
\end{figure}

\begin{figure}
\begin{centering}
\includegraphics[width=1\textwidth]{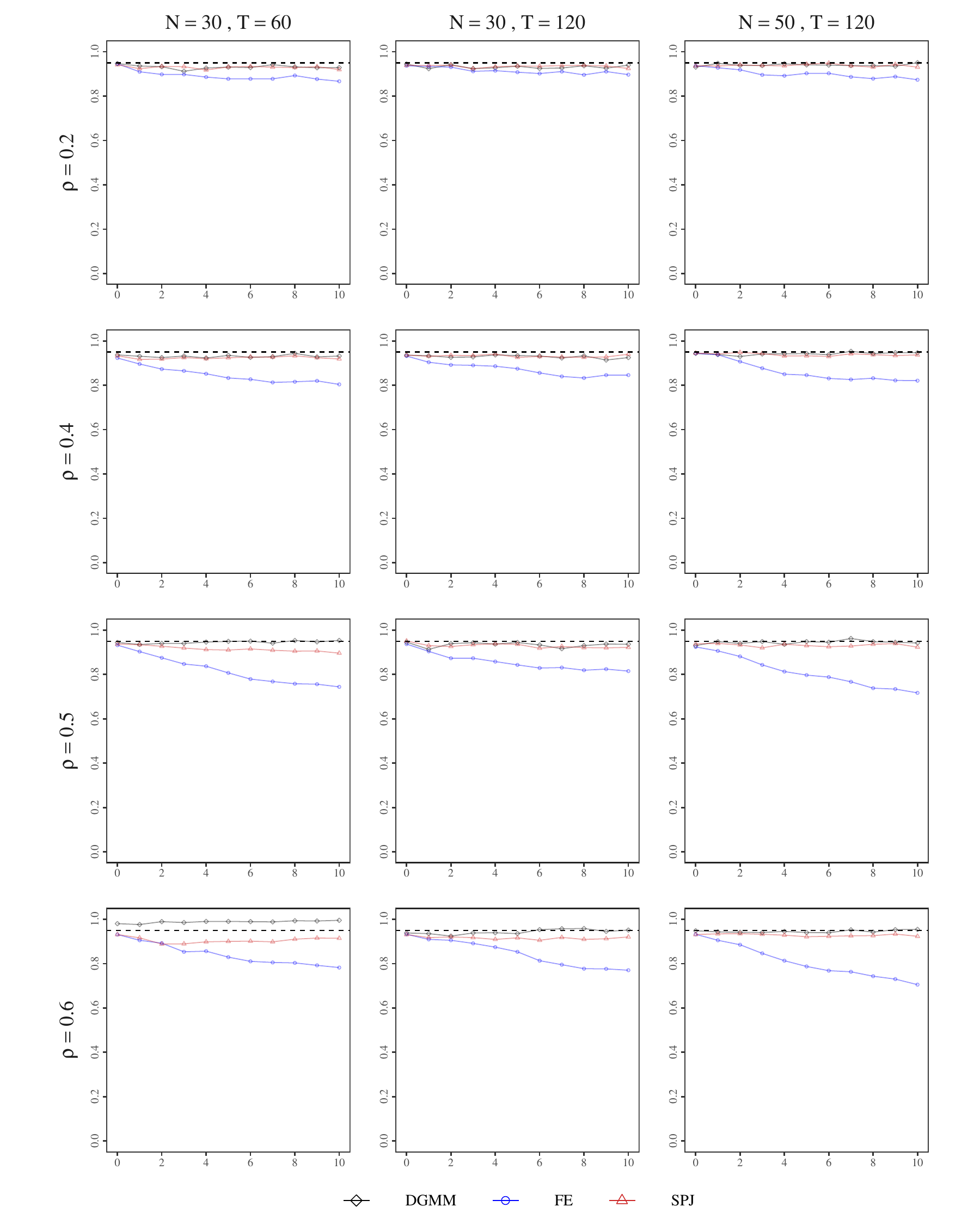}
\end{centering}
\caption{\label{fig:lag_coverage} Coverage Probability of Confidence Interval Based on $t$-Statistic with Lagged Dependent Variable and Time Effect}
\end{figure}

\begin{figure}
\begin{centering}
\includegraphics[width=1\textwidth]{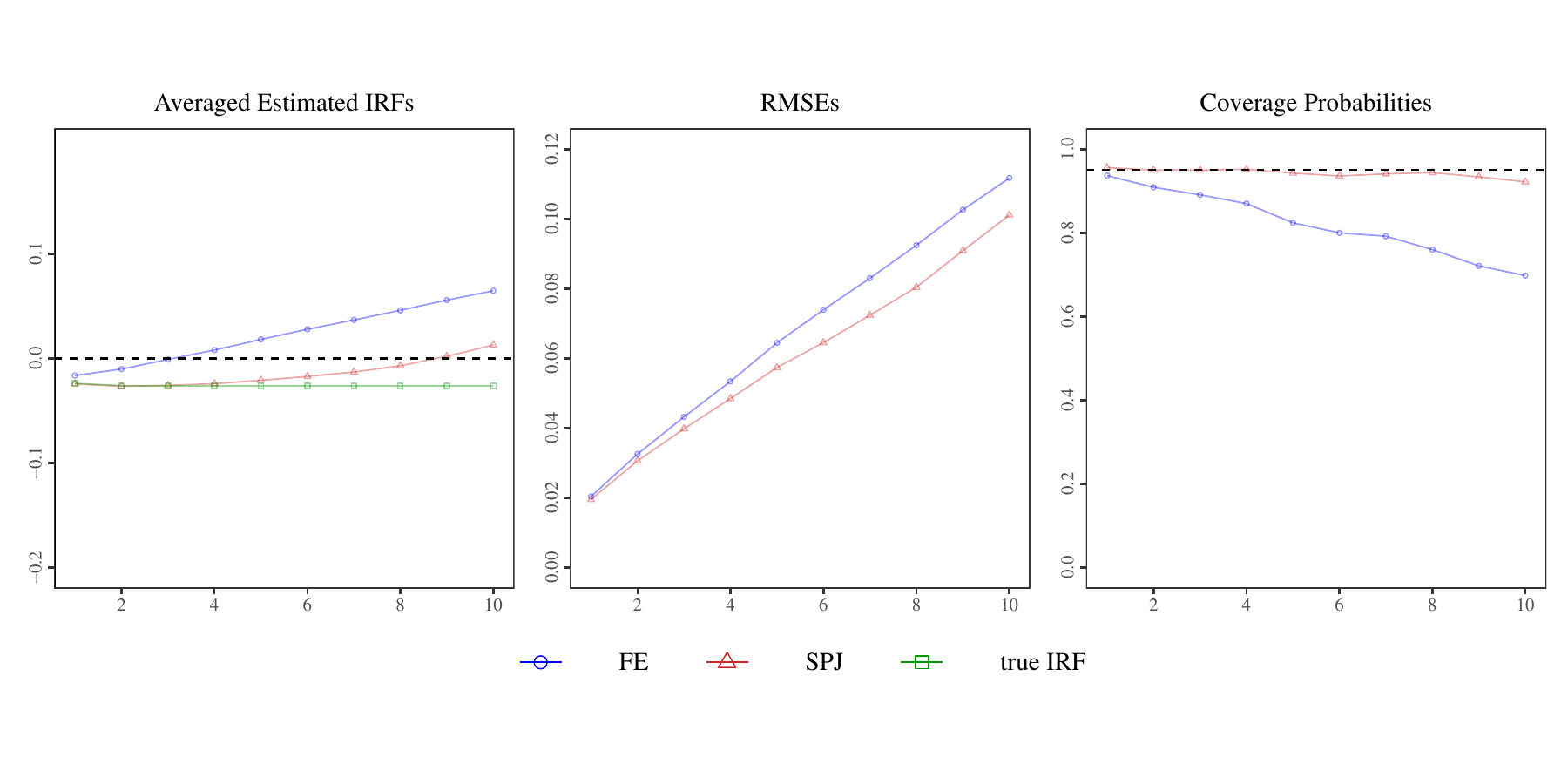}
\end{centering}
\caption{\label{fig:dif} Simulation Results Based on CS08}
\end{figure}


\begin{figure}
\begin{centering}
\adjincludegraphics[width = 1\textwidth,trim={0 {0.2\height} 0 {0.2\height}},clip]{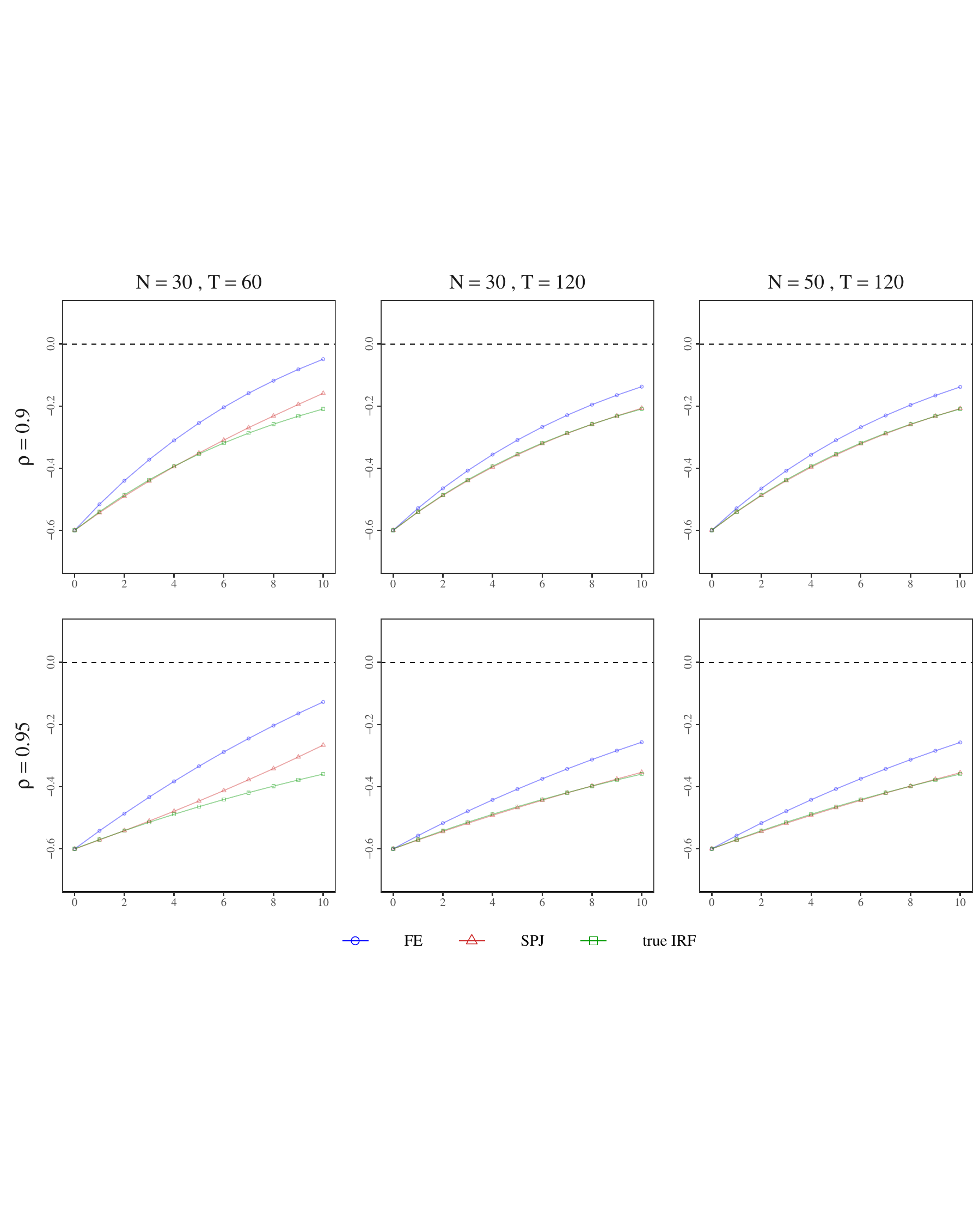}
\end{centering}
\caption{\label{fig:highrho_irf} Estimated IRFs Averaged Over Replications with Large $\rho$}
\end{figure}

\begin{figure} 
\centering 
\adjincludegraphics[width = 1\textwidth,trim={0 {0.2\height} 0 {0.2\height}},clip]{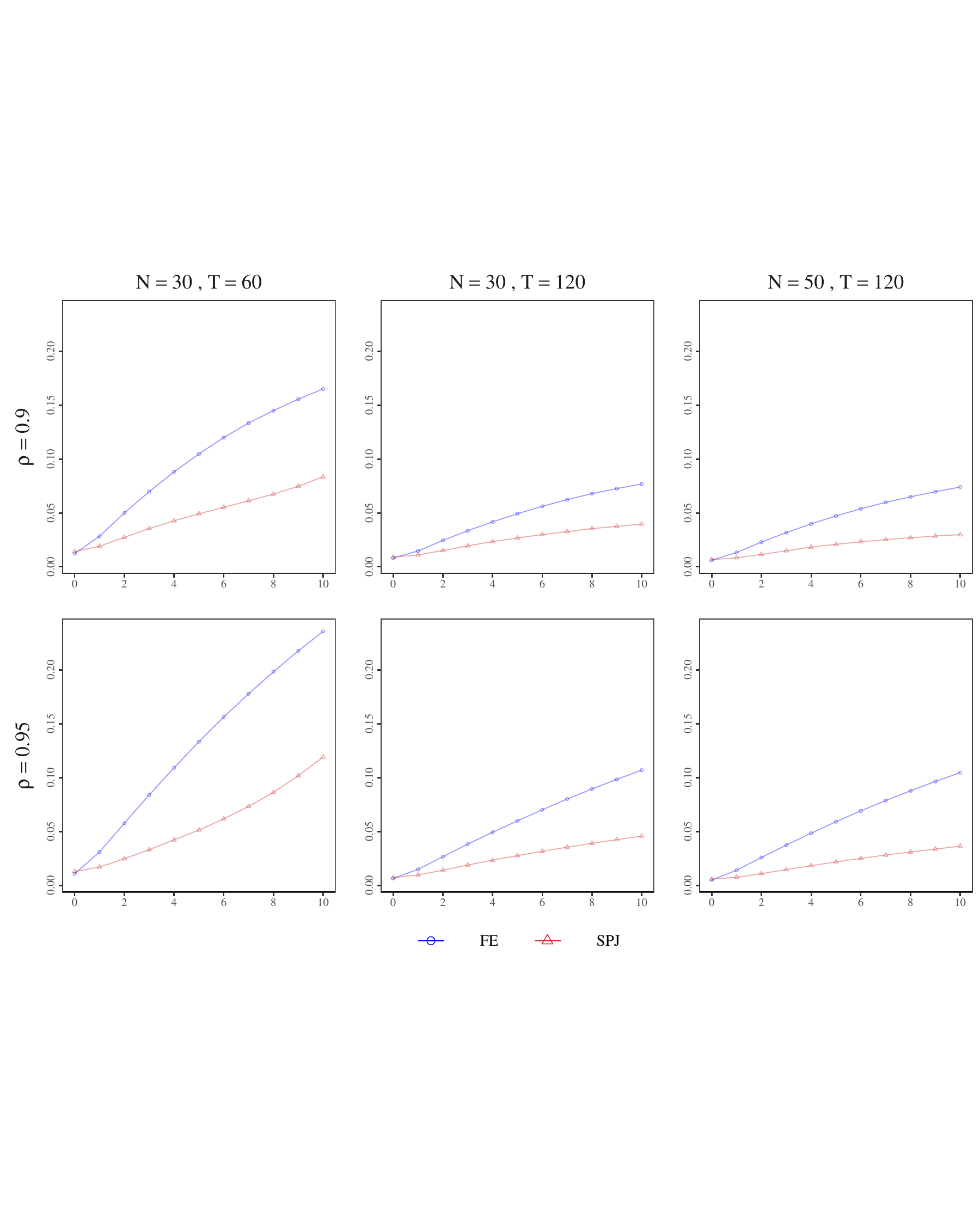} 
\caption{\label{fig:highrho_rmse} RMSEs of IRFs with Large $\rho$}
\end{figure}

\begin{figure}
\centering 
\adjincludegraphics[width = 1\textwidth,trim={0 {0.2\height} 0 {0.2\height}},clip]{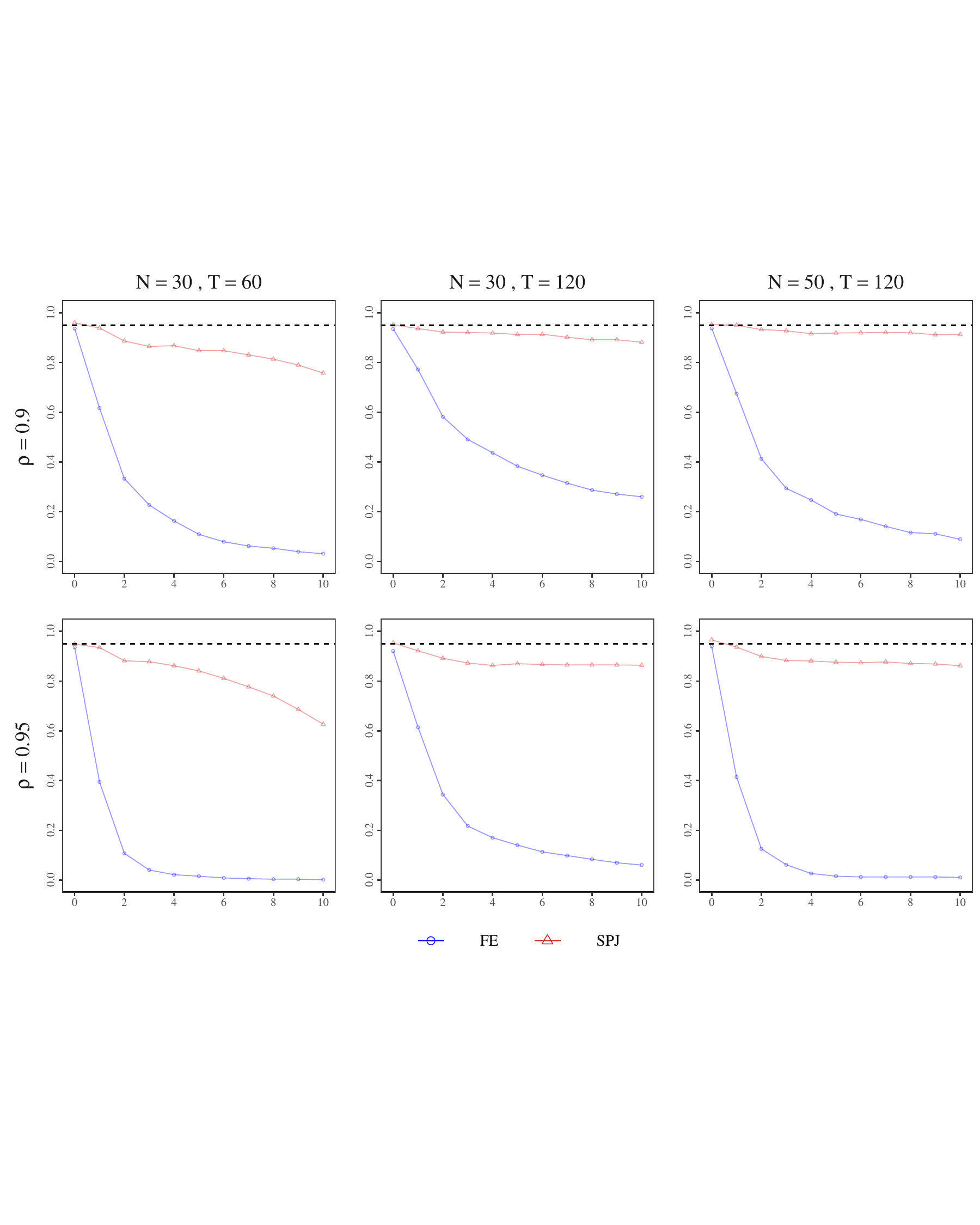}
\caption{\label{fig:highrho_coverage} Coverage Probability of Confidence Interval Based on $t$-Statistic with Large $\rho$}
\end{figure}

\end{document}